\documentclass[11pt]{article}
\usepackage[utf8]{inputenc}
\usepackage[english]{babel}
\usepackage{hyphenat}
\usepackage{mathtools}
    \numberwithin{equation}{section}
\usepackage{amsfonts}
\usepackage{amssymb}
\usepackage{amsthm}
\usepackage{stmaryrd}
\usepackage{hyperref}
\usepackage{tikz-cd}
\usepackage{tikz}
    \usetikzlibrary{arrows.meta,positioning,calc}
\usepackage{fullpage}
\usepackage{fancyhdr}

\allowdisplaybreaks

\newtheorem{theorem}{Theorem}[section]
\newtheorem{lemma}[theorem]{Lemma}
\newtheorem{prop}[theorem]{Proposition}
\newtheorem{corollary}[theorem]{Corollary}

\theoremstyle{definition}
\newtheorem{definition}[theorem]{Definition}

\theoremstyle{remark}
\newtheorem*{remark}{Remark}

\theoremstyle{remark}

\newcommand{\Z}{\mathbb{Z}}

\newcommand{\C}{\mathbb{C}}

\newcommand{\I}{\mathrm{i}}

\title{Spin Ruijsenaars--Schneider models are Coulomb branches}
\author{Gleb Arutyunov\footnote{\href{mailto://gleb.arutyunov@desy.de}{\texttt{gleb.arutyunov@desy.de}} (\href{https://orcid.org/0009-0009-8862-6959}{orcid.org/0009-0009-8862-6959}), \href{mailto://lukas.hardi@desy.de}{\texttt{lukas.hardi@desy.de}} (\href{https://orcid.org/0009-0005-0592-8486}{orcid.org/0009-0005-0592-8486}), II. Institut für Theoretische Physik, Universität Hamburg, Luruper Chaussee 149, 22761 Hamburg, Germany} \and Lukas Hardi${}^*$}
\date{}

\begin{document}
\maketitle

\thispagestyle{fancy}
\rhead{hep-th/2603.03048, ZMP-HH/26-3}

\begin{abstract}
\noindent In this paper, we show that the Poisson algebras of homological and $K$-theoretic Coulomb branches of 3d $\mathcal{N}=4$ and 4d $\mathcal{N}=2$ necklace quiver gauge theories provide Poisson structures and Hamiltonians that reproduce the equations of motion of the rational and hyperbolic spin Ruijsenaars--Schneider models, respectively. The construction is carried out in terms of abelianized monopole operators in the GKLO representation, also making the affine Yangian (and, in $K$-theory, quantum toroidal) superintegrability structure manifest. We conjecture that the Poisson algebras of elliptic Coulomb branches similarly reproduce the elliptic spin Ruijsenaars--Schneider model.
\end{abstract}

\newpage

\tableofcontents

\section{Introduction}

Spin Ruijsenaars--Schneider (RS) models are superintegrable models\footnote{Superintegrable models are also known as degenerately integrable models \cite{reshetikhin:2016}.} of $N$ relativistic particles, each carrying $\ell$ magnetically interacting spin degrees of freedom. Their equations of motion were found by Krichever and Zabrodin in \cite{krichever:1995}. Denoting particle positions by $x_i$ and spin degrees of freedom by $a_i^\alpha,c_i^\alpha$, where $i=1,\dots,N$ and $\alpha=1,\dots,\ell$, the equations of motion are given by
\begin{equation}\label{eq:eom}
    \begin{aligned}
        \ddot x_i
        ={}& \sum_{j \neq i} f_{ij} f_{ji} (V(x_i-x_j)-V(x_j-x_i)), \\
        \dot a_i^\alpha
        ={}& -\sum_{j \neq i} (a_i^\alpha-a_j^\alpha) f_{ij} V(x_i-x_j), \\
        \dot c_i^\alpha
        ={}& \sum_{j \neq i} (c_i^\alpha f_{ij} V(x_i-x_j) - c_j^\alpha f_{ji} V(x_j-x_i)),
    \end{aligned}
\end{equation}
where $f_{ij} \coloneq \sum_{\rho=1}^\ell a_i^\rho c_j^\rho$ and $V(z) = \zeta(z)-\zeta(z+\gamma)$ with coupling constant $\gamma$ is the elliptic potential by which particles interact constructed from the Weierstrass zeta function $\zeta(z)$. Rational and hyperbolic degeneration of the potential yields the equations of motion of the rational and hyperbolic spin RS models.

While the equations of motion of spin RS models were found in \cite{krichever:1995}, the underlying Poisson algebra and the Hamiltonian generating the equations of motion where not given. Subsequently, a Poisson algebra reproducing the equations of motion of the rational spin RS model was found in \cite{arutyunov:1998} as the Hamiltonian reduction
\begin{equation}
    (T^* \mathrm{GL}_N \times T^* \C^{N \times \ell}) \sslash_\gamma \mathrm{GL}_N,
\end{equation}
where the coupling constant $\gamma$ is the value of the moment map. The Hamiltonian generating the equations of motion above is simply given by $\operatorname{Tr} g$, where $g$ parametrizes the group element in the base of the cotangent bundle $T^* \mathrm{GL}_N$.

The paper \cite{arutyunov:1998} further exhibited elements $T^{\alpha\beta}(z)$ and $J^{\alpha\beta}[n]$ inside the Poisson algebra that satisfy the Poisson brackets of the classical limit of the Yangian and loop algebra of $\mathfrak{gl}_\ell$. Although the cross relations between $T^{\alpha\beta}(z)$ and $J^{\alpha\beta}[n]$ were not explicitly determined in \cite{arutyunov:1998}, the existence of Yangian and loop algebra generators made it reasonable to conjecture \cite{arutyunov:2025} that the Poisson algebra can be identified with the $N$-truncated affine Yangian of $\mathfrak{gl}_\ell$, which by the work of Braverman--Finkelberg--Nakajima \cite{braverman:2018,nakajima:2017} is identified with the 3d $\mathcal{N}=4$ Coulomb branch of the type $A_{\ell-1}^{(1)}$ quiver where each gauge node has rank $N$ and no flavor nodes are present:
\begin{center}
    \def\l{6}
    \def\radius{1.5cm}
    \begin{tikzpicture}[>=Stealth, every node/.style={font=\small}]
        \foreach \i in {1,...,\l} {
            \pgfmathsetmacro{\angle}{360*(\i-1)/\l}
            \node[circle,draw,inner sep=1.5pt] (v\i) at (\angle:\radius) {$N$};
        }
        \node[right=2pt of v1] (d1) {\small $1$};
        \node[above=2pt of v2] (d2) {\small $2$};
        \node[above=2pt of v3] (d3) {\small $3$};
        \node[left=2pt of v4] (d4) {\small $\vdots$};
        \node[below=2pt of v5] (d5) {\small $\ell-1$};
        \node[below=2pt of v6] (d6) {\small $0$};
        \foreach \i in {1,...,\l} {
            \pgfmathtruncatemacro{\next}{int(mod(\i,\l)+1)}
            \draw[->, shorten >=2pt, shorten <=2pt] (v\i) to[bend right=14] (v\next);
        }
        \node at (0,0) [font=\small,fill=white,inner sep=2pt] {$A_{\ell-1}^{(1)}$};
    \end{tikzpicture}
\end{center}
We will henceforth refer to this quiver as the \emph{necklace quiver} for short.

The purpose of this paper was to use the framework of \cite{bullimore:2015} to show that the equations of motion of the rational spin RS model can be reproduced from a $\gamma$-deformed GKLO representation of the (homological) Coulomb branch of the necklace quiver above. This deformation corresponds to a non-zero mass $\gamma$ of the bifundamental hypermultiplet between the node $\ell-1$ and the node $0$. As a byproduct, we exhibit a series of $L$-operators $L^{\alpha\pm}$ with $\alpha \in \Z/\ell\Z$ living on the Coulomb branch and satisfying the Poisson bracket
\begin{equation}
    \{ L_1^{\alpha\pm}, L_2^{\beta\pm} \} = \pm(\delta^{\alpha\beta} r^\alpha L_1^{\alpha\pm} L_2^{\beta\pm} - \delta^{\alpha\beta} L_1^{\alpha\pm} L_2^{\beta\pm} \underline r^{\alpha+1} + \delta^{\alpha+1,\beta} L_1^{\alpha\pm} \bar r_{21}^\beta L_2^{\beta\pm} - \delta^{\alpha,\beta+1} L_2^{\beta\pm} \bar r^\alpha L_1^{\alpha\pm}),
\end{equation}
where $r^\alpha,\bar r^\alpha,$ and $\underline r^\alpha$ are the matrices found in \cite{arutyunov:1996}. A hierarchy of Poisson-commuting Hamiltonians is then supplied by the traces of powers of the total $L$-operator $L = L^{0+} \cdots L^{\ell-1,+}$. The $L$-operators further assemble into spin variables $a_i^\alpha$ and $c_i^\alpha$ that satisfy the constraint $a_i^1 = 1$, the equations of motion \eqref{eq:eom}, and the Poisson brackets
\begin{align}
        \{ x_i, a_j^\alpha \} ={}& 0, \qquad \{ x_i,c_j^\alpha \} = \delta_{ij} c_j^\alpha, \\[8pt]
        \{ a_i^\alpha,a_j^\beta \} ={}& \frac{1-\delta_{ij}}{x_i-x_j} (a_j^\alpha-a_i^\alpha) (a_i^\beta-a_j^\beta),
        \\[4pt]
        \{ a_i^\alpha,c_j^\beta \} ={}& \frac{1-\delta_{ij}}{x_i-x_j} (a_j^\alpha-a_i^\alpha) c_j^\beta - \delta^{1\beta} L_{ij} a_i^\alpha + \delta^{\alpha\beta} L_{ij},
        \\[4pt]
        \{ c_i^\alpha,c_j^\beta \} ={}& \frac{1-\delta_{ij}}{x_i-x_j} (c_i^\alpha c_j^\beta + c_j^\alpha c_i^\beta) - \delta^{1\alpha} L_{ji} c_j^\beta + \delta^{1\beta} L_{ij} c_i^\alpha.
    \end{align}
with $L_{ij} = -\sum_{\rho=1}^\ell a_i^\rho c_j^\rho/(x_i-x_j+\gamma)$. One can check that the Jacobi identity for these brackets is satisfied, provided one takes the constraint $a_i^1 = 1$ into account. Futhermore, the Poisson brackets of the rescaled spins $\hat a_i^\alpha \coloneq a_i^\alpha/\sum_{\rho=1}^\ell a_i^\rho, \hat c_i^\alpha \coloneq c_i^\alpha \sum_{\rho=1}^\ell a_i^\rho$ coincide with the Poisson brackets from \cite{arutyunov:1998}.

To illustrate the power of the Coulomb branch approach to spin RS models, we further generalize our result for the homological Coulomb branch to the $K$-theoretic Coulomb branch of the same quiver. We again exhibit an $L$-operator algebra of the same form, except that $r^\alpha$, $\bar r^\alpha$, and $\underline r^\alpha$ are now the matrices from \cite{arutyunov:2019b}. The resulting spin variables $a_i^\alpha$ and $c_i^\alpha$ again satisfy the constraint $a_i^1 = 1$, the equations of motion \eqref{eq:eom}, and the Poisson brackets
\begin{align}
    \{ x_i, a_j^\alpha \} ={}& 0, \qquad \{ x_i, c_j^\alpha \} = \delta_{ij} c_j^\alpha, \\[4pt]
    \{ a_i^\alpha,a_j^\beta \} ={}& (1-\delta_{ij}) \bigg( \frac{a_i^\alpha (a_j^\beta-a_i^\beta)}{e^{x_i-x_j}-1} - \frac{a_j^\alpha (a_j^\beta-a_i^\beta)}{1-e^{x_j-x_i}} \bigg) - \delta^{\alpha < \beta} a_j^\alpha a_i^\beta \\
    &+ \tfrac{1}{2} (2\delta^{1=\alpha<\beta} + \delta^{\alpha>\beta>1} + \delta^{1 < \alpha < \beta}) a_i^\alpha a_j^\beta, \nonumber
    \\[4pt]
    \{ a_i^\alpha,c_j^\beta \} ={}& (1-\delta_{ij}) \frac{(a_j^\alpha-a_i^\alpha) c_j^\beta}{1-e^{x_j-x_i}} - \delta^{\alpha\beta} \sum_{\rho=1}^{\alpha-1} a_i^\rho c_j^\rho + \delta^{1\beta} \tilde L_{ij} a_i^\alpha - \delta^{\alpha\beta} \tilde L_{ij} \\
    &+ \tfrac{1}{2}(1 - \delta^{1=\alpha < \beta} + \delta^{\alpha > \beta=1}-\delta^{\alpha\beta}) a_i^\alpha c_j^\beta, \nonumber
    \\[4pt]
    \{ c_i^\alpha,c_j^\beta \} ={}& (1-\delta_{ij}) \bigg( \frac{c_j^\alpha c_i^\beta}{e^{x_i-x_j}-1} + \frac{c_i^\alpha c_j^\beta}{1-e^{x_j-x_i}} \bigg) + \delta^{\alpha < \beta} c_j^\alpha c_i^\beta - \delta^{1\beta} c_i^\alpha \tilde L_{ij} + \delta^{1\alpha} c_j^\beta \tilde L_{ji} \\
    &- \tfrac{1}{2} (2 \delta^{\alpha > \beta=1} + \delta^{\alpha > \beta > 1} + \delta^{1<\alpha<\beta}) c_i^\alpha c_j^\beta, \nonumber
\end{align}
where $\tilde L_{ij} = \sum_{\rho=1}^\ell a_i^\rho c_j^\rho/(e^{x_i-x_j+\gamma}-1)$ and we say that $\delta^{\mathcal{P}}$ is one whenever $\mathcal{P}$ is true and otherwise zero. Finally, we find that the Poisson brackets of the rescaled spins $\hat a_i^\alpha \coloneq a_i^\alpha/\sum_{\rho=1}^\ell a_i^\rho, \hat c_i^\alpha \coloneq c_i^\alpha \sum_{\rho=1}^\ell a_i^\rho$ coincides with the Poisson brackets from \cite{arutyunov:2019,fairon:2026}. The fact that the $K$-theoretic Coulomb branch yields the same Poisson brackets as the multiplicative quiver variety in \cite{fairon:2026} can be seen as a instance of mirror symmetry, by which $K$-theoretic Coulomb branches correspond to multiplicative quiver varieties of the mirror dual quiver.

Poisson structures reproducing the equations of motion of the hyperbolic/trigonometric spin RS models had previously been obtained by way of quasi-Hamiltonian or Poisson reduction \cite{chalykh:2020,arutyunov:2019,fairon:2021} or restricting to a subspace of the phase space \cite{braden:1996}. We expect that the equations of motion of the elliptic spin RS model can also be reproduced systematically from the Poisson algebra of the elliptic Coulomb branch \cite{finkelberg:2020}, which had previously only been achieved in the case $N=2$ \cite{soloviev:2008}.

The rest of the paper is organized as follows:
\begin{itemize}
    \setlength\itemsep{0pt}
    \item Section \ref{section:coulombBranches}: We recall definitions, conventions, and the presentations of the homological and $K$-theoretic Coulomb branch Poisson algebras.
    \item Section \ref{section:rationalRS}: We give the $\gamma$-deformed GKLO realization of the homological Coulomb branch Poisson algebra, exhibit affine Yangian generators, construct one-site $L$-operators associated to each gauge node as well as total $L$-operators, and show how the Coulomb branch Poisson algebra yields a family of Poisson-commuting Hamiltonians that generate the equations of motion of the rational spin RS model. We also give the Poisson algebra of the rational spin variables.
    \item Section \ref{section:hyperbolicRS}: We present the $t$-deformed (multiplicative) GKLO realization of the $K$-theoretic Coulomb branch Poisson algebra, identify the quantum toroidal generators, define one-site and total $L$-operators as well as Poisson-commuting Hamiltonians that generate the hyperbolic spin RS equations. Finally, we give the Poisson relations of the hyperbolic spin variables.
    \item Section \ref{section:conclusion}: Conclusion and outlook.
    \item Section \ref{section:notation}: A small appendix summarizing the relevant notation used in the paper.
\end{itemize}

\section{Homological and $K$-theoretic Coulomb branches}\label{section:coulombBranches}

Let us give the presentation of the abelianized Coulomb branch algebra following \cite{bullimore:2015}:

\begin{definition}\label{def:cohCoulombAlg}
    The \emph{(abelianized) homological Coulomb branch algebra $\mathfrak{C}_{N,\ell}$ of the necklace quiver} is generated as a Poisson algebra by the generators
    \begin{equation}
        q_i^\alpha, \qquad u_i^{\alpha\pm}, \qquad (q_i^\alpha-q_j^\alpha)^{-1},
    \end{equation}
    where the generators are indexed by $i=1,\dots,N$ and $\alpha \in \Z/\ell\Z$, and we adjoin all inverses $(q_i^\alpha-q_j^\alpha)^{-1}$ for which $i \neq j$. These generators are subject to the \emph{homological Euler class relation}
    \begin{equation}
        u_i^{\alpha+} u_i^{\alpha-} = -\frac{\chi^{\alpha+1}(q_i^\alpha) \chi^{\alpha-1}(q_i^\alpha)}{\prod_{j \neq i} (q_i^\alpha-q_j^\alpha)^2},
    \end{equation}
    where we have introduced the \emph{homological gauge polynomial}
    \begin{equation}\label{eq:cohGaugePoly}
        \chi^\alpha(z) \coloneq \prod_{i=1}^N (z-q_i^\alpha),
    \end{equation}
    and subject to the Poisson brackets
    \begin{align}
        \{ q_i^\alpha,q_j^\beta \}
        &= 0,
        \\[5pt]
        \{ q_i^\alpha,u_j^{\beta\pm} \}
        &= \pm\delta_{ij} \delta^{\alpha\beta} u_j^{\beta\pm},
        \\
        \{ u_i^{\alpha\pm},u_j^{\beta\pm} \}
        &= \pm\frac{1-\delta_{ij}\delta^{\alpha\beta}}{q_i^\alpha-q_j^\beta} \kappa^{\alpha\beta} u_i^{\alpha\pm} u_j^{\beta\pm}, \label{eq:monopoleRelation}
        \\
        \{ u_i^{\alpha+},u_j^{\beta-} \}
        &= \delta_{ij} \delta^{\alpha\beta} \frac{\partial}{\partial q_i^\alpha} \frac{\chi^{\alpha+1}(q_i^\alpha) \chi^{\alpha-1}(q_i^\alpha)}{\prod_{j \neq i} (q_i^\alpha-q_j^\alpha)^2},
    \end{align}
    where $\kappa^{\alpha\beta} = 2\delta^{\alpha\beta} - \delta^{\alpha+1,\beta}-\delta^{\alpha,\beta+1}$ is the Cartan matrix of the necklace quiver with $\delta^{\alpha\beta}$ the Kronecker delta on $\Z/\ell\Z$.
\end{definition}

\begin{remark}
    The relation \eqref{eq:monopoleRelation} implies that the right-hand-side is also an element of $\mathfrak{C}_{N,\ell}$, even though it is not generated from the generators as a commutative algebra.
\end{remark}

From the viewpoint of the 3d $\mathcal{N}=4$ quiver gauge theory of the necklace quiver, the diagonal matrices $\operatorname{diag}(q_1^\alpha,\dots,q_N^\alpha)$ have the interpretation of the vacuum expectation value of the scalar field inside the vector multiplet associated to the $\alpha$th gauge node. This vacuum expectation value generically breaks the $U(N)$ gauge group associated to the $\alpha$th gauge node down to $U(1)^{\times N}$ and the $W$-bosons acquire the inverse effective mass $(q_i^\alpha-q_j^\alpha)^{-1}$. Functions in the generators $u_i^{\alpha\pm}$ which are symmetric in the lower index have the interpretation of monopole operators the gauge group $U(N)$ associated to the $\alpha$th gauge node.

Next, we introduce the $K$-theoretic Coulomb branch algebra $\mathfrak{C}_{N,\ell}^K$ following \cite{finkelberg:2018,tsymbaliuk:2023}. We will abuse notation and denote its monopole operators by the same symbols as for the homological Coulomb branch. It should be clear from context whether we are treating the homological or $K$-theoretic case.

\begin{definition}\label{def:KthCoulombAlg}
    The \emph{(abelianized) $K$-theoretic Coulomb branch algebra $\mathfrak{C}_{N,\ell}^K$ of the necklace quiver} is generated as a Poisson algebra by
    \begin{equation}
        (Q_i^\alpha)^{\pm 1/2}, \qquad u_i^{\alpha\pm}, \qquad ((Q_i^\alpha/Q_j^\alpha)^{1/2}-(Q_j^\alpha/Q_i^\alpha)^{1/2})^{-1},
    \end{equation}
    where the generators are indexed by $i=1,\dots,N$ and $\alpha \in \Z/\ell\Z$, and we adjoin all inverses $((Q_i^\alpha/Q_j^\alpha)^{1/2}-(Q_j^\alpha/Q_i^\alpha)^{1/2})^{-1}$ for which $i \neq j$. These generators are subject to the \emph{$K$-theoretic Euler class relation}
    \begin{equation}
        u_i^{\alpha+} u_i^{\alpha-} = -\frac{\chi^{\alpha+1}(Q_i^\alpha) \chi^{\alpha-1}(Q_i^\alpha)}{\prod_{j \neq i} ((Q_i^\alpha/Q_j^\alpha)^{1/2}-(Q_j^\alpha/Q_i^\alpha)^{1/2})^2},
    \end{equation}
    with the \emph{$K$-theoretic gauge polynomial}
    \begin{equation}\label{eq:KthGaugePoly}
        \chi^\alpha(z) \coloneq \prod_{i=1}^N ((z/Q_i^\alpha)^{1/2}-(Q_i^\alpha/z)^{1/2}),
    \end{equation}
    and subject to the Poisson brackets
    \begin{align}
        \{ Q_i^\alpha,Q_j^\beta \}
        &= 0,
        \\[7pt]
        \{ Q_i^\alpha,u_j^{\beta\pm} \}
        &= \pm\delta_{ij} \delta^{\alpha\beta} Q_i^\alpha u_j^{\beta\pm}, \\
        \{ u_i^{\alpha\pm},u_j^{\beta\pm} \}
        &= \pm (1-\delta_{ij}\delta^{\alpha\beta}) \frac{1}{2} \frac{Q_i^\alpha+Q_j^\beta}{Q_i^\alpha-Q_j^\beta} \kappa^{\alpha\beta} u_i^{\alpha\pm} u_j^{\beta\pm}, \\
        \{ u_i^{\alpha+},u_j^{\beta-} \}
        &= \delta_{ij} \delta^{\alpha\beta} Q_i^\alpha \frac{\partial}{\partial Q_i^\alpha} \frac{\chi^{\alpha+1}(Q_i^\alpha) \chi^{\alpha-1}(Q_i^\alpha)}{\prod_{j \neq i} ((Q_i^\alpha/Q_j^\alpha)^{1/2}-(Q_j^\alpha/Q_i^\alpha)^{1/2})^2}.
    \end{align}
    where $\kappa^{\alpha\beta}$ is again the Cartan matrix of the necklace quiver.
\end{definition}

From the viewpoint of gauge theory, the $K$-theoretic Coulomb branch can be seen as the classical limit of the algebra of line operators in 4d $\mathcal{N}=2$ necklace quiver gauge theory compactified on $S^1$ which wrap $S^1$ \cite{cautis:2023}. In this picture, functions in $Q_i^\alpha$ which are symmetric in the lower index correspond to Wilson lines and functions in $u_i^{\alpha\pm}$ which are symmetric in the lower index correspond to 't Hooft lines. A natural Hilbert space representation of quantized $K$-theoretic Coulomb branch algebras is provided by Schur quantization \cite{gaiotto:2024b}, giving a natural Hilbert space quantization of the hyperbolic spin RS model. This quantization appears to differ from the quantum spin RS model defined in \cite{lamers:2022}. A detailed comparison will be the subject of future work.

\section{Rational spin RS models are homological Coulomb branches}\label{section:rationalRS}

\subsection{GKLO representation of the homological Coulomb branch}

We proceed along the lines of \cite{gerasimov:2005} to construct a $\gamma$-deformed GKLO representation of the homological Coulomb branch algebra $\mathfrak{C}_{N,\ell}$. To this end, we give the following

\begin{definition}\label{def:ratDiffOps}
    Let $\mathfrak{A}_{N,\ell}$ be the commutative $\C\llbracket \gamma \rrbracket$-algebra
    \begin{equation}
        \mathfrak{A}_{N,\ell} \coloneq \C\llbracket \gamma \rrbracket[q_i^\alpha,(P_i^\alpha)^{\pm 1}][(q_i^\alpha-q_j^\beta)^{-1}]/J,
    \end{equation}
    where the generators $q_i^\alpha,P_i^\alpha$ have indices $i=1,\dots,N$ and $\alpha \in \Z$, we localize at the elements $q_i^\alpha-q_j^\beta$ for $(i,\alpha) \neq (j,\beta)$, and $J$ is the ideal generated by the cyclic relations
    \begin{equation}
        q_i^{\alpha+\ell} = q_i^\alpha - \gamma, \qquad P_i^{\alpha+\ell} = P_i^\alpha.
    \end{equation}
    We then make $\mathfrak{A}_{N,\ell}$ into a Poisson $\C\llbracket \gamma \rrbracket$-algebra via the log-canonical Poisson bracket
    \begin{equation}
        \{ q_i^\alpha, P_j^\beta \} = \delta_{ij} \delta^{\alpha\beta} P_j^\beta.
    \end{equation}
\end{definition}

\begin{prop}
    There is an injective homomorphism $\psi\colon \mathfrak{C}_{N,\ell} \to \mathfrak{A}_{N,\ell}/\gamma \mathfrak{A}_{N,\ell}$ of Poisson algebras that sends
    \begin{equation}
        \begin{aligned}
            \psi\colon \quad 
            q_i^\alpha \mapsto q_i^\alpha, \qquad
            u_i^{\alpha\pm} \mapsto \pm (P_i^\alpha)^{\pm 1} \frac{\chi^{\alpha \pm 1}(q_i^\alpha)}{\prod_{j \neq i} (q_i^\alpha-q_j^\alpha)} + \gamma \mathfrak{A}_{N,\ell}.
        \end{aligned}
    \end{equation}
\end{prop}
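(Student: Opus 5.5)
Modulo $\gamma$ the cyclic relation becomes $q_i^{\alpha+\ell}=q_i^\alpha$, so $\mathfrak{A}_{N,\ell}/\gamma\mathfrak{A}_{N,\ell}$ is the localized algebra in $q_i^\alpha,(P_i^\alpha)^{\pm1}$ with $\alpha\in\Z/\ell\Z$ and log-canonical bracket. The proof has three parts. First, $\psi$ is well defined on generators and respects the Euler class relation. Second, it respects all the Poisson brackets of Definition~\ref{def:cohCoulombAlg}. Third, it is injective.

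\textbf{Relations.} The inverses $(q_i^\alpha-q_j^\alpha)^{-1}$ go to elements that are already inverted in the target, since we localize at $q_i^\alpha-q_j^\beta$ for $(i,\alpha)\neq(j,\beta)$. For the Euler class relation, the factors $P_i^\alpha$ and $(P_i^\alpha)^{-1}$ cancel in
\[
\psi(u_i^{\alpha+})\,\psi(u_i^{\alpha-})=-\frac{\chi^{\alpha+1}(q_i^\alpha)\,\chi^{\alpha-1}(q_i^\alpha)}{\prod_{j\neq i}(q_i^\alpha-q_j^\alpha)^2},
\]
which is exactly the required relation.

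\textbf{Brackets.} The bracket $\{q,q\}=0$ is immediate. For $\{q_i^\alpha,\psi(u_j^{\beta\pm})\}$, only the factor $(P_j^\beta)^{\pm1}$ has a nonzero bracket with $q_i^\alpha$. It contributes $\pm\delta_{ij}\delta^{\alpha\beta}(P_j^\beta)^{\pm1}$, which gives $\pm\delta_{ij}\delta^{\alpha\beta}\psi(u_j^{\beta\pm})$.

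For $\{\psi(u_i^{\alpha\pm}),\psi(u_j^{\beta\pm})\}$, write $\psi(u_i^{\alpha\pm})=\pm(P_i^\alpha)^{\pm1}F_i^{\alpha\pm}(q)$. The log-canonical bracket acts as a derivation, so
\[
\{(P_i^\alpha)^{\pm1}F,(P_j^\beta)^{\pm1}G\}=(P_i^\alpha)^{\pm1}(P_j^\beta)^{\pm1}\bigl(\mp F\,\partial_{q_i^\alpha}G\,\pm G\,\partial_{q_j^\beta}F\bigr),
\]
using $\{P_i^\alpha,g(q)\}=-P_i^\alpha\,\partial_{q_i^\alpha}g$. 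It therefore suffices to compute the logarithmic derivatives $\partial_{q_j^\beta}\log F_i^{\alpha\pm}$. The function $F_i^{\alpha\pm}$ depends on $q^\alpha$ through the Vandermonde-type denominator and on $q^{\alpha\pm1}$ through $\chi^{\alpha\pm1}$. The cases are organized by the Cartan matrix:

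(a) If $\alpha=\beta$ and $i\neq j$, the denominators give the $2/(q_i^\alpha-q_j^\alpha)$ term.

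(b) If $\beta=\alpha\pm1$, the gauge polynomial $\chi^{\alpha\pm1}(q_i^\alpha)$ contributes $-1/(q_i^\alpha-q_j^\beta)$.

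(c) For the other sign, $\beta=\alpha\mp1$, neither $F$ depends on the other's variables, so the bracket vanishes. The mixed case is then supplied by antisymmetry from case (b) with $i,j$ swapped.

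(d) If $\alpha=\beta$ and $i=j$, the bracket is zero trivially.

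Collecting terms gives $\pm\kappa^{\alpha\beta}/(q_i^\alpha-q_j^\beta)$. The main obstacle is bookkeeping: making sure that in the mixed case $\beta=\alpha\pm1$ exactly one of the two terms survives, with the sign matching $-\delta^{\alpha\pm1,\beta}$ for both choices of $\pm$. The case $\ell=2$ needs extra care, because $\alpha+1=\alpha-1$ there and both Cartan entries $-\delta^{\alpha+1,\beta}$ and $-\delta^{\alpha,\beta+1}$ contribute. For $\ell=1$ the entry $\kappa^{\alpha\alpha}=0$ must be checked separately.

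For $\{\psi(u_i^{\alpha+}),\psi(u_j^{\beta-})\}$: when $(i,\alpha)\neq(j,\beta)$, I will show that the cross terms cancel pairwise, since $F^+$ and $F^-$ contain reciprocal dependence. When $(i,\alpha)=(j,\beta)$, the $P$'s cancel. The bracket formula above then produces $\partial_{q_i^\alpha}$ of the product $F^+F^-$, up to sign, which matches the stated right-hand side.

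\textbf{Injectivity.} The algebra $\mathfrak{C}_{N,\ell}$ is spanned by monomials $q$-rational functions times products of $u_i^{\alpha+}$ and $u_i^{\alpha-}$. Using the Euler class relation, each can be reduced to a form $f(q)\prod (u_i^{\alpha\pm})^{n}$ with only one sign per $(i,\alpha)$. Such a monomial maps to $(P)^{\mathbf n}$ times a nonzero rational function of $q$. Distinct exponent vectors $\mathbf n$ are linearly independent over the field of rational functions in $q$. Hence $\psi$ is injective, provided we treat $\mathfrak{C}_{N,\ell}$ as having no further relations, which is implicit in its definition via \cite{bullimore:2015}.
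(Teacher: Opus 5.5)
Your verification takes the same route as the paper. Everything reduces to logarithmic derivatives of $\chi^{\alpha\pm1}(q_i^\alpha)/\prod_{j\neq i}(q_i^\alpha-q_j^\alpha)$ under the log-canonical bracket. The paper does this inside $\mathfrak{A}_{N,\ell}$ itself. It finds the Coulomb branch brackets except that $q^0$ is replaced by $q^\ell=q^0-\gamma$ in the $(0,\ell-1)$ and $(\ell-1,0)$ entries (averaged when $\ell=2$), and only then reduces mod $\gamma$; you reduce first. The paper's order records the $\gamma$-deformed brackets that its later $L$-operator computations use. Yours is leaner and also covers what the paper leaves implicit: the Euler class relation, the mixed bracket, and injectivity. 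The mixed bracket comes out as exactly $+\partial_{q_i^\alpha}(F_i^{\alpha+}F_i^{\alpha-})$, not merely up to sign.

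\textbf{Case (c).} As written it is false. For $\beta=\alpha\mp1$, the function $F_j^{\beta\pm}$ contains $\chi^{\beta\pm1}(q_j^\beta)=\chi^\alpha(q_j^\beta)$, so it does depend on $q_i^\alpha$, and the bracket does not vanish. The surviving term $-\partial_{q_i^\alpha}\log F_j^{\beta\pm}=-1/(q_i^\alpha-q_j^\beta)$ is exactly the entry $-\delta^{\alpha,\beta\pm1}$ of $\kappa^{\alpha\beta}$. Your antisymmetry sentence produces this, but it contradicts the claim just before it. The correct bookkeeping is the one in your ``main obstacle'' remark: exactly one of the two log-derivatives survives for $\ell\geq3$, and both survive for $\ell=2$.

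\textbf{The case $\ell=1$.} This is the more serious problem, and it is not a matter of checking $\kappa^{00}=0$. There $\chi^{\alpha\pm1}(q_i^\alpha)=\chi^1(q_i^0)=\gamma\prod_{k\neq i}(q_i^0-q_k^0+\gamma)$, so $\psi(u_i^{0\pm})\equiv0$ mod $\gamma$. Your log-derivatives then make no sense, and the claim that each normal-form monomial maps to $P^{\mathbf n}$ times a \emph{nonzero} rational function fails. Injectivity is genuinely false in this case. Take $N=1$: the Euler relation of $\mathfrak{C}_{1,1}$ degenerates to $u^+u^-=0$. The Poisson algebra $\C[q,u^+]$ with $u^-\mapsto0$ and $\{q,u^+\}=u^+$ satisfies every relation. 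Hence $u_1^{0+}\neq0$ in $\mathfrak{C}_{1,1}$, although $\psi(u_1^{0+})=\gamma P_1^0\equiv0$. So the statement, in the paper as well, must be restricted to $\ell\geq2$. There $\chi^{\alpha\pm1}(q_i^\alpha)\not\equiv0$ mod $\gamma$, and your argument goes through, granting your stated assumption that $\mathfrak{C}_{N,\ell}$ has no further relations.

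\textbf{The quotient algebra.} Your identification of $\mathfrak{A}_{N,\ell}/\gamma\mathfrak{A}_{N,\ell}$ tacitly localizes only at $q_i^\alpha-q_j^\beta$ with $(i,\alpha \bmod \ell)\neq(j,\beta \bmod \ell)$. Read literally, the definition also inverts $q_i^\alpha-q_i^{\alpha+\ell}=\gamma$, which would make the quotient zero. The paper makes the same tacit choice.
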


\begin{remark}
    Because of the existence of $\psi$, we justify abusing notation and writing
    \begin{equation}
        u_i^{\alpha\pm} \coloneq \pm (P_i^\alpha)^{\pm 1} \frac{\chi^{\alpha\pm1}(q_i^\alpha)}{\prod_{j \neq i} (q_i^\alpha-q_j^\alpha)} \in \mathfrak{A}_{N,\ell}
    \end{equation}
    as a shorthand. Henceforth we will only be working with these \emph{$\gamma$-deformed} monopole operators inside $\mathfrak{A}_{N,\ell}$.
\end{remark}

\begin{proof}
    When $\ell > 2$, we compute the Poisson brackets inside $\mathfrak{A}_{N,\ell}$ to be
    \begin{equation*}
        \{ u_i^{\alpha\pm},u_j^{\beta\pm} \} = \pm(1-\delta_{ij}\delta^{\alpha\beta})\kappa^{\alpha\beta} u_i^{\alpha\pm} u_j^{\beta\pm}
        \begin{cases}
            \frac{1}{q_i^\ell-q_j^{\ell-1}}, & (\alpha,\beta) = (0,\ell-1), \\
            \frac{1}{q_i^{\ell-1}-q_j^\ell}, & (\alpha,\beta) = (\ell-1,0), \\
            \frac{1}{q_i^\alpha-q_j^\beta}, & \text{otherwise}.
        \end{cases}
    \end{equation*}
    For $\ell=2$, we have
    \begin{equation*}
        \{ u_i^{\alpha\pm},u_j^{\beta\pm} \} = \pm(1-\delta_{ij}\delta^{\alpha\beta})\kappa^{\alpha\beta} u_i^{\alpha\pm} u_j^{\beta\pm}
        \begin{cases}
            \frac{1}{2} \big( \frac{1}{q_i^0-q_j^1} + \frac{1}{q_i^2-q_j^1} \big), & (\alpha,\beta) = (0,1), \\
            \frac{1}{2} \big( \frac{1}{q_i^1-q_j^0} + \frac{1}{q_i^1-q_j^2} \big), & (\alpha,\beta) = (1,0), \\
            \frac{1}{q_i^\alpha-q_j^\beta}, & \text{otherwise},
        \end{cases}
    \end{equation*}
    and for $\ell=1$, we have
    \begin{equation*}
        \{ u_i^{0\pm}, u_j^{0\pm}  \} = \pm (1-\delta_{ij})u_i^{0\pm} u_j^{0\pm} \bigg( \frac{2}{q_i^0-q_j^0} - \frac{1}{q_i^0-q_j^1} - \frac{1}{q_i^1-q_j^0} \bigg).
    \end{equation*}
    Since $\frac{1}{q_i^\ell - q_j^{\ell-1}} \equiv \frac{1}{q_i^0 - q_j^{\ell-1}} \mod \gamma \mathfrak{A}_{N,\ell}$, the result follows.
\end{proof}

\subsection{Affine Yangian of $\mathfrak{gl}_\ell$}

With this in hand, we may define the generating series
\begin{equation}
    e^\alpha(z) \coloneq \sum_{i=1}^N \frac{u_i^{\alpha+}}{z-q_i^\alpha} \in \mathfrak{A}_{N,\ell}\llbracket z^{-1} \rrbracket, \qquad f^\alpha(z) \coloneq \sum_{i=1}^N \frac{u_i^{\alpha-}}{z-q_i^\alpha} \in \mathfrak{A}_{N,\ell}\llbracket z^{-1} \rrbracket,
\end{equation}
as well as
\begin{equation}
    h^\alpha(z) \coloneq \frac{\chi^{\alpha+1}(z) \chi^{\alpha-1}(z)}{\chi^\alpha(z)^2} \in \mathfrak{A}_{N,\ell}\llbracket z^{-1} \rrbracket,
\end{equation}
and check that they satisfy the relations of the classical limit of the affine Yangian:

\begin{prop}
    The generating series $e^\alpha(z),f^\alpha(z),\chi^\alpha(z)$ define a representation of the classical limit of the $N$-truncated affine Yangian of $\mathfrak{gl}_\ell$ in the sense that $\chi^\alpha(z)$ is a polynomial of degree $N$ and the relations
    \begin{align}
        \{ \chi^\alpha(z), \chi^\beta(w) \} &=0, \\
        \{ e^\alpha(z),f^\beta(w) \}
        &= -\delta^{\alpha\beta} \frac{h^\alpha(z)-h^\beta(w)}{z-w}, \\
        \{ \chi^\alpha(z),e^\beta(w) \}
        &= \delta^{\alpha\beta} \chi^\alpha(z) \frac{e^\beta(z)-e^\beta(w)}{z-w}, \\
        \{ \chi^\alpha(z),f^\beta(w) \}
        &= -\delta^{\alpha\beta} \chi^\alpha(z) \frac{f^\beta(z)-f^\beta(w)}{z-w}
    \end{align}
    are satisfied.
\end{prop}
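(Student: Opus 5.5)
Everything is computed directly in $\mathfrak{A}_{N,\ell}$ from the log-canonical bracket $\{q_i^\alpha,P_j^\beta\}=\delta_{ij}\delta^{\alpha\beta}P_j^\beta$, using the explicit $\gamma$-deformed monopole operators $u_i^{\alpha\pm}$. The first relation is immediate, since every $\chi^\alpha(z)$ is a polynomial in the $q$'s and the $q$'s Poisson-commute. The polynomial $\chi^\alpha(z)$ is monic of degree $N$ by definition.

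For the two $\chi$--$e$ and $\chi$--$f$ relations, only the factors $(P_i^\beta)^{\pm1}$ in $u_i^{\beta\pm}$ fail to commute with $\chi^\alpha(z)$. From $\{q_k^\alpha,(P_i^\beta)^{\pm1}\}=\pm\delta_{ki}\delta^{\alpha\beta}(P_i^\beta)^{\pm1}$ we get
\begin{equation*}
\{\chi^\alpha(z),u_i^{\beta\pm}\}=\mp\delta^{\alpha\beta}\frac{\chi^\alpha(z)}{z-q_i^\alpha}\,u_i^{\beta\pm}.
\end{equation*}
The element $q_i^\beta$ in the denominator of $e^\beta(w)$ commutes with $\chi^\alpha(z)$. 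Summing over $i$ and applying the partial fraction identity
\begin{equation*}
\frac{1}{(z-q)(w-q)}=\frac{1}{z-w}\Big(\frac{1}{w-q}-\frac{1}{z-q}\Big)
\end{equation*}
turns $\sum_i u_i^{\beta\pm}/((z-q_i)(w-q_i))$ into $(e^\beta(w)-e^\beta(z))/(z-w)$, and similarly for $f$. This reproduces the stated relations. I will fix the overall sign against the paper's conventions at this point, since the stated signs should come out correctly with the bracket as given.

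The main step is the $e$--$f$ relation. I split $\{e^\alpha(z),f^\beta(w)\}=\sum_{i,j}\{u_i^{\alpha+}/(z-q_i^\alpha),\,u_j^{\beta-}/(w-q_j^\beta)\}$ into three kinds of contributions: brackets of $u^+$ with $u^-$, brackets of $u^\pm$ with the denominators $q$, and, implicitly, the brackets of the $P$'s with the $q$'s appearing inside the prefactors of the $u$'s.

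For $(i,\alpha)\neq(j,\beta)$, I compute $\{u_i^{\alpha+},u_j^{\beta-}\}$ directly in $\mathfrak{A}_{N,\ell}$ from the explicit formulas. This is a $P_i^\alpha$-versus-prefactor term plus a $(P_j^\beta)^{-1}$-versus-prefactor term. Together with the denominator brackets, these should cancel exactly, both for $\alpha=\beta$ with $i\neq j$ and for adjacent $\alpha,\beta$. For adjacent nodes the two $1/(q_i^\alpha-q_j^\beta)$ contributions come from $\chi^{\alpha+1}$ and $\chi^{\beta-1}$ and have opposite signs. Across the cyclic boundary between nodes $\ell-1$ and $0$, the shifted variables $q^{\alpha\pm\ell}$ must be handled carefully; I expect the $\gamma$-shifts to cancel consistently because the same shifted variable appears in both terms. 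This verification, including the degenerate cases $\ell=1,2$, is where I expect the main obstacle.

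For the diagonal terms $i=j$, $\alpha=\beta$, the brackets produce
\begin{equation*}
\frac{\partial_{q_i}\big(u_i^+u_i^-\big)}{(z-q_i)(w-q_i)}+u_i^+u_i^-\Big(\frac{1}{(z-q_i)^2(w-q_i)}-\frac{1}{(z-q_i)(w-q_i)^2}\Big),
\end{equation*}
where $u_i^+u_i^-=-\chi^{\alpha+1}(q_i)\chi^{\alpha-1}(q_i)/\prod_{j\neq i}(q_i-q_j)^2$ is exactly the residue of $h^\alpha$ at its double pole $q_i^\alpha$, and its $q_i$-derivative gives the simple-pole coefficient. For the last step I would compare residues. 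Both sides are rational in $z,w$, vanish at infinity, and have poles only at the $q_i^\alpha$; the right-hand side $-(h^\alpha(z)-h^\alpha(w))/(z-w)$ has no pole at $z=w$. Matching the double- and simple-pole parts of $h^\alpha$ at $q_i^\alpha$, obtained from the partial fraction expansion $h^\alpha(z)=1+\sum_i\big[\mathrm{Res}_2/(z-q_i)^2+\mathrm{Res}_1/(z-q_i)\big]$, against the diagonal terms then completes the proof.
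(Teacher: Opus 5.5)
Your route is the natural one: direct computation in $\mathfrak{A}_{N,\ell}$, vanishing of the off-diagonal terms, and matching the diagonal terms against the partial-fraction expansion of $h^\alpha$. The paper gives no argument beyond calling this the classical GKLO representation. Your treatment of the three $\chi$-relations is correct, and no sign adjustment is needed there.

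\textbf{The diagonal formula is wrong.} The step that fails is your diagonal formula for the $e$--$f$ bracket. Write $u_i^{\alpha+}=P_i^\alpha A_i^\alpha$ and $u_i^{\alpha-}=-(P_i^\alpha)^{-1}B_i^\alpha$, with $A_i^\alpha,B_i^\alpha$ depending only on the $q$'s. Then $\{u_i^{\alpha+},u_i^{\alpha-}\}=\partial_{q_i^\alpha}(A_i^\alpha B_i^\alpha)=-\partial_{q_i^\alpha}(u_i^{\alpha+}u_i^{\alpha-})$. Both denominator brackets carry a minus sign: $\{(z-q_i^\alpha)^{-1},u_i^{\alpha-}\}=-u_i^{\alpha-}/(z-q_i^\alpha)^2$ and $\{u_i^{\alpha+},(w-q_i^\alpha)^{-1}\}=-u_i^{\alpha+}/(w-q_i^\alpha)^2$. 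Passing from $u^-$ to $u^+$ flips the charge, but the order of the bracket flips too. Hence the diagonal contribution is
\begin{equation*}
-\partial_{q_i^\alpha}\Big[\frac{u_i^{\alpha+}u_i^{\alpha-}}{(z-q_i^\alpha)(w-q_i^\alpha)}\Big],
\end{equation*}
which is symmetric under $z\leftrightarrow w$.

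Your expression has the opposite sign on the derivative term. It also contains the antisymmetric piece $u_i^+u_i^-\big((z-q_i)^{-2}(w-q_i)^{-1}-(z-q_i)^{-1}(w-q_i)^{-2}\big)$. It therefore cannot equal the manifestly symmetric right-hand side $-(h^\alpha(z)-h^\alpha(w))/(z-w)$, and the residue comparison you propose would not close. Relatedly, the double-pole coefficient of $h^\alpha$ at $q_i^\alpha$ is $-u_i^{\alpha+}u_i^{\alpha-}$, not $u_i^{\alpha+}u_i^{\alpha-}$.

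\textbf{The fix.} With the corrected formula the identity holds term by term, and no pole-counting is needed. Write $h^\alpha(z)=1+\sum_i\big(\lambda_i(z-q_i^\alpha)^{-2}+\nu_i(z-q_i^\alpha)^{-1}\big)$ with $\lambda_i=-u_i^{\alpha+}u_i^{\alpha-}$, and use
\begin{equation*}
\frac{(z-q)^{-1}-(w-q)^{-1}}{z-w}=-\frac{1}{(z-q)(w-q)},\qquad
\frac{(z-q)^{-2}-(w-q)^{-2}}{z-w}=-\partial_q\frac{1}{(z-q)(w-q)}.
\end{equation*}
This gives $-\frac{h^\alpha(z)-h^\alpha(w)}{z-w}=\sum_i\partial_{q_i^\alpha}\big[\lambda_i/((z-q_i^\alpha)(w-q_i^\alpha))\big]$, provided $\nu_i=\partial_{q_i^\alpha}\lambda_i$. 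That condition holds because $\chi^{\alpha+1}(z)\chi^{\alpha-1}(z)/\prod_{k\neq i}(z-q_k^\alpha)^2$ is independent of $q_i^\alpha$ for $\ell\geq 2$. For $\ell=1$ it depends on $q_i^0$ through $(z-q_i^0+\gamma)(z-q_i^0-\gamma)$, but the extra derivative vanishes at $z=q_i^0$.

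\textbf{The off-diagonal terms.} Here the denominator brackets vanish identically, so they take no part in any cancellation. What you need is
\begin{equation*}
\{u_i^{\alpha+},u_j^{\beta-}\}=P_i^\alpha(P_j^\beta)^{-1}A_i^\alpha B_j^\beta\big(\partial_{q_j^\beta}\log A_i^\alpha+\partial_{q_i^\alpha}\log B_j^\beta\big)=0.
\end{equation*}
This is not a real obstacle. The factors of $A_i^\alpha$ involving $q_j^\beta$ coincide up to sign with the factors of $B_j^\beta$ involving $q_i^\alpha$, and they depend only on $q_i^\alpha-q_j^\beta$. Across the boundary, $(q_i^{\ell-1}-q_j^0+\gamma)$ in $\chi^\ell(q_i^{\ell-1})$ pairs with $(q_j^0-q_i^{\ell-1}-\gamma)$ in $\chi^{-1}(q_j^0)$. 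So the two logarithmic derivatives cancel exactly in $\mathfrak{A}_{N,\ell}$, also for $\ell=1,2$.
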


\begin{remark}
    This representation is nothing but the classical limit of the GKLO representation of the affine Yangian of $\mathfrak{gl}_\ell$ \cite{gerasimov:2005}. We note that $e^\alpha(z),f^\alpha(z)$ for $\alpha = 1,\dots,\ell-1$ and $\chi^\alpha(z)$ for $\alpha = 0,\dots,\ell-1$ generate the (finite) Yangian of $\mathfrak{gl}_\ell$ with quantum determinant
    \begin{equation}
        \operatorname{qdet}(z) = \prod_{\alpha=0}^{\ell-1} \frac{\chi^{\alpha+1}(z)}{\chi^\alpha(z)} = \frac{\chi^0(z+\gamma)}{\chi^0(z)}.
    \end{equation}
    When $\gamma = 0$, it follows that $\operatorname{qdet}(z) = 1$, which reduces us to the Yangian of $\mathfrak{sl}_\ell$. In that sense, $\gamma$ is the charge under the center of $\mathfrak{gl}_\ell$.
\end{remark}

\begin{corollary}
    The zero modes
    \begin{equation}
        E^\alpha \coloneq \frac{1}{2\pi\I} \oint_\infty e^\alpha(z) dz = \sum_{i=1}^N u_i^{\alpha+}, \qquad F^\alpha \coloneq \frac{1}{2\pi\I} \oint_\infty f^\alpha(z) dz = \sum_{i=1}^N u_i^{\alpha-},
    \end{equation}
    and
    \begin{equation}
        H^\alpha \coloneq \frac{1}{2\pi\I} \oint_\infty h^\alpha(z) dz = \sum_{i=1}^N (2q_i^\alpha - q_i^{\alpha+1} - q_i^{\alpha-1}),
    \end{equation}
    define a representation of $\widehat{\mathfrak{sl}}_\ell$ in the sense that
    \begin{align}
        \{ H^\alpha,H^\beta \} &= 0,
        \\[4pt]
        \{ H^\alpha,E^\beta \} &= \kappa^{\alpha\beta} E^\beta,
        \\[4pt]
        \{ H^\alpha,F^\beta \} &= -\kappa^{\alpha\beta} F^\beta,
        \\[4pt]
        \{ E^\alpha,F^\beta \} &= \delta^{\alpha\beta} H^\alpha.
    \end{align}
\end{corollary}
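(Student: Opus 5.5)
The plan is to obtain all four relations by taking residues at $z=\infty$ (and $w=\infty$) of the generating-series relations in the preceding proposition, which we take as given. Two facts make this work. First, $\frac{1}{2\pi\I}\oint_\infty$ extracts the coefficient of $z^{-1}$, and it commutes with the Poisson bracket because the bracket is bilinear and acts coefficientwise on $\mathfrak{A}_{N,\ell}\llbracket z^{-1}\rrbracket$. Second, we need the expansions of $h^\alpha(z)$ and $\chi^\alpha(z)$ at infinity:
\begin{equation*}
\chi^\alpha(z)=z^N\Bigl(1-\tfrac{1}{z}\textstyle\sum_i q_i^\alpha+O(z^{-2})\Bigr),\qquad h^\alpha(z)=1+\tfrac{1}{z}\textstyle\sum_i(2q_i^\alpha-q_i^{\alpha+1}-q_i^{\alpha-1})+O(z^{-2}).
\end{equation*}
The second expansion gives the stated formula for $H^\alpha$. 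The formulas for $E^\alpha$ and $F^\alpha$ follow directly from $\frac{1}{z-q}=z^{-1}+O(z^{-2})$.

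\textbf{The relation $\{E^\alpha,F^\beta\}$.} Take the double residue in $z$ and $w$ of
\begin{equation*}
-\delta^{\alpha\beta}\,\frac{h^\alpha(z)-h^\alpha(w)}{z-w}.
\end{equation*}
Write $h^\alpha(z)=\sum_n h_n z^{-n}$. Then
\begin{equation*}
\frac{z^{-n}-w^{-n}}{z-w}=-\sum_{a+b=n+1,\ a,b\ge1} z^{-a}w^{-b},
\end{equation*}
so the coefficient of $z^{-1}w^{-1}$ is $-h_1$ (only $n=1$ contributes). This gives $\{E^\alpha,F^\beta\}=\delta^{\alpha\beta}h_1=\delta^{\alpha\beta}H^\alpha$, and the sign works out.

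\textbf{The relations $\{H^\alpha,E^\beta\}$, $\{H^\alpha,F^\beta\}$, $\{H^\alpha,H^\beta\}$.} Here the simplest route is to bypass the series and compute directly in $\mathfrak{A}_{N,\ell}$, since $H^\alpha$ is linear in the $q$'s. The only nonzero brackets are $\{q_i^\gamma,u_j^{\beta\pm}\}=\pm\delta_{ij}\delta^{\gamma\beta}u_j^{\beta\pm}$. Indices are read mod $\ell$; for the shifted variables $q^{\alpha\pm1}$ at the boundary, the bracket is unaffected because $q_i^{\alpha+\ell}=q_i^\alpha-\gamma$ differs only by a constant. Hence
\begin{equation*}
\{H^\alpha,u_j^{\beta+}\}=(2\delta^{\alpha\beta}-\delta^{\alpha+1,\beta}-\delta^{\alpha-1,\beta})\,u_j^{\beta+}=\kappa^{\alpha\beta}u_j^{\beta+},
\end{equation*}
and the analogous computation with $u_j^{\beta-}$ gives $-\kappa^{\alpha\beta}u_j^{\beta-}$. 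Summing over $j$ yields the second and third relations. The relation $\{H^\alpha,H^\beta\}=0$ is immediate from $\{q,q\}=0$, or equivalently from $\{\chi,\chi\}=0$. Alternatively, the same relations can be extracted from the $\chi$–$e$ relation using $\log\chi^\alpha(z)=N\log z-z^{-1}\sum_i q_i^\alpha+\dots$, but the direct route is cleaner.

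\textbf{Main obstacle.} The only delicate point is the $\ell=1,2$ cases and the cyclic boundary. For $\ell=1$, $q^{\alpha\pm1}$ both refer to shifts of $q^0$, and $\kappa^{00}=0$ must emerge as $2-1-1$. I will check that the mod-$\ell$ Kronecker deltas are counted with multiplicity, consistent with how $\kappa$ is defined. I will also confirm the sign and residue bookkeeping in the $\{E,F\}$ double residue.
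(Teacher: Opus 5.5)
Your proposal is correct and matches the paper's argument. The paper gives no explicit proof, but the corollary is meant to follow by extracting the $z^{-1}$ and $w^{-1}$ coefficients of the generating-series relations in the preceding proposition, which is exactly what you do for $\{E^\alpha,F^\beta\}$. Computing $\{H^\alpha,E^\beta\}$ and $\{H^\alpha,F^\beta\}$ directly from $\{q_i^\mu,u_j^{\beta\pm}\}=\pm\delta_{ij}\delta^{\mu\beta}u_j^{\beta\pm}$ is an equivalent shortcut for taking residues of the $\chi$--$e$ and $\chi$--$f$ relations. Your handling of the cyclic shift $q_i^{\alpha+\ell}=q_i^\alpha-\gamma$ at the boundary is also right.
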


\subsection{$L$-operator algebra}

Our goal is to exhibit $\mathfrak{A}_{N,\ell}$ as the Poisson algebra of the rational spin RS model. To make such a connection, it is useful to have an $L$-operator algebra at our disposal.

\begin{definition}
    Introduce the one-site $L$-operators
    \begin{equation}
        L_{ij}^{\alpha\pm} \coloneq \frac{u_j^{\alpha+1,\pm}}{q_j^{\alpha+1}-q_i^\alpha} \in \mathfrak{A}_{N,\ell},
    \end{equation}
    as well as the total $L$-operator
    \begin{equation}\label{eq:totLRat}
        L \coloneq L^{0+} \cdots L^{\ell-1,+}.
    \end{equation}
\end{definition}

\begin{remark}
    We note that $L^{\alpha+\ell,\pm} = L^{\alpha\pm}$ by the cyclic relations of $\mathfrak{A}_{N,\ell}$.
\end{remark}

\begin{prop}
    The one-site $L$-operators satisfy the Poisson brackets
    \begin{equation}
        \{ L_1^{\alpha\pm}, L_2^{\beta\pm} \} = \pm(\delta^{\alpha\beta} r^\alpha L_1^{\alpha\pm} L_2^{\beta\pm} - \delta^{\alpha\beta} L_1^{\alpha\pm} L_2^{\beta\pm} \underline r^{\alpha+1} + \delta^{\alpha+1,\beta} L_1^{\alpha\pm} \bar r_{21}^\beta L_2^{\beta\pm} - \delta^{\alpha,\beta+1} L_2^{\beta\pm} \bar r^\alpha L_1^{\alpha\pm}),
    \end{equation}
    where we have used the matrices from \cite{arutyunov:1996}:
    \begin{align}\label{eq:ratMat}
        r^\alpha &\coloneq \sum_{i \neq j} \frac{1}{q_i^\alpha-q_j^\alpha} (e_{ii}-e_{ij}) \otimes (e_{jj} - e_{ji}), \\
        \bar r^\alpha &\coloneq \sum_{i \neq j} \frac{1}{q_i^\alpha-q_j^\alpha} (e_{ii}-e_{ij}) \otimes e_{jj},
        \\[4pt]
        \underline r^\alpha &\coloneq \sum_{i \neq j} \frac{1}{q_i^\alpha-q_j^\alpha} (e_{ij} \otimes e_{ji} - e_{ii} \otimes e_{jj}).
    \end{align}    
\end{prop}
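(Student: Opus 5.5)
Throughout I use the standard tensor conventions $L_1 = L\otimes 1$, $L_2 = 1\otimes L$, and $\{L_1,L_2\} = \sum \{L_{ij},L_{kl}\}\, e_{ij}\otimes e_{kl}$. The proof is a direct componentwise computation inside $\mathfrak{A}_{N,\ell}$. Write $L^{\alpha\pm}_{ij} = u_j^{\alpha+1,\pm}/(q_j^{\alpha+1}-q_i^\alpha)$. Then
\[
\{L^{\alpha\pm}_{ij},L^{\beta\pm}_{kl}\}
\]
splits by the Leibniz rule into three kinds of terms: (a) the monopole--monopole bracket $\{u_j^{\alpha+1,\pm},u_l^{\beta+1,\pm}\}$, (b) brackets of a monopole with a denominator $(q_l^{\beta+1}-q_k^\beta)^{-1}$, and (c) the bracket of the two denominators, which vanishes.

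For (a) I will not use the mod-$\gamma$ formula. Instead I use the exact brackets in $\mathfrak{A}_{N,\ell}$ computed in the proof of the GKLO proposition, namely
\[
\{u_i^{\alpha\pm},u_j^{\beta\pm}\} = \pm(1-\delta_{ij}\delta^{\alpha\beta})\,\kappa^{\alpha\beta}\,\frac{u_i^{\alpha\pm}u_j^{\beta\pm}}{q_i^\alpha-q_j^\beta},
\]
with the shifted superscripts across the $\ell-1\to 0$ edge. Because indices live in $\Z$ with $q^{\alpha+\ell}=q^\alpha-\gamma$, I will work with lifts $\alpha,\beta\in\Z$ so that the formula reads uniformly with $q^\alpha,q^{\alpha\pm1}$. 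For (b), $\{q_k^\beta,u_j^{\alpha+1,\pm}\}=\pm\delta_{jk}\delta^{\beta,\alpha+1}u_j^{\alpha+1,\pm}$, so a denominator only talks to a monopole on the same or an adjacent node. This is also the reason only $\delta^{\alpha\beta}$ and $\delta^{\alpha\pm1,\beta}$ terms appear.

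I then split into three cases: $\beta=\alpha$, $\beta=\alpha+1$, and $\beta=\alpha-1$. For generic $\ell$ all other cases give zero, since $\kappa$ vanishes and no $q$'s are shared. For $\ell\le 2$ the cases overlap and I will check the sums separately.

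\textbf{Case $\beta=\alpha$.} Term (a) contributes
\[
\pm\frac{2(1-\delta_{jl})}{q_j^{\alpha+1}-q_l^{\alpha+1}}\,L_{ij}L_{kl}.
\]
Term (b) contains no brackets of $q^\alpha$ with $u^{\alpha+1}$, but there are brackets of $q_j^{\alpha+1}$ in the denominator of $L_{kl}$ with $u_j^{\alpha+1}$. These give
\[
\pm\,\delta_{jl}\,\frac{L_{ij}L_{kl}\,\bigl(q_l-q_k^\alpha-(q_j-q_i^\alpha)\bigr)}{(\cdots)(\cdots)}
\]
type terms. I would then expand
\[
r^\alpha L_1L_2 - L_1L_2\,\underline r^{\alpha+1}
\]
entrywise. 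The factor $r^\alpha$ acts on the row indices $i,k$ through $(e_{ii}-e_{ij})\otimes(e_{jj}-e_{ji})$, which produces rows of $L$ differing by $1/(q_i^\alpha-q_k^\alpha)$. I will match these using the partial-fraction identity
\[
\frac{1}{(q_j^{\alpha+1}-q_i^\alpha)(q_j^{\alpha+1}-q_k^\alpha)}=\frac{1}{q_i^\alpha-q_k^\alpha}\Bigl(\frac{1}{q_j^{\alpha+1}-q_i^\alpha}-\frac{1}{q_j^{\alpha+1}-q_k^\alpha}\Bigr).
\]
The factor $\underline r^{\alpha+1}$ acts on the column indices $j,l$ and reproduces the $2/(q_j^{\alpha+1}-q_l^{\alpha+1})$ term together with the permutation piece $e_{ij}\otimes e_{ji}$, which corresponds to the $\delta_{jl}$ contributions from (b).

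\textbf{Case $\beta=\alpha+1$.} Here (a) gives $\kappa=-1$ between $u^{\alpha+1}$ and $u^{\alpha+2}$. Term (b) gives the bracket of $q_k^{\alpha+1}$ (the denominator of $L^{\alpha+1}_{kl}$) with $u_j^{\alpha+1}$, proportional to $\delta_{jk}$. This should assemble into $L_1\,\bar r_{21}^{\alpha+1}L_2$, where the middle matrix links the column index of $L^\alpha$ to the row index of $L^{\alpha+1}$, again via a partial-fraction identity. The case $\beta=\alpha-1$ follows by antisymmetry: swap the tensor factors and use $\{L_1,L_2\}=-P\{L_2,L_1\}P$.

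The main obstacle will be the case $\beta=\alpha$: there the $i\neq k$ and $j\neq l$ pieces and the diagonal $\delta_{jl}$ and $\delta_{ik}$ pieces must all cancel correctly. I would first verify the case $N=2$ explicitly to fix signs and tensor-leg conventions, including the placement of $\bar r_{21}$ versus $\bar r$. I would then check the cyclic edge, where $q^\ell=q^0-\gamma$, by confirming that the matrices built from $q^\ell$ coincide with those at the corresponding lifted index. This works because $r$-matrices depend only on differences $q_i-q_j$ within one node, and a uniform shift by $\gamma$ leaves those differences unchanged.
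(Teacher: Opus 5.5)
Your direct componentwise computation is the right approach; the paper states this proposition without a proof, so it is the natural benchmark. Several parts of your plan are correct as stated. The splitting into (a), (b), (c) is fine, and so is the locality argument restricting to $\beta\in\{\alpha,\alpha\pm1\}$. The case $\beta=\alpha+1$ also works: the $\delta_{jk}$ term from (b) combines with the $\kappa=-1$ bracket to give $(1-\delta_{jk})L^\alpha_{ij}L^{\alpha+1}_{kl}/(q^{\alpha+2}_l-q^{\alpha+1}_j)$, which is exactly the entry of $L_1^\alpha\bar r^{\alpha+1}_{21}L_2^{\alpha+1}$. The reduction of $\beta=\alpha-1$ to this case by swapping tensor legs, and the handling of the cyclic edge via lifts, are also correct.

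The gap is in the case $\beta=\alpha$. You rightly flag it as the crux, but the cancellation mechanism you describe does not work.

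\begin{itemize}
\item The permutation piece of $\underline r^{\alpha+1}$ is summed over $a\neq b$, so in $L_1L_2\underline r^{\alpha+1}$ it only produces entries with $j\neq l$. It cannot match the $\delta_{jl}$ terms from (b).
\item $\underline r^{\alpha+1}$ reproduces $2/(q^{\alpha+1}_j-q^{\alpha+1}_l)$ on its own only when $i=k$.
\end{itemize}

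Concretely, write $x=q^\alpha$, $y=q^{\alpha+1}$ and divide by $\pm L_{ij}L_{kl}$. The left-hand side is $\frac{2(1-\delta_{jl})}{y_j-y_l}+\delta_{jl}\frac{x_k-x_i}{(y_j-x_i)(y_j-x_k)}$; note that the sign of your (b) numerator is reversed. The right-hand side is
\[
\frac{x_k-x_i}{(y_j-x_k)(y_l-x_i)}+\frac{1-\delta_{jl}}{y_j-y_l}\Bigl(1+\frac{(y_j-x_i)(y_l-x_k)}{(y_l-x_i)(y_j-x_k)}\Bigr).
\]
The first term is the entire $r^\alpha$ contribution, obtained from your partial-fraction identity. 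The bracketed term collects the two pieces of $-L_1L_2\underline r^{\alpha+1}$.

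So at $j=l$ it is $r^\alpha$, not $\underline r^{\alpha+1}$, that accounts for the terms from (b). For $j\neq l$, $i\neq k$, the off-diagonal $r^\alpha$ term cannot be dropped. It must be combined with both pieces of $\underline r^{\alpha+1}$ through the identity
\[
(x_k-x_i)(y_j-y_l)+(y_l-x_i)(y_j-x_k)+(y_j-x_i)(y_l-x_k)=2(y_l-x_i)(y_j-x_k).
\]
This identity is the missing ingredient, and with it the case closes.

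Your deferred checks for $\ell\le 2$ do go through:
\begin{itemize}
\item For $\ell=2$, the exact bracket $\{u^1_j,u^0_l\}$ is the sum of a forward and a backward adjacent piece. The forward piece combines with the $\delta_{jk}$ term to give $L_1^0\bar r^1_{21}L_2^1$. The backward piece combines with the $\delta_{il}$ term from $\{q^0_i,u^0_l\}$ to give $-L_2^1\bar r^0L_1^0$.
\item For $\ell=1$, all three contributions add, and the leftover $\delta_{jl}$ terms cancel.
\end{itemize}
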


\begin{corollary}
    The total $L$-operator satisfies
    \begin{equation}
        \{ L_1, L_2 \} = r^0 L_1 L_2 - L_1 L_2 \underline r^0 + L_1 \bar r_{21}^0 L_2 - L_2 \bar r^0 L_1,
    \end{equation}
    which reproduces the Poisson bracket of the Lax matrix from \cite{arutyunov:1998}.
\end{corollary}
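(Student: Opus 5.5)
We derive the Corollary from the one-site Proposition by iterated Leibniz expansion of $L=L^{0+}\cdots L^{\ell-1,+}$. Write $L^\alpha$ for $L^{\alpha+}$. The Leibniz rule in tensor notation gives
\begin{equation*}
\{L_1,L_2\}=\sum_{\alpha,\beta=0}^{\ell-1} L_1^{0}\cdots L_1^{\alpha-1}\,L_2^{0}\cdots L_2^{\beta-1}\,\{L_1^\alpha,L_2^\beta\}\,L_1^{\alpha+1}\cdots L_1^{\ell-1}\,L_2^{\beta+1}\cdots L_2^{\ell-1}.
\end{equation*}
Here factors in spaces $1$ and $2$ commute as matrices, so they can be reordered freely across the tensor product.

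By the one-site Proposition, only three kinds of pairs contribute: $\beta=\alpha$, $\beta=\alpha+1$ and $\beta=\alpha-1$, with indices taken mod $\ell$. We substitute the four terms and group them by the site $\alpha$.

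The diagonal terms contribute
\begin{equation*}
L_{1,<\alpha}L_{2,<\alpha}\,\bigl(r^\alpha L_1^\alpha L_2^\alpha - L_1^\alpha L_2^\alpha \underline r^{\alpha+1}\bigr)\,L_{1,>\alpha}L_{2,>\alpha}.
\end{equation*}
The cross term with $\beta=\alpha+1$ reads
\begin{equation*}
L_{1,<\alpha}L_{2,\le\alpha}\,L_1^\alpha\,\bar r_{21}^{\alpha+1}\,L_2^{\alpha+1}\,L_{1,>\alpha}L_{2,>\alpha+1},
\end{equation*}
and the term with $\beta=\alpha$, $\alpha\to\alpha+1$ in the last slot is analogous with $\bar r^{\alpha+1}$.

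The key step is a telescoping argument. At each internal boundary between sites $\alpha$ and $\alpha+1$, the matrices $-\underline r^{\alpha+1}$ (from site $\alpha$), $+r^{\alpha+1}$ (from site $\alpha+1$), $+\bar r^{\alpha+1}_{21}$ and $-\bar r^{\alpha+1}$ all sit sandwiched between $L_1^{0}\cdots L_1^{\alpha}\,L_2^{0}\cdots L_2^{\alpha}$ on the left and the remaining factors on the right. Here I use that $L_1^\alpha$ commutes past $L_2$-factors, so that $L_2^{\alpha+1}$, $L_1^{\alpha+1}$ can be moved to the right of the inserted matrix. The cancellation therefore reduces to the purely algebraic identity
\begin{equation*}
r^{\alpha}-\underline r^{\alpha}+\bar r^{\alpha}_{21}-\bar r^{\alpha}=0,
\end{equation*}
which is the standard relation among the matrices of \cite{arutyunov:1996}. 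I will verify it directly from \eqref{eq:ratMat}. Expanding $(e_{ii}-e_{ij})\otimes(e_{jj}-e_{ji})$ and relabeling $i\leftrightarrow j$ in the $\bar r_{21}$ term using the antisymmetry of $1/(q_i-q_j)$ should make all four terms cancel. This check is the main computational point.

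After telescoping, only the outermost terms survive: $r^0L_1L_2$ from the left end and $-L_1L_2\underline r^{\ell}$ from the right end, where $\underline r^\ell=\underline r^0$ by cyclicity. The wrap-around cross terms between sites $\ell-1$ and $0$ also survive, because their matrices $\bar r^{0}$ do not sit at an internal boundary. Instead they appear as $L_1\,\bar r^0_{21}\,L_2$ and $-L_2\,\bar r^0\,L_1$: one full product sits on each side, after regrouping the factors $L_1^{0}$ and $L_2^{\ell-1}$ with their respective products.

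Two points need care. First, the cyclic identification $q^{\ell}=q^0-\gamma$ means $r^\ell\neq r^0$ on the nose. However, $r$ depends only on differences $q_i-q_j$ at the same level, so $r^\ell=r^0$ exactly. Second, the bookkeeping of where the boundary terms with $\alpha=\ell-1,\beta=0$ land must be done carefully. I expect this bookkeeping of the wrap-around terms to be the main obstacle, and I would first work out $\ell=2$ explicitly to fix signs.
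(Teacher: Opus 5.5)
Your proposal is correct and follows the paper's approach: the paper derives the corollary from the one-site bracket together with the identity $r^\alpha + \bar r_{21}^\alpha - \bar r^\alpha - \underline r^\alpha = 0$, which is exactly the cancellation you apply at each internal boundary, leaving only $r^0 L_1 L_2$, $-L_1 L_2 \underline r^\ell = -L_1 L_2 \underline r^0$ and the two wrap-around terms $L_1 \bar r_{21}^0 L_2 - L_2 \bar r^0 L_1$. Your componentwise check of the identity goes through, and the small cases $\ell=1,2$ need no separate treatment: there the Kronecker deltas coincide, so several of the four terms simply fall into the same Leibniz summand and are sorted in the same way.
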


\begin{proof}
    This follows from the Poisson algbera of the one-site $L$-operators and the identity
    \begin{equation*}
        r^\alpha + \bar r_{21}^\alpha - \bar r^\alpha - \underline r^\alpha = 0.
        \vspace{-1.5\baselineskip}
    \end{equation*}
\end{proof}

\begin{corollary}
    The Hamiltonians $H[n] \coloneq \operatorname{Tr} L^n$ are mutually Poisson-commuting.
\end{corollary}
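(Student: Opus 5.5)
The plan is to use the quadratic Poisson bracket of the total $L$-operator from the preceding corollary,
\begin{equation*}
    \{ L_1, L_2 \} = r^0 L_1 L_2 - L_1 L_2 \underline r^0 + L_1 \bar r_{21}^0 L_2 - L_2 \bar r^0 L_1,
\end{equation*}
and apply the standard argument for traces of powers. By the Leibniz rule,
\begin{equation*}
    \{ \operatorname{Tr} L^n, \operatorname{Tr} L^m \} = nm \operatorname{Tr}_{12}\big( L_1^{n-1} L_2^{m-1} \{ L_1, L_2 \} \big),
\end{equation*}
which follows by expanding each power, using cyclicity of the trace in each tensor factor separately, and noting that $L_1$ and $L_2$ commute as matrices in $\operatorname{End}(\C^N)^{\otimes 2}$ because they act in different factors. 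Substituting the bracket gives four terms, and in each of them cyclicity lets us move all of $L_1^{n-1}L_2^{m-1}$ next to the $L_1L_2$ factors. This yields
\begin{equation*}
    \operatorname{Tr}_{12}\big( (r^0 - \underline r^0) L_1^n L_2^m \big) + \operatorname{Tr}_{12}\big( \bar r_{21}^0 L_2^m L_1^n \big) - \operatorname{Tr}_{12}\big( \bar r^0 L_1^n L_2^m \big).
\end{equation*}
The third term needs care. Cyclicity in the combined trace gives $\operatorname{Tr}_{12}(L_1^{n-1}L_2^{m-1}L_2\bar r^0 L_1) = \operatorname{Tr}_{12}(\bar r^0 L_1 L_1^{n-1} L_2^{m-1} L_2)$, which is fine because the trace over $\C^N\otimes\C^N$ is cyclic for arbitrary elements of the tensor square. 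Since all $L$ factors commute between the two spaces, every term becomes $\operatorname{Tr}_{12}(X\, L_1^n L_2^m)$ with
\begin{equation*}
    X = r^0 - \underline r^0 + \bar r_{21}^0 - \bar r^0.
\end{equation*}
The identity $r^\alpha + \bar r_{21}^\alpha - \bar r^\alpha - \underline r^\alpha = 0$ quoted in the proof of the previous corollary gives $X=0$, so the bracket vanishes.

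The step needing the most attention is the bookkeeping of cyclic moves for the mixed terms $L_1 \bar r_{21}^0 L_2$ and $L_2 \bar r^0 L_1$. There one must check that moving $L_2$ (respectively $L_1$) around the trace is legitimate. It is, because the full trace $\operatorname{Tr}_{12}$ is cyclic on the whole tensor product algebra and the $L$-factors in different spaces commute. I would also verify the sign and index convention of $\bar r_{21}$ against the stated identity, since only the exact combination $r-\underline r+\bar r_{21}-\bar r$ cancels. If the identity were instead only valid up to terms killed by the trace, for example terms antisymmetric under permutation when $n=m$, I would fall back on checking $\operatorname{Tr}_{12}(X L_1^nL_2^m)=0$ using the explicit forms \eqref{eq:ratMat}. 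The identity as stated, however, makes this unnecessary.
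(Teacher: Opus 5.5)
Your argument is correct and is the standard one the paper has in mind. The paper states this corollary without a separate proof, as a consequence of the quadratic bracket for the total $L$-operator. Your reduction of $\{H[n],H[m]\}$ to $nm\operatorname{Tr}_{12}\big((r^0-\underline r^0+\bar r_{21}^0-\bar r^0)L_1^nL_2^m\big)$ is exactly that consequence. This expression vanishes by the identity $r^\alpha+\bar r_{21}^\alpha-\bar r^\alpha-\underline r^\alpha=0$, which holds exactly for the rational matrices with $\bar r_{21}$ the flip of $\bar r$, so your fallback is never needed.
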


\begin{prop}
    The Hamiltonians $H[n]$ are central with respect to $\widehat{\mathfrak{sl}}_\ell$:
    \begin{equation}
        \{ H[n], E^\alpha \} = 0, \qquad \{ H[n], F^\alpha \} = 0, \qquad \{ H[n], H^\alpha \} = 0.
    \end{equation}
\end{prop}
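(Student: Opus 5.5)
The plan is to treat the three brackets separately, working with the explicit one-site $L$-operators $L_{ij}^{\beta+} = u_j^{\beta+1,+}/(q_j^{\beta+1}-q_i^\beta)$ and the log-canonical bracket on $\mathfrak{A}_{N,\ell}$. Throughout we use that a bracket of the form $\{X, L\} = [M, L]$ for some matrix $M$ gives $\{X, \operatorname{Tr} L^n\} = n \operatorname{Tr}(L^{n-1}[M,L]) = 0$.

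\textbf{The Cartan part.} Set $S^\alpha \coloneq \sum_i q_i^\alpha$. Only the factor $u_j^{\beta+1,+}$ of $L^{\beta+}$ has a nonzero bracket with $S^\alpha$, since $S^\alpha$ Poisson-commutes with all $q$'s. Hence
\[
\{ S^\alpha, L^{\beta+} \} = \delta^{\alpha,\beta+1} L^{\beta+}.
\]
This holds also across the cyclic identification: $q^{\alpha+\ell}=q^\alpha-\gamma$ differs from $q^\alpha$ by a constant and $P^{\alpha+\ell}=P^\alpha$, so the brackets are unchanged. In the product $L = L^{0+}\cdots L^{\ell-1,+}$ exactly one factor has $\beta+1\equiv\alpha$, so $\{S^\alpha, L\} = L$ for every $\alpha$. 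Therefore
\[
\{H^\alpha, L\} = (2-1-1)L = 0,
\]
and in particular $\{H^\alpha, H[n]\}=0$.

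\textbf{The raising operators.} Here I would compute $\{E^\alpha, L^{\beta+}\}$ with $E^\alpha=\sum_k u_k^{\alpha+}$, using the $u^+u^+$ brackets, which are proportional to $\kappa^{\alpha\beta}$, and $\{u^{\alpha+}_k, q_j^\beta\} = -\delta^{\alpha\beta}\delta_{jk}u_k^{\alpha+}$. Only the factors $L^{\alpha-2,+}$, $L^{\alpha-1,+}$ and $L^{\alpha,+}$ can contribute. They do so through the numerator $u^{\alpha-1,+}$, $u^{\alpha+}$, $u^{\alpha+1,+}$ respectively, and through the denominators containing $q^\alpha$, which appear in $L^{\alpha-1,+}$ and $L^{\alpha+}$. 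The goal is to write
\begin{align*}
\{E^\alpha, L^{\alpha-2,+}\} &= L^{\alpha-2,+} A, \\
\{E^\alpha, L^{\alpha-1,+}\} &= -A L^{\alpha-1,+} + L^{\alpha-1,+} B, \\
\{E^\alpha, L^{\alpha,+}\} &= -B L^{\alpha,+}
\end{align*}
for explicit matrices $A$ and $B$. The natural candidates are $B_{jk} \propto u_j^{\alpha+}/(q_j^\alpha-q_k^\alpha)$ and a diagonal-plus-rank-one $A$ built from $u^{\alpha+}$ and differences $q^{\alpha-1}-q^{\alpha}$. Partial fractions, i.e.\ identities of the form $\frac{1}{(a-b)(a-c)}=\frac{1}{b-c}\big(\frac{1}{a-b}-\frac{1}{a-c}\big)$, should produce exactly this telescoping form. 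The contributions then cancel pairwise inside the cyclic product, leaving $\{E^\alpha, L\}$ as a commutator, or $0$, up to the cyclic conjugation caused by the factor at the end of the product. The trace kills both, so $\{E^\alpha,H[n]\}=0$.

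\textbf{The lowering operators.} For $F^\alpha$ the $u^-u^+$ bracket brings in the derivative of the Euler class, so the direct computation is messier. I would first look for an anti-automorphism of $\mathfrak{A}_{N,\ell}$ (schematically $P\mapsto P^{-1}$ together with a reflection of the $q$'s and of the node labels) that exchanges $E^\alpha$ and $F^{\alpha'}$. It should send $L$ to a matrix whose traces of powers are again the $H[n]$, for instance via $L^{-}$-type products and the Euler relation $u^+u^- \propto \chi\chi/\prod$. If that fails, I would do the direct computation. In it, the derivative term $\partial_{q}\big(\chi^{\alpha+1}\chi^{\alpha-1}/\prod(q-q)^2\big)$ is rewritten, using logarithmic derivatives, as $u^{\alpha+}u^{\alpha-}$ times a sum of simple fractions, and the same telescoping pattern as for $E^\alpha$ is then sought.

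I expect the main obstacle to be the lowering step and, within both the raising and lowering steps, the boundary factors. For $\alpha=0$ or $\ell-1$ the $\gamma$-shift $q^{\ell}=q^0-\gamma$ enters the denominators. Small $\ell$ ($\ell=1,2$) also needs separate checks, since there the three contributing factors overlap, as in the case split of the proof of Proposition~3.2.
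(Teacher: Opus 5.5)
Your Cartan and $E^\alpha$ steps are correct. The Cartan argument ($\{S^\alpha,L^{\beta+}\}=\delta^{\alpha,\beta+1}L^{\beta+}$, so $\{H^\alpha,L\}=0$ entrywise) is more explicit than anything the paper writes out. Your three-factor telescoping for $E^\alpha$ is the paper's computation written in components. The $Z$-terms of the paper's $\{L_1^{\alpha+},u_2^{\beta+1,+}\}$ produce your $A=-\operatorname{diag}(L^{\alpha-1,+}e)$. The $\underline r^\alpha$ and $\bar r^\alpha$ terms, which cancel via $(\underline r^\alpha+\bar r^\alpha)e_2=0$, produce your $B$. Two small corrections to your candidates: $A$ is purely diagonal, and $B_{jk}=u_k^{\alpha+}/(q_j^\alpha-q_k^\alpha)$ for $j\neq k$ (the $u$ carries the column index), with $B_{jj}=-\sum_{k\neq j}B_{jk}$.

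The gap is $F^\alpha$. There you give a search strategy rather than an argument, and both branches mispredict what happens. The paper also writes out only $E^\alpha$, but its $L$-operator algebra has no bracket of $L^{\alpha+}$ with $u^{\beta-}$, so this case genuinely needs an extra ingredient.

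\emph{The symmetry route.} The symmetry you want exists and is even Poisson: $\tau\colon q_i^\alpha\mapsto -q_i^{-\alpha}$, $P_i^\alpha\mapsto (P_i^{-\alpha})^{-1}$ is compatible with $q^{\alpha+\ell}=q^\alpha-\gamma$ and sends $u_i^{\alpha+}\mapsto u_i^{-\alpha,-}$, hence $E^\alpha\mapsto F^{-\alpha}$. However, it does \emph{not} fix the $H[n]$. Write $L^{\beta+}$ as a Cauchy matrix times $\operatorname{diag}(u^{\beta+1,+})$, invert the Cauchy matrix explicitly, and use the Euler relation to merge the resulting diagonal factors into $\operatorname{diag}(u^{\beta-})$. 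One finds that $\tau(L)$ is conjugate by a diagonal matrix to $L^{-1}$, so $\tau(H[n])=\operatorname{Tr}L^{-n}$. To conclude you must add two facts. First, $\det L$ is a unit (in fact $\det L=\pm\prod_{i,\beta}P_i^\beta$). Second, by Newton's identities and $\sigma_k(L)=\sigma_{N-k}(L^{-1})/\det L^{-1}$, every $\operatorname{Tr}L^n$ is a rational function of $\operatorname{Tr}L^{-1},\dots,\operatorname{Tr}L^{-N}$, so it Poisson-commutes with $F^{-\alpha}$.

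\emph{The direct route.} Your fallback computation also succeeds, but not via the $E^\alpha$ pattern. For $\ell\geq 3$ one gets $\{F^\alpha,L^{\alpha-2,+}\}=0$ and $\{F^\alpha,L^{\alpha+}_{ij}\}=u_i^{\alpha-}u_j^{\alpha+1,+}/(q_j^{\alpha+1}-q_i^\alpha)^2$. The cancellation is then a genuinely two-factor identity: $\{F^\alpha,(L^{\alpha-1,+}L^{\alpha+})_{ik}\}$ equals $u_k^{\alpha+1,+}$ times the sum over $j$ of the residues at $z=q_j^\alpha$ of $\prod_{m\neq k}(z-q_m^{\alpha+1})\prod_{m\neq i}(z-q_m^{\alpha-1})/\chi^\alpha(z)^2$. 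This sum vanishes because the function is $O(z^{-2})$ and has no other poles. For $\alpha=0$, the pair $L^{\ell-1,+}L^{0+}$ is adjacent inside $\operatorname{Tr}L^n$ by cyclicity, so the same identity applies.
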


\begin{proof}
    Let us consider the bracket $\{ H[n],E^\alpha \}$ as an example. From the $L$-operator algebra, we derive the bracket
    \begin{equation*}
        \begin{aligned}
            \{ L_1^{\alpha+},u_2^{\beta+1,+} \}
            ={}& {-\delta^{\alpha\beta}} L_1^{\alpha+} u_2^{\beta+1,+} \underline r^{\alpha+1} - \delta^{\alpha,\beta+1} u_2^{\beta+1,+} \bar r^\alpha L_1^{\alpha+} - \delta^{\alpha\beta} Z L_1^{\alpha+} L_2^{\beta+} + \delta^{\alpha+1,\beta} L_1^{\alpha+} Z L_2^{\beta+}.
        \end{aligned}
    \end{equation*}
    with $Z = \sum_{i=1}^N e_{ii} \otimes e_i^t$. Then
    \begin{equation*}
        \begin{aligned}
            \{ H[n],E^\alpha \}
            ={}& \sum_{\mu=0}^{n\ell-1} \operatorname{Tr}_1 L_1^{0,+} \cdots L_1^{\mu-1,+} \{ L_1^\mu, u_2^{\alpha,+} \} L_1^{\mu+1,+} \cdots L_1^{n\ell-1,+} e_2 \\
            ={}& -\sum_{\mu=1}^{n\ell-1} \delta^{\mu\alpha} \operatorname{Tr}_1 L_1^{0,+} \cdots L_1^{\alpha-1,+} u_2^{\alpha,+} (\underline r^{\alpha} + \bar r^\alpha) L_1^{\alpha,+} \cdots L_1^{n\ell-1,+} e_2 - \delta^{\ell\alpha} \operatorname{Tr}_1 L_1^n u_2^{\alpha,+} (\underline r^{0}+\bar r^0) e_2 \\
            & + \delta^{0,\alpha-1} \operatorname{Tr}_1 (L_1^n Z L_2^{\alpha-1,+}-Z  L_2^{\alpha-1,+} L_1^n) e_2 \\
            ={}& 0,
        \end{aligned}
    \end{equation*}
    where we have used $(\underline r^\alpha + \bar r^\alpha) e_2 = 0$.
\end{proof}

\subsection{Superintegrability}

We have already seen that the Hamiltonians $H[n]$ Poisson-commute with the generators $E^\alpha,F^\alpha,H^\alpha$, which define a representation of the loop algebra $\widehat{\mathfrak{sl}}_\ell$. It turns out that this representation factors through a representation of the loop algebra $L(\mathfrak{gl}_\ell)$ with generators $J^{\alpha\beta}[n] \in \mathfrak{A}_{N,\ell}$, which can be expressed in terms of the $L$-operators. To see this, let
\begin{equation}
    V_{\alpha,\beta}^\pm \coloneq u^{\alpha\pm} L^{\alpha\pm} \cdots L^{\beta-1,\pm} e
\end{equation}
with $u^{\alpha\pm} = (u_1^{\alpha\pm},\dots,u_N^{\alpha\pm})$ and $e = (1,\dots,1)^t$. Then we can define
\begin{equation}
    J^{\alpha\beta}[0] \coloneq
    \begin{cases}
        V_{\beta,\alpha-1}^-, & \alpha > \beta \\
        \sum_{i=1}^N (q_i^\alpha - q_i^{\alpha-1}), & \alpha = \beta \\
        V_{\alpha,\beta-1}^+, & \alpha < \beta
    \end{cases},
\end{equation}
as well as
\begin{equation}
    J^{\alpha\beta}[-n] \coloneq V_{\beta,\alpha+n\ell-1}^-, \qquad J^{\alpha\beta}[n] \coloneq V_{\alpha,\beta+n\ell-1}^+.
\end{equation}
for $n > 0$. The coefficients $J^{\alpha\beta}[n]$ assemble into an $\ell \times \ell$ matrix $J[n]$.

\begin{prop}
    The generators $J^{\alpha\beta}[n]$ define a representation of the loop algebra $L(\mathfrak{gl}_\ell)$:
    \begin{equation}
        \{ J^{\alpha\beta}[n],J^{\mu\nu}[m] \} = \delta^{\mu\beta} J^{\alpha\nu}[n+m] - \delta^{\alpha\nu} J^{\mu\beta}[n+m].
\end{equation}
\end{prop}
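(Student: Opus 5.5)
I plan to prove the loop algebra relations by rewriting every $J^{\alpha\beta}[n]$ as a \emph{path operator} $V^\pm_{\alpha,\beta}$ and computing brackets of path operators with the one-site $L$-operator algebra. The idea is to index by the unrolled nodes $\alpha\in\Z$, using $L^{\alpha+\ell,\pm}=L^{\alpha\pm}$ and $u^{\alpha+\ell,\pm}=u^{\alpha\pm}$. Then $J^{\alpha\beta}[n]$ with $n>0$, or with $n=0$ and $\alpha<\beta$, is the path $V^+_{\alpha,\beta+n\ell-1}$ from node $\alpha$ to node $\beta+n\ell$. The negative modes are the analogous $V^-$ paths. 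The Cartan elements $J^{\alpha\alpha}[0]$ are linear in the $q$'s.

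The first step is to compute the brackets involving $J^{\alpha\alpha}[0]$. Since $\{q_i^\gamma,u_j^{\delta\pm}\}=\pm\delta_{ij}\delta^{\gamma\delta}u_j^{\delta\pm}$, and $L^{\delta\pm}$ is homogeneous of degree $\pm1$ in $u^{\delta+1,\pm}$ while the rational factors commute with all $q$'s, a path $V^+_{\mu,\nu-1}$ has charge $+1$ at each node $\mu,\dots,\nu$ (counted mod $\ell$ with multiplicity). Therefore $\sum_i(q_i^\alpha-q_i^{\alpha-1})$ measures the number of times the path hits $\alpha$ minus the number of times it hits $\alpha-1$. This is $\delta^{\alpha\mu}-\delta^{\alpha,\nu+1}$, which is exactly what the right-hand side demands. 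The relation $\{J^{\alpha\alpha}[0],J^{\beta\beta}[0]\}=0$ is immediate.

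The second step is the main computation: brackets of two paths of the same sign, $\{V^+_{\alpha,\nu},V^+_{\mu,\rho}\}$. I will write $V^+_{\alpha,\nu}=\operatorname{Tr}(E^{\alpha}_{u}\,L^{\alpha+}\cdots L^{\nu+})$ with $E^{\alpha}_u=e\,u^{\alpha+}$, a rank-one matrix. For the bracket of $u^{\alpha+}$ with an $L$, I will use the mixed bracket $\{L_1^{\alpha+},u_2^{\beta+1,+}\}$ derived in the proof of centrality, with its $Z$-terms. Expanding by Leibniz, the quadratic $r$-matrix terms between adjacent sites telescope, as in the proof of the Corollary on the total $L$-operator, via $r^\alpha+\bar r_{21}^\alpha-\bar r^\alpha-\underline r^\alpha=0$. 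The leftover boundary terms are killed by $(\underline r^\alpha+\bar r^\alpha)e_2=0$, exactly as in the centrality proof. The only survivors are the $Z$-terms, which occur where one path ends at the node where the other begins. There $Z=\sum_i e_{ii}\otimes e_i^t$ splices the two traces into one trace, $\operatorname{Tr}(\cdots L^{\nu}\,e\,u^{\nu+1}L^{\nu+1}\cdots)$, that is, the concatenated path $V^+_{\alpha,\rho}$. This yields $\delta^{\mu\beta}J^{\alpha\nu}[n+m]-\delta^{\alpha\nu}J^{\mu\beta}[n+m]$. The mode indices add automatically because the unrolled lengths add.

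The third step treats mixed signs, $\{V^+,V^-\}$, which also produces the case $n+m=0$. Here I need the brackets $\{L^{\alpha+}_1,L^{\beta-}_2\}$ and $\{u^{\alpha+},u^{\beta-}\}$, which the $L$-operator Proposition does not give. I expect this to be the main obstacle. I would first try to derive $\{L_1^{\alpha+},L_2^{\beta-}\}$ directly from the GKLO formulas. Only the case $\alpha=\beta$ is nontrivial, since otherwise the bracket comes only from $q$–$P$ cross terms and is again of $r$-matrix type. For $\alpha=\beta$ I would use the partial fraction identity $\sum_i u_i^{\alpha+}u_i^{\alpha-}/(z-q_i^\alpha)$ for $h^\alpha(z)$, together with $\{e^\alpha(z),f^\alpha(w)\}$ from the affine Yangian Proposition. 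Overlapping segments of a $+$-path and a $-$-path then contract to an $h$-type term, leaving a shorter path (the difference of the two lengths) plus the correct sign. When the paths exactly cancel, this gives the Cartan element $J^{\alpha\alpha}[0]$ through $H^\alpha$, with the shift $q_i^{\alpha+\ell}=q_i^\alpha-\gamma$ accounting for the $\gamma$ that distinguishes $\mathfrak{gl}_\ell$ from $\mathfrak{sl}_\ell$. As a fallback, I would establish the $\mathfrak{gl}_\ell$ relations at modes $0,\pm1$ generating the loop algebra, namely $J[0]$, $J^{\alpha\beta}[\pm1]$ or just $E^\alpha,F^\alpha$ together with $J^{\ell-1,0}$-type elements. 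I would then show that every $J[n]$ is an iterated bracket of these by Step 2, so that the full relations follow by the Jacobi identity.
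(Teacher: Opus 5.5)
The paper states this proposition without proof; it only notes that the mode-zero relations appear in Bullimore et al. Your argument therefore has to carry everything, and it stops exactly where the substance is.

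\textbf{Steps 1 and 2.} These are fine in outline, up to bookkeeping. The path $V^+_{\mu,\nu-1}=u^{\mu+}L^{\mu+}\cdots L^{\nu-2,+}e$ is charged at the nodes $\mu,\dots,\nu-1$, so its weight is $\delta^{\alpha\mu}-\delta^{\alpha\nu}$, not $\delta^{\alpha\mu}-\delta^{\alpha,\nu+1}$. In Step 2, the same-node terms $-\delta^{\alpha\beta}ZL_1^{\alpha+}L_2^{\beta+}$ appear at every shared node, not only at splicing points. You must show they cancel against the adjacent-node terms $\delta^{\alpha+1,\beta}L_1^{\alpha+}ZL_2^{\beta+}$ and against the $\{u,u\}$ boundary term, as happens in the centrality proof.

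\textbf{Step 3 (direct route).} This is only announced, and the mechanism you anticipate is not what happens. It is true that $\{u_i^{\alpha+},u_j^{\beta-}\}=0$ for $\alpha\neq\beta$. However, the remaining cross terms are not of $r$-matrix type. For $\ell\geq 3$ one finds
\[
\{(L^{\alpha+})_{ij},(L^{\alpha+1,-})_{kl}\}=-\delta_{jk}\,\frac{(L^{\alpha+})_{ij}\,u_l^{\alpha+2,-}}{(q_l^{\alpha+2}-q_j^{\alpha+1})^2}.
\]
This has a double pole in $q_l^{\alpha+2}$, so it is not of the form $L_1^{\alpha+}XL_2^{\alpha+1,-}$ with $X$ depending on $q^{\alpha+1}$ only. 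In the same-sign case the $\{u,u\}$ term cancels exactly this diagonal piece, and partial fractions then produce $\bar r_{21}^{\alpha+1}$. In the mixed case nothing cancels it. So the telescoping of Step 2 does not transfer, and the picture of overlapping segments contracting to $h$-terms is an unverified guess.

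\textbf{Step 3 (fallback).} As formulated, it does not close the gap.

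Relations at modes $0,\pm1$ already contain the mixed brackets you have not computed. The exception is if you restrict to the Chevalley generators $E^\alpha,F^\alpha$ (note $E^0=J^{\ell 1}[1]$ and $F^0=J^{1\ell}[-1]$), whose mixed bracket is the Corollary $\{E^\alpha,F^\beta\}=\delta^{\alpha\beta}H^\alpha$. With that choice, the traceless part works out. You need Steps 1--2 for both signs, plus an induction on bracket length using that the nilpotent parts $\mathfrak{n}^\pm\subset L(\mathfrak{sl}_\ell)$ are generated by the $e_\alpha$, resp.\ the $f_\alpha$. This genuinely avoids computing $\{L_1^{\alpha+},L_2^{\beta-}\}$. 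Note that ``Jacobi'' alone is not enough; you need this triangular induction.

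The real problem is the central part. Since $[L(\mathfrak{gl}_\ell),L(\mathfrak{gl}_\ell)]=L(\mathfrak{sl}_\ell)$, the central elements $\operatorname{tr}J[n]$, $n\neq 0$, are never iterated brackets. Your claim that every $J[n]$ is an iterated bracket fails precisely for them.
\begin{itemize}
\item Their brackets with traceless elements of the opposite sign do follow, from $\operatorname{tr}J[n]=-\gamma H[n]$ and the centrality proposition, plus the analogous statement for $\operatorname{tr}J[-m]$.
\item The relation $\{\operatorname{tr}J[n],\operatorname{tr}J[-m]\}=0$ for $n,m>0$ does not. It is part of the proposition, yet it is logically independent of your Steps 1--2 and of $\{E^\alpha,F^\beta\}$, so it needs its own argument.
\end{itemize}

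One way to supply it is to show that $\operatorname{tr}J[-m]$ is proportional to $\operatorname{Tr}L^{-m}$. Write $L^{\alpha+}=C^\alpha\operatorname{diag}(u^{\alpha+1,+})$ with $C^\alpha_{ij}=(q_j^{\alpha+1}-q_i^\alpha)^{-1}$. Cauchy inversion of each factor, combined with the Euler class relation for $u_i^{\alpha+}u_i^{\alpha-}$, relates $L^{-1}$ (up to a constant, transposition and diagonal conjugation) to $L^{0-}\cdots L^{\ell-1,-}$. The relation then follows from the involutivity of $\operatorname{Tr}L^k$, $k\in\Z$, which is implied by the quadratic bracket of $L$.
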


\begin{remark}
    It was already clear from \cite[\S 6.6]{bullimore:2015} that $J^{\alpha\beta}[0]$ satisfies the relations of $\mathfrak{gl}_\ell$.
\end{remark}

\begin{lemma}
    We have $\operatorname{tr} J[n] = -\gamma H[n]$, where $\operatorname{tr}$ is the operation of taking the trace of an $\ell \times \ell$ matrix.
\end{lemma}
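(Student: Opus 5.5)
The plan is to prove the identity by telescoping. The engine is a commutation relation between each one-site $L$-operator and the diagonal matrices of the variables $q^\alpha$. Set $D^\alpha \coloneq \operatorname{diag}(q_1^\alpha,\dots,q_N^\alpha)$ for every $\alpha \in \Z$. The cyclic relations of $\mathfrak{A}_{N,\ell}$ give $D^{\alpha+\ell} = D^\alpha - \gamma$, while $L^{\alpha+\ell,\pm} = L^{\alpha\pm}$. Directly from the definition $L_{ij}^{\alpha\pm} = u_j^{\alpha+1,\pm}/(q_j^{\alpha+1}-q_i^\alpha)$ we have
\begin{equation*}
    D^\alpha L^{\alpha\pm} - L^{\alpha\pm} D^{\alpha+1} = -e\, u^{\alpha+1,\pm},
\end{equation*}
that is, the commutator-type expression is the rank-one matrix built from $e = (1,\dots,1)^t$ and the row vector $u^{\alpha+1,\pm}$. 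This converts the vector-sandwich expressions $V^\pm$ into traces.

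First take $n>0$. Define the cyclic products $M_\alpha \coloneq L^{\alpha+}L^{\alpha+1,+}\cdots L^{\alpha+n\ell-1,+}$, so that $M_0 = M_\ell = L^n$. Write $M_\alpha = L^{\alpha+}X_\alpha$ with $X_\alpha = L^{\alpha+1,+}\cdots L^{\alpha+n\ell-1,+}$. Then periodicity of the $L$-operators gives $M_{\alpha+1} = X_\alpha L^{\alpha+}$. Using cyclicity of the trace together with the displayed relation, we get
\begin{equation*}
    \operatorname{Tr}(D^\alpha M_\alpha) - \operatorname{Tr}(D^{\alpha+1} M_{\alpha+1})
    = \operatorname{Tr}\big((D^\alpha L^{\alpha+} - L^{\alpha+}D^{\alpha+1})X_\alpha\big)
    = -u^{\alpha+1,+}X_\alpha e
    = -V^+_{\alpha+1,\alpha+n\ell} .
\end{equation*}
The last quantity is exactly $-J^{\alpha+1,\alpha+1}[n]$. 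Summing over $\alpha = 0,\dots,\ell-1$, the left-hand side telescopes to $\operatorname{Tr}(D^0 L^n) - \operatorname{Tr}(D^\ell L^n)$. Since $D^\ell = D^0 - \gamma$, this equals $\gamma \operatorname{Tr} L^n = \gamma H[n]$. The right-hand side is $-\operatorname{tr} J[n]$, because $\alpha+1$ runs over all of $\Z/\ell\Z$. Hence $\operatorname{tr}J[n] = -\gamma H[n]$.

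For $n<0$ the argument is identical with the $L^{\alpha-}$ and $u^{\alpha-}$ in place of the plus versions. The displayed relation holds for both signs, and the telescoping again produces $\gamma$ times the trace of the corresponding product. The only point to check is that the diagonal entries $J^{\alpha\alpha}[-n] = V^-_{\alpha,\alpha+n\ell-1}$ match the sandwiches produced by the telescoping, with the same index bookkeeping as in the positive case. Here $H[-n]$ is read as the trace of the product of the $L^{\alpha-}$. For $n=0$ the claim is a direct computation:
\begin{equation*}
    \sum_{\alpha=0}^{\ell-1}\sum_i (q_i^\alpha - q_i^{\alpha-1}) = \sum_i (q_i^{\ell-1} - q_i^{-1}) = -N\gamma = -\gamma \operatorname{Tr} L^0 .
\end{equation*}

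The main obstacle is bookkeeping rather than any substantive identity. One must make sure that the shift $D^{\alpha+n\ell} = D^\alpha - n\gamma$ never enters inside the telescoping steps, and enters only once, at the wrap-around from $\alpha = \ell-1$ back to $0$. This is guaranteed by always pairing $D^\alpha$ with $D^{\alpha+1}$ for $\alpha$ in the fundamental range $0,\dots,\ell-1$. One must also check that the index ranges in $V^\pm$ match the definition of $J^{\alpha\alpha}[\pm n]$.
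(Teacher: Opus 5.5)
Your proof is correct and uses the same telescoping argument the paper invokes. Your rank-one identity $D^\alpha L^{\alpha\pm}-L^{\alpha\pm}D^{\alpha+1}=-e\,u^{\alpha+1,\pm}$ is exactly the substitution $u_k^{\rho+}=L_{lk}^{\rho-1}(q_k^\rho-q_l^{\rho-1})$ that drives Lemma~\ref{lemma:ratLOp}, and the single wrap-around $q_i^\ell=q_i^0-\gamma$ supplies the factor $\gamma$; for $n=1$ the claim is literally the trace of that lemma. The only caveat concerns $n<0$, which the lemma does not need: $H[n]$ is $\operatorname{Tr}L^n$ with $L$ the product of the $L^{\alpha+}$, so identifying your quantity with $\operatorname{Tr}L^{n}$ needs the extra fact, true by the Cauchy inverse formula but unproved in your write-up, that $(L^{-1})^t$ is diagonally conjugate to $L^{0-}\cdots L^{\ell-1,-}$.
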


\begin{remark}
    In particular, the generators $J^{\alpha\beta}[n]$ define a representation of $L(\mathfrak{sl}_\ell)$ when $\gamma = 0$, which again exhibits $\gamma$ as the charge under the center of $\mathfrak{gl}_\ell$.
\end{remark}

\begin{proof}
    This follows from a telescopic argument similar to lemma \ref{lemma:ratLOp}.
\end{proof}

From this, we see that the Hamiltonians $H[n]$ are in the center of the subalgebra of $\mathfrak{A}_{N,\ell}$ generated by $J^{\alpha\beta}[n]$ and are thus part of a larger Poisson-commutative subalgebra given by the $N\ell$ algebraically independent Hamiltonians
\begin{equation}
    J^{\alpha\alpha}[n], \qquad n=1,\dots,N, \quad \alpha=1,\dots,\ell.
\end{equation}
Since $\mathfrak{A}_{N,\ell}$ has $2N\ell$ algebraically independent generators as a $\C\llbracket \gamma \rrbracket$-algebra, we conclude that $\mathfrak{A}_{N,\ell}$ describes an integrable model in the sense of the Liouville theorem. In fact, $\mathfrak{A}_{N,\ell}$ describes a \emph{superintegrable} model, since the central Hamiltonians $H[n]$ do not just commute among each other, but also commute with the loop algebra $L(\mathfrak{gl}_\ell)$.

\subsection{Equations of motion}

In this section, we show that the superintegrable model defined by the Poisson algebra $\mathfrak{A}_{N,\ell}$ and the Hamiltonians $J^{\alpha\alpha}[n]$ is the spin Ruijsenaars--Schneider model introduced by Krichever and Zabrodin \cite{krichever:1995}. To show the identification, we introduce
\begin{equation}
    a^\alpha \coloneq L^{0+} \cdots L^{\alpha-2,+} e, \qquad c^\alpha \coloneq u^{\alpha+} L^{\alpha+} \cdots L^{\ell-1,+},
\end{equation}
for $\alpha=1,\dots,\ell$. Notice that $a^1 = e$, in other words, we have the constraints $a_i^1 = 1$ for $i=1,\dots,N$. This should be contrasted with \cite{arutyunov:1998,chalykh:2020,arutyunov:2019,fairon:2026}, where the spin vectors satisfy the alternative constraint $\sum_\rho a_i^\rho = 1$. However, they differ only by an overall rescaling of the spin variables.

\begin{lemma}\label{lemma:ratLOp}
    The total $L$-operator can be written as
    \begin{equation}
        L_{ij} = -\frac{\sum_{\rho=1}^\ell a_i^\rho c_j^\rho}{q_i^0-q_j^\ell}.
\end{equation}
\end{lemma}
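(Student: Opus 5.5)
The approach is a telescoping argument built on a displacement (Sylvester-type) identity for the Cauchy-like matrices that make up the one-site $L$-operators. Write $D^\alpha \coloneq \operatorname{diag}(q_1^\alpha,\dots,q_N^\alpha)$ and $U^{\alpha} \coloneq \operatorname{diag}(u_1^{\alpha+},\dots,u_N^{\alpha+})$. Then $L^{\alpha+} = C^\alpha U^{\alpha+1}$, where $C^\alpha_{ij} = 1/(q_j^{\alpha+1}-q_i^\alpha)$.

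The first step is the observation that
\begin{equation*}
    (D^\alpha C^\alpha - C^\alpha D^{\alpha+1})_{ij} = \frac{q_i^\alpha - q_j^{\alpha+1}}{q_j^{\alpha+1}-q_i^\alpha} = -1,
\end{equation*}
so $D^\alpha C^\alpha - C^\alpha D^{\alpha+1} = -e e^t$. Since $U^{\alpha+1}$ is diagonal, it commutes with $D^{\alpha+1}$. Multiplying on the right by $U^{\alpha+1}$ therefore gives the one-site identity
\begin{equation*}
    D^\alpha L^{\alpha+} - L^{\alpha+} D^{\alpha+1} = -e\, u^{\alpha+1,+}.
\end{equation*}
Here $u^{\alpha+1,+}$ is the row vector of monopole operators. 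This identity holds for every $\alpha \in \Z$, using the indexing of $\mathfrak{A}_{N,\ell}$ where $q^{\alpha}$ for $\alpha \geq \ell$ is defined through the cyclic relations.

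Next I telescope along the product $L = L^{0+}\cdots L^{\ell-1,+}$:
\begin{equation*}
    D^0 L - L D^\ell = \sum_{\alpha=0}^{\ell-1} L^{0+}\cdots L^{\alpha-1,+}\,\big(D^\alpha L^{\alpha+} - L^{\alpha+} D^{\alpha+1}\big)\,L^{\alpha+1,+}\cdots L^{\ell-1,+}.
\end{equation*}
All intermediate terms $L^{0+}\cdots L^{\alpha-1,+} D^{\alpha} L^{\alpha+}\cdots$ cancel pairwise. Substituting the one-site identity, the $\alpha$-th summand becomes
\begin{equation*}
    -(L^{0+}\cdots L^{\alpha-1,+}e)\,(u^{\alpha+1,+}L^{\alpha+1,+}\cdots L^{\ell-1,+}).
\end{equation*}
Relabelling $\rho = \alpha+1 \in \{1,\dots,\ell\}$, the two factors are exactly $a^\rho$ and $c^\rho$ as defined above. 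For $\rho = 1$ the empty product gives $a^1 = e$, and for $\rho = \ell$ it gives $c^\ell = u^{\ell+}$. Hence $(q_i^0 - q_j^\ell) L_{ij} = -\sum_{\rho=1}^\ell a_i^\rho c_j^\rho$.

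The last step is to divide by $q_i^0 - q_j^\ell$. This element is invertible in $\mathfrak{A}_{N,\ell}$ because we localized at all differences $q_i^\alpha - q_j^\beta$ with $(i,\alpha)\neq(j,\beta)$, and $(i,0)\neq(j,\ell)$ always holds. Note that it equals $q_i^0-q_j^0+\gamma$, so it is invertible even for $i=j$.

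I expect no serious obstacle; the computation is mechanical. The only point needing care is the bookkeeping at the boundary of the telescoping sum: the right end produces $D^\ell$ rather than $D^0$, which is what yields the $\gamma$-shifted denominator $q_i^0 - q_j^\ell$. I also need to check that the index conventions in $a^\alpha$ and $c^\alpha$ match the range $\rho=1,\dots,\ell$ exactly.
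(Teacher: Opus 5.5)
Your proposal is correct and is essentially the paper's proof. The paper substitutes $u_k^{\rho+} = L^{\rho-1}_{lk}(q_k^\rho - q_l^{\rho-1})$ into $\sum_\rho a_i^\rho c_j^\rho$; this is the componentwise form of your identity $D^\alpha L^{\alpha+} - L^{\alpha+}D^{\alpha+1} = -e\,u^{\alpha+1,+}$. It then telescopes over $\rho$ to obtain $L_{ij}(q_j^\ell - q_i^0)$, which is your expansion of $D^0L - LD^\ell$ read in the opposite direction.
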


\begin{proof}
Indeed,
\begin{equation*}
    \begin{aligned}
        \sum_{\rho=1}^\ell &a_i^\rho c_j^\rho
        = \sum_{\rho=1}^\ell \sum_{k,l=1}^N (L^0 \cdots L^{\rho-2})_{il} u_k^{\rho+} (L^\rho \cdots L^{\ell-1})_{kj} \\
        ={}& \sum_{\rho=1}^\ell \sum_{k,l=1}^N (L^0 \cdots L^{\rho-2})_{il} L_{lk}^{\rho-1} (q_k^\rho-q_l^{\rho-1}) (L^\rho \cdots L^{\ell-1})_{kj} \\
        ={}& \sum_{\rho=1}^\ell \sum_{k=1}^N (L^0 \cdots L^{\rho-1})_{ik} (L^\rho \cdots L^{\ell-1})_{kj} q_k^\rho - \sum_{\rho=1}^\ell \sum_{l=1}^N (L^0 \cdots L^{\rho-2})_{il} (L^{\rho-1} \cdots L^{\ell-1})_{lj} q_l^{\rho-1} \\
        ={}& L_{ij} (q_j^\ell-q_i^0),
    \end{aligned}
\end{equation*}
which yields the result.
\end{proof}

\begin{theorem}
    The time evolution under the Hamiltonian $H \coloneq \gamma H[1]$ reproduces the equations of motion \eqref{eq:eom} with the identification $x_i = q_i^0$ and the rational potential $V(z) = \frac{1}{z}-\frac{1}{z+\gamma}$.
\end{theorem}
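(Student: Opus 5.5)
Using Lemma~\ref{lemma:ratLOp}, I will rewrite the Hamiltonian in terms of $x_i=q_i^0$ and the spins. On the diagonal, the cyclic relation $q_i^\ell = q_i^0-\gamma$ gives $q_i^0-q_i^\ell=\gamma$. Hence $L_{ii} = -f_{ii}/\gamma$ and
\begin{equation*}
    H=\gamma\operatorname{Tr}L = -\sum_i f_{ii}.
\end{equation*}
The plan is to compute $\dot x_i=\{H,x_i\}$, $\dot a_i^\alpha$ and $\dot c_i^\alpha$, and then differentiate once more to get $\ddot x_i$. Throughout I will use the rewriting $L_{ij} = -f_{ij}/(x_i-x_j+\gamma)$, valid since $q_j^\ell=q_j^0-\gamma$. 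The sign convention for time evolution ($\dot X=\{H,X\}$ or its opposite) is fixed so that the $\dot a$ equation comes out right; the other two equations are then consistency checks.

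\textbf{First derivatives.} I will not work with the raw monopole variables. Instead I will compute brackets of $x_i$, $a_i^\alpha$, $c_i^\alpha$ with $H$ by using the one-site $L$-operator algebra (the Proposition above) and the bracket $\{L_1^{\alpha+},u_2^{\beta+1,+}\}$ derived in the proof of centrality. Only the $+$ sector enters, so $\{u^+,u^-\}$ never appears. The bracket $\{x_i,L^{\alpha+}\}$ is immediate, because $q^0$ only hits $u^{0+}$, i.e.\ only $L^{\ell-1,+}$ through the cyclic identification. This gives
\begin{equation*}
    \dot x_i \propto \gamma L_{ii} = -f_{ii},
\end{equation*}
which is the standard relation $\dot x_i=f_{ii}$ up to sign. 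For $\dot a^\alpha$ and $\dot c^\alpha$ I will write
\begin{equation*}
    \operatorname{Tr}L=\operatorname{Tr}_1 (L^{0+}\cdots L^{\alpha-2,+})(L^{\alpha-1,+}\cdots L^{\ell-1,+})
\end{equation*}
and apply the $r$-matrix brackets factor by factor, as in the proof of centrality. I expect the terms to telescope, using the identity $r^\alpha+\bar r^\alpha_{21}-\bar r^\alpha-\underline r^\alpha=0$ and $(\underline r^\alpha+\bar r^\alpha)e=0$, leaving only contributions at $q^0$ and $q^\ell$. The target is
\begin{equation*}
    \dot a^\alpha = [\text{diag-type } r^0 \text{ action}]\,a^\alpha,
\end{equation*}
and after inserting $L_{ij}$ the coefficients should become $f_{ij}\big(\tfrac1{x_i-x_j}-\tfrac1{x_i-x_j+\gamma}\big)=f_{ij}V(x_i-x_j)$. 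Similarly, the target for $c$ is that $\dot c^\alpha$ equals $c^\alpha$ acted on by a matrix of the form $f V$ from the right. Consistency check: $a^1=e$ must be preserved, and the claimed $\dot a^1$ indeed vanishes since $a_i^1-a_j^1=0$.

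\textbf{Second derivative and main obstacle.} Differentiating $\dot x_i=-f_{ii}$ gives $\ddot x_i=-\sum_\rho(\dot a_i^\rho c_i^\rho+a_i^\rho \dot c_i^\rho)$. Substituting the first-order equations, the terms with $a_j$ and $c_j$ combine into $\sum_{j\ne i}f_{ij}f_{ji}(V(x_i-x_j)-V(x_j-x_i))$. This step is routine. I expect the main obstacle to be the first-derivative computation for $a^\alpha,c^\alpha$ for intermediate $\alpha$. There the bracket of $L^{\beta+}$ with the partial products produces boundary terms at $\beta=\alpha-1$ (from $\bar r$ and $Z$) that must cancel or reassemble into the $1/(x_i-x_j)$ and $1/(x_i-x_j+\gamma)$ pieces. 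The $1/(x_i-x_j+\gamma)$ pieces arise from the cyclic shift $q^\ell=q^0-\gamma$. If the telescoping becomes unwieldy, my fallback is direct coordinates: write $a_i^\alpha$ and $c_j^\alpha$ explicitly as rational functions of $q^\beta,P^\beta$ and use partial fractions in $q^\beta_k$. This is the same mechanism as in the proof of Lemma~\ref{lemma:ratLOp}.
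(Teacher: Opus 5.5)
Your overall route is the paper's. The paper obtains the first-order flows from the one-site $L$-operator algebra,
\[
\dot a_i^\alpha=\gamma\sum_{j\neq i}\frac{(a_i^\alpha-a_j^\alpha)L_{ij}}{q_i^0-q_j^0},\qquad
\dot c_i^\alpha=-\gamma\sum_{j\neq i}\frac{c_i^\alpha L_{ij}+c_j^\alpha L_{ji}}{q_i^0-q_j^0}.
\]
It then converts these using Lemma~\ref{lemma:ratLOp} and the identity $\gamma/\big((x_i-x_j)(x_i-x_j+\gamma)\big)=V(x_i-x_j)$. These two formulas are the concrete targets your telescoping should reach.

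The one structural difference is how you get $\ddot x_i$. The paper evaluates the double bracket $\{H,\{H,q_i^0\}\}=\gamma^2\sum_{j\neq i}\frac{2L_{ij}L_{ji}}{q_i^0-q_j^0}$, which needs a second partial-fraction identity. You instead use the chain rule through $\dot x_i$ and the spin equations. That is legitimate and slightly more economical, since it shows the $\ddot x$ equation follows formally from the $\dot a,\dot c$ equations.

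As written, however, the sign of $\dot x_i$ is wrong, and this breaks the $\ddot x$ step. The convention that makes the $\dot a$ equation come out right is $\dot X=\{H,X\}$, the paper's convention. Since $q_i^0$ only sees $u_i^{\ell,+}$ through $P_i^\ell=P_i^0$, you have $\{q_i^0,L\}=L\,e_{ii}$. Hence $\dot x_i=\{H,q_i^0\}=-\gamma L_{ii}=+f_{ii}$, not $-f_{ii}$. You can check this at $\ell=1$: there $H=-\sum_i u_i^{0+}$, and $\{H,q_i^0\}=u_i^{0+}=c_i^1=f_{ii}$.

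With the correct sign, writing $V_{ij}=V(x_i-x_j)$,
\[
\ddot x_i=\sum_\rho(\dot a_i^\rho c_i^\rho+a_i^\rho\dot c_i^\rho)=\sum_{j\neq i}\big[-(f_{ii}-f_{ji})f_{ij}V_{ij}+f_{ii}f_{ij}V_{ij}-f_{ij}f_{ji}V_{ji}\big]=\sum_{j\neq i}f_{ij}f_{ji}(V_{ij}-V_{ji}).
\]
With your $\ddot x_i=-\sum_\rho(\cdots)$ you would get exactly the negative of the required right-hand side. So the combination you claim does not hold as stated. The $\ddot x$ equation is only a genuine consistency check if $\dot x_i$ is computed in the same convention that fixes $\dot a$. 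Once that sign is corrected, your argument goes through.
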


\begin{proof}
    Using the $L$-operator algebra, we find
    \begin{align*}
        \dot x_i &= \{ H,q_i^0 \} = -\gamma L_{ii},
        \\[7pt]
        \ddot x_i
        &= \{ H, \{ H,q_i^0 \} \} = \gamma^2 \sum_{j(\neq i)} \frac{2}{q_i^0-q_j^0} L_{ij} L_{ji}, \\
        \dot a_i^\alpha
        &= \{ H,a_i^\alpha \} = \gamma \sum_{j(\neq i)} \frac{1}{q_i^0-q_j^0} (a_i^\alpha-a_j^\alpha) L_{ij}, \\
        \dot c_i^\alpha
        &= \{ H,c_i^\alpha \} = -\gamma \sum_{j(\neq i)} \frac{1}{q_i^0-q_j^0} (c_i^\alpha L_{ij} + c_j^\alpha L_{ji}).
    \end{align*}
    We then use lemma \ref{lemma:ratLOp} and the identities
    \begin{align*}
        \gamma \frac{1}{q_i^0-q_j^0} \frac{1}{q_i^0 - q_j^\ell} &= V(x_i-x_j), \\
        \gamma^2 \frac{2}{q_i^0-q_j^0} \frac{1}{q_i^0 - q_j^\ell} \frac{1}{q_j^0 - q_i^\ell} &= V(x_i-x_j)-V(x_j-x_i)
    \end{align*}
    to arrive at the desired equations of motion.
\end{proof}

\subsection{Poisson bracket of the spin variables}

\begin{prop}
    The Poisson brackets of the spins can be written as
    \begin{align}
        \{ q_i^0, a_j^\alpha \} ={}& 0, \qquad \{ q_i^0,c_j^\alpha \} = \delta_{ij} c_j^\alpha, \\[8pt]
        \{ a_i^\alpha,a_j^\beta \} ={}& \frac{1-\delta_{ij}}{q_i^0-q_j^0} (a_j^\alpha-a_i^\alpha) (a_i^\beta-a_j^\beta),
        \\[4pt]
        \{ a_i^\alpha,c_j^\beta \} ={}& \frac{1-\delta_{ij}}{q_i^0-q_j^0} (a_j^\alpha-a_i^\alpha) c_j^\beta - \delta^{1\beta} L_{ij} a_i^\alpha + \delta^{\alpha\beta} L_{ij},
        \\[4pt]
        \{ c_i^\alpha,c_j^\beta \} ={}& \frac{1-\delta_{ij}}{q_i^0-q_j^0} (c_i^\alpha c_j^\beta + c_j^\alpha c_i^\beta) - \delta^{1\alpha} L_{ji} c_j^\beta + \delta^{1\beta} L_{ij} c_i^\alpha.
    \end{align}
\end{prop}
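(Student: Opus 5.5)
We will derive all brackets from the one-site $L$-operator algebra (the proposition with the matrices $r^\alpha,\bar r^\alpha,\underline r^\alpha$) together with the elementary brackets between $q_i^0$ and $u_j^{\alpha\pm}$. Write $a^\alpha = L^{0+}\cdots L^{\alpha-2,+} e$ and $c^\alpha = u^{\alpha+} L^{\alpha+}\cdots L^{\ell-1,+}$, and note $u^{\alpha+}_k = L^{\alpha-1,+}_{lk}(q_k^\alpha-q_l^{\alpha-1})$ for any $l$. This is the same device used in the proof of Lemma~\ref{lemma:ratLOp}, and it lets $c^\alpha$ be treated as a product of $L$-operators up to diagonal $q$-factors.

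\textbf{Brackets with $q_i^0$.} Among the entries of $L^{\alpha+}$, only $L^{\ell-1,+}$ contains $P^{\ell}=P^0$. Since $\{q_i^0,P_j^0\}=\delta_{ij}P_j^0$, we get $\{q_i^0, L^{\ell-1,+}_{kj}\}=\delta_{ij}L^{\ell-1,+}_{kj}$, and all other factors commute with $q_i^0$. The vector $a^\alpha$ (for $\alpha\le \ell$) does not contain $L^{\ell-1,+}$, so $\{q_i^0,a_j^\alpha\}=0$. Every $c^\alpha$ ends in $L^{\ell-1,+}$ (or, for $\alpha=\ell$, equals $u^{\ell+}=u^{0+}$), which gives $\{q_i^0,c_j^\alpha\}=\delta_{ij}c_j^\alpha$.

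\textbf{Spin--spin brackets.} For $\{a_1^\alpha,a_2^\beta\}$, expand by the Leibniz rule over the factors of $L^{0+}\cdots L^{\alpha-2,+}$ and $L^{0+}\cdots L^{\beta-2,+}$, and insert the one-site algebra. The diagonal terms ($\mu=\nu$) contribute $r^\mu$ on the left and $\underline r^{\mu+1}$ on the right. The adjacent terms ($\nu=\mu\pm1$) contribute $\bar r$ sandwiched between consecutive factors. As in the corollary for the total $L$-operator, the identity $r^\alpha+\bar r^\alpha_{21}-\bar r^\alpha-\underline r^\alpha=0$ telescopes all interior contributions away. Only boundary terms survive:
\begin{itemize}
\item at the left end, $r^0$ (respectively $\bar r^0$ pieces) acting on the left, which produces the $\frac{1}{q_i^0-q_j^0}$ structure;
\item at the right end, terms acting on $e\otimes e$, which vanish by $(\underline r^\mu+\bar r^\mu)e_2=0$ and its analogues.
\end{itemize}
Evaluating $r^0(a^\alpha\otimes a^\beta)$ componentwise gives $\frac{1-\delta_{ij}}{q_i^0-q_j^0}(a_j^\alpha-a_i^\alpha)(a_i^\beta-a_j^\beta)$.

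For brackets involving $c$, we additionally need $\{L_1^{\alpha+},u_2^{\beta+1,+}\}$, already derived in the proof of the centrality proposition; it involves $Z=\sum_i e_{ii}\otimes e_i^t$. In $\{a^\alpha,c^\beta\}$, the $Z$-terms appear exactly when the factor chains of $a^\alpha$ and $c^\beta$ meet. They produce a full product $L^{0+}\cdots L^{\ell-1,+}=L$ contracted against $e$:
\begin{itemize}
\item the term $\delta^{\alpha\beta}L_{ij}$ arises when $a^\alpha$ ends just before $u^{\beta+}$;
\item the term $-\delta^{1\beta}L_{ij}a_i^\alpha$ arises from the wrap-around, via $u^{\ell+}=u^{0+}$ and $L^{\ell-1,+}$ adjacent to $L^{0+}$ cyclically.
\end{itemize}
The bracket $\{c^\alpha,c^\beta\}$ is handled the same way, with boundary terms at the right end (index $\ell-1$ adjacent to $0$) giving the $\delta^{1\alpha},\delta^{1\beta}$ terms.

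\textbf{Main obstacle.} The hardest part is the careful bookkeeping of boundary and wrap-around terms, including the $\ell=1,2$ degenerate cases where several adjacencies coincide. A second difficulty is converting the resulting matrix expressions into the stated componentwise form using Lemma~\ref{lemma:ratLOp}. To keep this manageable, we will first treat generic $\alpha\neq\beta$ with $1<\alpha,\beta$, and then check the coincident and boundary cases separately. We will also sanity-check the final brackets by verifying antisymmetry of $\{c_i^\alpha,c_j^\beta\}$ and consistency with $a_i^1=1$, which forces $\{a_i^1,\cdot\}=0$.
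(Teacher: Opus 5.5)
Your proposal follows the paper's proof essentially step for step: Leibniz expansion over the partial monodromies, insertion of the one-site algebra together with $\{L_1^{\alpha+},u_2^{\beta+1,+}\}$, telescoping of interior cuts, and the boundary identities $\underline r e_1e_2=0$, $(\underline r+\bar r)e_2=0$, $(\underline r-\bar r_{21})e_1=0$, $re_1=re_2=0$, which give the same matrix forms as in the paper. One bookkeeping correction: in $\{a^\alpha,c^\beta\}$ the term $-\delta^{1\beta}L_{ij}a_i^\alpha$ comes from the $Z$-part of $\{L_1^{0+},u_2^{1+}\}$, where both chains start at the same node, whereas the cyclic adjacency of $L^{\ell-1,+}$ (or of $u^{\ell+}=u^{0+}$) with $L^{0+}$ produces $-c_2^\beta\bar r^0a_1^\alpha$, i.e. the piece $\frac{1-\delta_{ij}}{q_i^0-q_j^0}(a_j^\alpha-a_i^\alpha)c_j^\beta$.
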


\begin{proof}
    Let us consider the Poisson bracket $\{ a_1^\alpha,a_2^\beta \}$ as an example, where $1$ and $2$ label auxiliary spaces. Introduce the partial monodromies $L^{\alpha,\beta} \coloneq L^\alpha \cdots L^\beta$. Then
    \begin{equation*}
        \begin{aligned}
            \{ &a_1^\alpha,a_2^\beta \} = \sum_{\mu=0}^{\alpha-2} \sum_{\nu=0}^{\beta-2} L_1^{0,\mu-1} L_2^{0,\nu-1} \{ L_1^\mu, L_2^\nu \} L_1^{\mu+1,\alpha-2} L_2^{\nu+1,\beta-2} e_1 e_2 \\
            ={}& \sum_{\mu=0}^{\alpha-2} \sum_{\nu=0}^{\beta-2} \delta^{\mu\nu} L_1^{0,\mu-1} L_2^{0,\nu-1} r^\mu L_1^{\mu,\alpha-2} L_2^{\nu,\beta-2} e_1 e_2 - \sum_{\mu=1}^{\alpha-1} \sum_{\nu=1}^{\beta-1} \delta^{\mu\nu} L_1^{0,\mu-1} L_2^{0,\nu-1} \underline r^\mu L_1^{\mu,\alpha-2} L_2^{\nu,\beta-2} e_1 e_2 \\
            &+ \sum_{\mu=1}^{\alpha-1} \sum_{\nu=1}^{\beta-2} \delta^{\mu\nu} L_1^{0,\mu-1} L_2^{0,\nu-1} \bar r_{21}^\mu L_1^{\mu,\alpha-2} L_2^{\nu,\beta-2} e_1 e_2 - \sum_{\mu=1}^{\alpha-2} \sum_{\nu=1}^{\beta-1} \delta^{\mu\nu} L_1^{0,\mu-1} L_2^{0,\nu-1} \bar r^\mu L_1^{\mu,\alpha-2} L_2^{\nu,\beta-2} e_1 e_2 \\
            ={}& \delta_{\alpha>1,\beta>1} r^0 a_1^\alpha a_2^\beta = r^0 a_1^\alpha a_2^\beta,
        \end{aligned}
    \end{equation*}
    where we used the identities
    \begin{equation*}
        \underline r^\alpha e_1 e_2 = 0, \qquad (\underline r^\alpha + \bar r^\alpha) e_2 = 0, \qquad (\underline r^\alpha - \bar r_{21}^\alpha) e_1 = 0, \qquad r e_1 = 0, \qquad r e_2 = 0.
    \end{equation*}
    The other brackets follow similarly. In total, we arrive at
    \begin{align*}
        \{ a_1^\alpha,a_2^\beta \} ={}& r^0 a_1^\alpha a_2^\beta \\
        \{ a_1^\alpha,c_2^\beta \} ={}& {-c_2^\beta} \bar r^0 a_1^\alpha - \delta^{1\beta} Z a_1^\alpha L_2 + \delta^{\alpha\beta} Z e_1 L_2 \\
        \{ c_1^\alpha,c_2^\beta \} ={}& {-c_1^\alpha} c_2^\beta \underline r^0 + \delta^{1\beta} c_1^\alpha Z L_2 - \delta^{1\alpha} c_2^\beta Z_{21} L_1.
    \end{align*}
    where $Z = \sum_{i=1}^N e_{ii} \otimes e_i^t$. In components, these are the claimed Poisson brackets.
\end{proof}

\begin{remark}
    Due to the constraint $a_i^1 = 1$, the ``physical'' spin degrees of freedom of the rational spin RS model are actually restricted to $a^\alpha,c^\alpha$ for $\alpha > 1$. We note that they form a quadratic Poisson algebra.
\end{remark}

\begin{corollary}
The rescaling $\hat a_i^\alpha \coloneq a_i^\alpha/\sum_\rho a_i^\rho, \hat c_i^\alpha \coloneq c_i^\alpha \sum_\rho a_i^\rho, \hat L_{ij} \coloneq \frac{\sum_\rho \hat a_i^\rho \hat c_j^\rho}{q_i^0-q_j^0+\gamma}$ reproduces the Poisson brackets from \cite{arutyunov:1998}:
\begin{align}
    \{ q_i^0, \hat a_j^\alpha \} &= 0, \qquad \{ q_i^0, \hat c_j^\alpha \} = \delta_{ij} \hat c_j^\alpha, \\[8pt]
    \{ \hat a_i^\alpha, \hat a_j^\beta \} &= \frac{1-\delta_{ij}}{q_i^0-q_j^0} (\hat a_j^\alpha-\hat a_i^\alpha) (\hat a_i^\beta-\hat a_j^\beta),
    \\[4pt]
    \{ \hat a_i^\alpha, \hat c_j^\beta \} &= \frac{1-\delta_{ij}}{q_i^0-q_j^0} (\hat a_j^\alpha-\hat a_i^\alpha) \hat c_j^\beta + \hat a_i^\alpha \hat L_{ij} - \delta^{\alpha\beta} \hat L_{ij},
    \\[4pt]
    \{ \hat c_i^\alpha, \hat c_j^\beta \} &= \frac{1-\delta_{ij}}{q_i^0-q_j^0} (\hat c_i^\alpha \hat c_j^\beta + \hat c_j^\alpha \hat c_i^\beta) - \hat c_i^\alpha \hat L_{ij} + \hat c_j^\beta \hat L_{ji}.
\end{align}
\end{corollary}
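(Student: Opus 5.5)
The plan is to treat the rescaling as a change of variables and compute its brackets with the Leibniz rule from the preceding proposition. Set $s_i \coloneq \sum_{\rho=1}^\ell a_i^\rho$, so that $\hat a_i^\alpha = a_i^\alpha/s_i$ and $\hat c_i^\alpha = c_i^\alpha s_i$. Two simplifications come first. Since $\sum_\rho \hat a_i^\rho \hat c_j^\rho = (s_j/s_i)\sum_\rho a_i^\rho c_j^\rho$, Lemma~\ref{lemma:ratLOp} gives $\hat L_{ij} = -(s_j/s_i) L_{ij}$; with $x = q^0$ the denominator $q_i^0 - q_j^\ell$ equals $q_i^0-q_j^0+\gamma$. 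Since $\{q_i^0,a_j^\alpha\}=0$, we have $\{q_i^0,s_j\}=0$, so the brackets of $q_i^0$ with the hatted spins follow at once.

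For $\{\hat a_i^\alpha,\hat a_j^\beta\}$, sum the preceding proposition over $\alpha$ and/or $\beta$ to get $\{a_i^\alpha,s_j\}$, $\{s_i,a_j^\beta\}$ and $\{s_i,s_j\}$, each of the form $\frac{1-\delta_{ij}}{q_i^0-q_j^0}(\cdot)(\cdot)$. Expanding the quotient with the Leibniz rule, all terms should collapse into
\[
\frac{1-\delta_{ij}}{q_i^0-q_j^0}\,(a_j^\alpha s_i - a_i^\alpha s_j)(a_i^\beta s_j - a_j^\beta s_i)/(s_i^2 s_j^2).
\]
This is exactly $\frac{1-\delta_{ij}}{q_i^0-q_j^0}(\hat a_j^\alpha-\hat a_i^\alpha)(\hat a_i^\beta-\hat a_j^\beta)$. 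The identity holds because the bracket is a bilinear form of rank one in the tensor structure $r^0 a_1 a_2$, and the projective rescaling preserves it.

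For the mixed bracket, write
\[
\{\hat a_i^\alpha,\hat c_j^\beta\} = \frac{s_j}{s_i}\{a_i^\alpha,c_j^\beta\} + \frac{c_j^\beta}{s_i}\{a_i^\alpha,s_j\} - \frac{a_i^\alpha s_j}{s_i^2}\{s_i,c_j^\beta\} - \frac{a_i^\alpha c_j^\beta}{s_i^2}\{s_i,s_j\}.
\]
Each bracket on the right splits into a ``$\frac{1}{q_i^0-q_j^0}$'' part and an ``$L$'' part, and the two parts are treated separately.
\begin{itemize}
\item The $\frac{1}{q^0_i-q^0_j}$ parts should assemble, as in the $aa$ case, into $(\hat a_j^\alpha-\hat a_i^\alpha)\hat c_j^\beta$.
\item For the $L$ parts, summing $-\delta^{1\beta}L_{ij}a_i^\alpha+\delta^{\alpha\beta}L_{ij}$ over $\alpha$ gives $\{s_i,c_j^\beta\}\supset L_{ij}(1-\delta^{1\beta}s_i)$.
\end{itemize}
Substituting, the $\delta^{1\beta}$ contributions cancel between the first and third terms. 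What remains is $\frac{s_j}{s_i}L_{ij}(\delta^{\alpha\beta} - a_i^\alpha/s_i)$, which equals $\hat a_i^\alpha\hat L_{ij} - \delta^{\alpha\beta}\hat L_{ij}$ after using $\hat L_{ij} = -(s_j/s_i)L_{ij}$. This cancellation of the constraint-dependent $\delta^{1\beta}$ terms is the conceptual point: the $a^1=1$ gauge is replaced by the $\sum\hat a=1$ gauge.

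The $\hat c\hat c$ bracket is the main obstacle, because it involves all four kinds of brackets, $cc$, $cs$, $sc$ and $ss$, with the $L$-terms entering asymmetrically. I would organize it the same way. The $\delta^{1\alpha}$ and $\delta^{1\beta}$ terms from $\{c_i^\alpha,c_j^\beta\}$ must cancel against those coming from $\{c_i^\alpha,s_j\}s_i$ and $\{s_i,c_j^\beta\}s_j$. What survives should be $-\hat c_i^\alpha\hat L_{ij}+\hat c_j^\beta\hat L_{ji}$. In parallel, the rational parts must reassemble into $\hat c_i^\alpha\hat c_j^\beta+\hat c_j^\alpha\hat c_i^\beta$, with the cross terms from $\{s_i,s_j\}$ cancelling. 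As a consistency check, I would verify the $i=j$ case separately: there only $L_{ii}$ terms contribute, and the result should be antisymmetric in $(\alpha,\beta)$.
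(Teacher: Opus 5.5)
Your proposal is correct and is essentially the computation the paper leaves implicit: substitute the rescaling into the brackets of the preceding proposition via the Leibniz rule, use $\hat L_{ij}=-(s_j/s_i)L_{ij}$ from Lemma~\ref{lemma:ratLOp}, and let the $\delta^{1\alpha},\delta^{1\beta}$ terms cancel. The $\hat c\hat c$ case you only sketched also works out as you predict: the $\frac{1}{q_i^0-q_j^0}$-parts reduce to $s_is_j(c_i^\alpha c_j^\beta+c_j^\alpha c_i^\beta)$ once the $\{s_i,s_j\}$ cross terms cancel, and the $L$-parts reduce to $s_j c_i^\alpha L_{ij}-s_i c_j^\beta L_{ji}=-\hat c_i^\alpha\hat L_{ij}+\hat c_j^\beta\hat L_{ji}$.
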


\section{Hyperbolic spin RS models are $K$-theoretic Coulomb branches}\label{section:hyperbolicRS}

\subsection{GKLO representation of the $K$-theoretic Coulomb branch}

For the hyperbolic case, we proceed in essentially the same way, starting with a $t$-deformation of the GKLO presentation of the $K$-theoretic Coulomb branch written down in \cite{finkelberg:2019,tsymbaliuk:2023}. We will use essentially the same notation as in the rational case for this section, as no confusion should arise.

\begin{definition}\label{def:hypDiffOps}
    Let $\mathfrak{A}_{N,\ell}^K$ be the commutative $\C\llbracket t-1 \rrbracket$-algebra
    \begin{equation}
        \mathfrak{A}_{N,\ell}^K \coloneq \C\llbracket t-1 \rrbracket[(Q_i^\alpha)^{\pm 1/2},(P_i^\alpha)^{\pm 1}][((Q_i^\alpha/Q_j^\beta)^{1/2}-(Q_j^\beta/Q_i^\alpha)^{1/2})^{-1}]/J,
    \end{equation}
    where the generators $Q_i^\alpha,P_i^\alpha$ have indices $i=1,\dots,N$ and $\alpha \in \Z$, we localize at the elements $(Q_i^\alpha/Q_j^\beta)^{1/2}-(Q_j^\beta/Q_i^\alpha)^{1/2}$ for $(i,\alpha) \neq (j,\beta)$, and $J$ is the ideal generated by the cyclic relations
    \begin{equation}
        Q_i^{\alpha+\ell} = t Q_i^\alpha, \qquad P_i^{\alpha+\ell} = P_i^\alpha.
    \end{equation}
    We then make $\mathfrak{A}_{N,\ell}^K$ into a Poisson $\C\llbracket t-1 \rrbracket$-algebra via the doubly log-canonical Poisson bracket
    \begin{equation}
        \{ Q_i^\alpha,P_j^\beta \} = \delta_{ij} \delta^{\alpha\beta} Q_i^\alpha P_j^\beta.
    \end{equation}
\end{definition}

\begin{prop}
    There exists an injective homomorphism $\psi^K\colon \mathfrak{C}_{N,\ell}^K \to \mathfrak{A}_{N,\ell}^K/(t-1)\mathfrak{A}_{N,\ell}^K$ of Poisson algebras that sends
    \begin{equation}
        \psi^K\colon \quad Q_i^\alpha \mapsto Q_i^\alpha, \qquad u_i^{\alpha\pm} \mapsto \pm (P_i^\alpha)^{\pm 1} \frac{\chi^{\alpha \pm 1}(Q_i^\alpha)}{\prod_{j \neq i} ((Q_i^\alpha/Q_j^\alpha)^{1/2}-(Q_j^\alpha/Q_i^\alpha)^{1/2})} + (t-1) \mathfrak{A}_{N,\ell}^K.
    \end{equation}
\end{prop}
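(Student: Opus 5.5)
The plan mirrors the proof of the homological GKLO proposition. First, check that the proposed images satisfy the defining relations of $\mathfrak{C}_{N,\ell}^K$ inside $\mathfrak{A}_{N,\ell}^K$, computing all brackets with the doubly log-canonical bracket $\{Q_i^\alpha,P_j^\beta\}=\delta_{ij}\delta^{\alpha\beta}Q_i^\alpha P_j^\beta$. Then reduce modulo $(t-1)$; this identifies $Q_i^{\alpha+\ell}$ with $Q_i^\alpha$, so the necklace closes up.

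\textbf{Step 1 (the cheap relations).} The bracket $\{Q_i^\alpha,Q_j^\beta\}=0$ is immediate. Since the prefactor of $u_j^{\beta\pm}$ depends only on the $Q$'s, we get $\{Q_i^\alpha,u_j^{\beta\pm}\}=\pm\delta_{ij}\delta^{\alpha\beta}Q_i^\alpha u_j^{\beta\pm}$. For the Euler class relation, the product $u_i^{\alpha+}u_i^{\alpha-}$ cancels $P_i^\alpha$ and leaves
\[
-\chi^{\alpha+1}(Q_i^\alpha)\chi^{\alpha-1}(Q_i^\alpha)\Big/\prod_{j\neq i}\big((Q_i^\alpha/Q_j^\alpha)^{1/2}-(Q_j^\alpha/Q_i^\alpha)^{1/2}\big)^2 .
\]
Here $\chi^{\alpha\pm1}$ are built from $Q^{\alpha\pm1}$, and at the boundary nodes they involve $Q^{\ell}=tQ^0$ or $Q^{-1}=t^{-1}Q^{\ell-1}$. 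These agree with the $\mathbb{Z}/\ell\mathbb{Z}$-indexed gauge polynomials modulo $t-1$.

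\textbf{Step 2 (the $u^\pm u^\pm$ brackets).} Write $u_i^{\alpha\pm}=\pm(P_i^\alpha)^{\pm1}\phi_i^{\alpha\pm}(Q)$. Then
\[
\{u_i^{\alpha\pm},u_j^{\beta\pm}\}=u_i^{\alpha\pm}u_j^{\beta\pm}\big(\pm Q_j^\beta\partial_{Q_j^\beta}\log\phi_i^{\alpha\pm}\mp Q_i^\alpha\partial_{Q_i^\alpha}\log\phi_j^{\beta\pm}\big).
\]
Use the identity
\[
Q\partial_Q\log\big((Q/Q')^{1/2}-(Q'/Q)^{1/2}\big)=\tfrac12\,\frac{Q+Q'}{Q-Q'} .
\]
Each factor in $\chi^{\alpha\pm1}$ and in the Vandermonde-type denominator then contributes a term $\pm\tfrac12(Q_a+Q_b)/(Q_a-Q_b)$. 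Collecting these, the same-node terms give coefficient $2$ and the adjacent-node terms give $-1$, exactly reproducing $\kappa^{\alpha\beta}$. For $\ell>2$ the adjacent pairs $(0,\ell-1)$ produce $\frac12(Q_i^\ell+Q_j^{\ell-1})/(Q_i^\ell-Q_j^{\ell-1})$ with $Q^\ell=tQ^0$. For $\ell=2$ we get an average of two such terms, and for $\ell=1$ we get the self-adjacent terms, exactly as in the rational proof. Modulo $t-1$ all of these collapse to the stated formula.

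\textbf{Step 3 (the $u^+u^-$ bracket).} For $(i,\alpha)\neq(j,\beta)$, the mixed bracket vanishes. The two cross terms cancel because $\phi^{+}$ involves $\chi^{\alpha+1}$ while $\phi^-$ involves $\chi^{\alpha-1}$, and because of the symmetric structure of the Vandermonde terms. Checking that this cancellation works for $\alpha\ne\beta$ adjacent nodes is, I expect, the main obstacle. The contribution from $\chi^{\alpha+1}$ in $u_i^{\alpha+}$ against the denominator of $u_j^{\alpha+1,-}$ must cancel the contribution from $\chi^{\alpha}$ in $u_j^{\alpha+1,-}$ against the denominator of $u_i^{\alpha+}$. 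With the half-power normalization both contributions are $\tfrac12(Q+Q')/(Q-Q')$-type terms, and they cancel pairwise. I would verify this first, including the boundary nodes, where the cancellation only holds modulo $t-1$.

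For $i=j$, $\alpha=\beta$, the bracket $\{P\phi^+,-P^{-1}\phi^-\}$ equals $Q\partial_Q(\phi^+\phi^-)$ (up to the overall sign matching the Euler relation). This is precisely $Q_i^\alpha\partial_{Q_i^\alpha}$ of the Euler class expression.

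\textbf{Step 4 (injectivity).} $\mathfrak{C}^K_{N,\ell}$ is generated by $Q$'s, the $u$'s and the localizations. The $u_i^{\alpha+}$ are sent to elements involving independent $P_i^\alpha$, and $u^-$ is determined by the Euler relation. So a polynomial relation among the images would give one in $Q$ and $P$ with coefficients that are nonzero generically, since the $\chi$ factors are nonzero divisors after localization. Hence the kernel is trivial.
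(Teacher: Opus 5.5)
Your proposal follows the paper's proof, which consists exactly of your Step 2: it computes $\{u_i^{\alpha\pm},u_j^{\beta\pm}\}$ for the GKLO images in $\mathfrak{A}^K_{N,\ell}$ separately for $\ell>2$, $\ell=2$ and $\ell=1$, and notes that the boundary factors built from $Q^\ell=tQ^0$ reduce to the $\mathbb{Z}/\ell\mathbb{Z}$ formula modulo $t-1$; your Steps 1, 3 and 4 (Euler relation, mixed brackets, injectivity) supply checks the paper leaves implicit. Two small corrections: the mixed brackets with $(i,\alpha)\neq(j,\beta)$ vanish exactly in $\mathfrak{A}^K_{N,\ell}$, boundary included, because for adjacent nodes each operator's $\chi$-factor pairs with the other's shift $P^{\pm1}$ and $\frac{Q_i^{-1}+Q_j^0}{Q_i^{-1}-Q_j^0}=\frac{Q_i^{\ell-1}+Q_j^{\ell}}{Q_i^{\ell-1}-Q_j^{\ell}}$ by homogeneity; and your nonzero-divisor argument for injectivity needs $\ell\ge 2$, since for $\ell=1$ the $j=i$ factor of $\chi^{\pm1}(Q_i^0)$ is $\pm(t^{-1/2}-t^{1/2})\equiv 0$ modulo $t-1$, so the images of $u_i^{0\pm}$ vanish and injectivity genuinely fails there, a defect of the statement that the paper's proof does not address either.
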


\begin{remark}
    We again use the existence of $\psi^K$ as a justification for writing
    \begin{equation}
        u_i^{\alpha\pm} \coloneq \pm (P_i^\alpha)^{\pm 1} \frac{\chi^{\alpha \pm 1}(Q_i^\alpha)}{\prod_{j \neq i} ((Q_i^\alpha/Q_j^\alpha)^{1/2}-(Q_j^\alpha/Q_i^\alpha)^{1/2})} \in \mathfrak{A}_{N,\ell}^K
    \end{equation}
    as a shorthand.
\end{remark}

\begin{proof}
    We compute the Poisson brackets inside $\mathfrak{A}_{N,\ell}^K$ for $\ell > 2$ to be
    \begin{equation*}
        \{ u_i^{\alpha\pm},u_j^{\beta\pm} \} = \pm(1-\delta_{ij}\delta^{\alpha\beta}) \kappa^{\alpha\beta} u_i^{\alpha\pm} u_j^{\beta\pm}
        \begin{cases}
            \frac{1}{2} \frac{Q_i^\ell+Q_j^{\ell-1}}{Q_i^\ell-Q_j^{\ell-1}}, & (\alpha,\beta) = (0,\ell-1) \\
            \frac{1}{2} \frac{Q_i^{\ell-1}+Q_j^\ell}{Q_i^{\ell-1}-Q_j^\ell}, & (\alpha,\beta) = (\ell-1,0) \\
            \frac{1}{2} \frac{Q_i^\alpha+Q_j^\beta}{Q_i^\alpha-Q_j^\beta}, & \text{otherwise},
        \end{cases}
    \end{equation*}
    while for $\ell=2$, we get
    \begin{equation*}
        \{ u_i^{\alpha\pm},u_j^{\beta\pm} \} = \pm(1-\delta_{ij}\delta^{\alpha\beta})\kappa^{\alpha\beta} u_i^{\alpha\pm} u_j^{\beta\pm}
        \begin{cases}
            \frac{1}{4} \frac{Q_i^0+Q_j^{\ell-1}}{Q_i^0-Q_j^{\ell-1}} + \frac{1}{4} \frac{Q_i^\ell+Q_j^{\ell-1}}{Q_i^\ell-Q_j^{\ell-1}}, & (\alpha,\beta) = (0,\ell-1) \\
            \frac{1}{4} \frac{Q_i^{\ell-1}+Q_j^0}{Q_i^{\ell-1}-Q_j^0} + \frac{1}{4} \frac{Q_i^{\ell-1}+Q_j^\ell}{Q_i^{\ell-1}-Q_j^\ell}, & (\alpha,\beta) = (\ell-1,0) \\
            \frac{1}{2} \frac{Q_i^\alpha+Q_j^\beta}{Q_i^\alpha-Q_j^\beta}, & \text{otherwise},
        \end{cases}
    \end{equation*}
    and for $\ell=1$, it is
    \begin{equation*}
        \{ u_i^{0\pm},u_j^{0\pm} \} = \pm(1-\delta_{ij})u_i^{0\pm} u_j^{0\pm} \bigg( \frac{Q_i^0+Q_j^0}{Q_i^0-Q_j^0} - \frac{1}{2} \frac{Q_i^1+Q_j^0}{Q_i^1-Q_j^0} - \frac{1}{2} \frac{Q_i^0+Q_j^1}{Q_i^0-Q_j^1} \bigg).
    \end{equation*}
    Since $\frac{Q_i^\ell+Q_j^{\ell-1}}{Q_i^\ell-Q_j^{\ell-1}} \equiv \frac{Q_i^0+Q_j^{\ell-1}}{Q_i^0-Q_j^{\ell-1}} \mod (t-1) \mathfrak{A}_{N,\ell}^K$, the result follows.
\end{proof}

\subsection{Quantum toroidal algebra of $\mathfrak{gl}_\ell$}

In the case of the $K$-theoretic Coulomb branch, the affine Yangian is replaced by the quantum toroidal algebra. To see this, we introduce the generating series
\begin{equation}
    e^\alpha(z) \coloneq \sum_{i=1}^N \frac{u_i^{\alpha+}}{1-Q_i^\alpha/z} \in \mathfrak{A}_{N,\ell}^K\llbracket z^{-1} \rrbracket, \qquad
    f^\alpha(z) \coloneq \sum_{i=1}^N \frac{u_i^{\alpha-}}{1-Q_i^\alpha/z} \in \mathfrak{A}_{N,\ell}^K\llbracket z^{-1} \rrbracket,
\end{equation}
as well as
\begin{equation}\label{eq:modKthGaugePoly}
    h^\alpha(z) \coloneq \frac{\tilde \chi^{\alpha+1}(z) \tilde \chi^{\alpha-1}(z)}{\tilde \chi^\alpha(z)^2} \prod_{i=1}^N \frac{Q_i^\alpha}{(Q_i^{\alpha+1} Q_i^{\alpha-1})^{1/2}}, \qquad \tilde \chi^\alpha(z) \coloneq \prod_{i=1}^N (1-Q_i^\alpha/z).
\end{equation}
We then expand them according to
\begin{equation}
    e^\alpha(z) = \sum_{r \geq 0} e_r^\alpha z^{-r}, \qquad f^\alpha(z) = \sum_{r \geq 0} f_r^\alpha z^{-r}, \qquad h^\alpha(z) = \sum_{r \geq 0} k_r^{\alpha+} z^{-r} = \sum_{r \geq 0} k_r^{\alpha-} z^r.
\end{equation}

\begin{prop}
    The generating series $e^\alpha(z),f^\alpha(z),\tilde \chi^\alpha(z),h^\alpha(z)$ define a representation of the classical limit of the $N$-truncated positive half of the quantum toroidal algebra of $\mathfrak{gl}_\ell$ in the sense that
    \begin{align}
        \{ k_r^\alpha, k_s^\beta \} &= 0,
        \\[8pt]
        \{ e_r^\alpha,f_s^\beta \}
        &= \delta^{\alpha\beta} (k_{r+s}^{\alpha+} - k_{r+s}^{\alpha-}),
        \\[2pt]
        \{ \tilde \chi^\alpha(z),e^\beta(w) \}
        &= \delta^{\alpha\beta} \tilde \chi^\alpha(z) \frac{e^\beta(z)-e^\beta(w)}{z/w-1},
        \\[2pt]
        \{ \tilde \chi^\alpha(z),f^\beta(w) \}
        &= -\delta^{\alpha\beta} \tilde \chi^\alpha(z) \frac{f^\beta(z)-f^\beta(w)}{z/w-1}.
    \end{align}
\end{prop}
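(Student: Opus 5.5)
The plan is a direct computation inside $\mathfrak{A}_{N,\ell}^K$, working entirely with the doubly log-canonical bracket $\{Q_i^\alpha,P_j^\beta\}=\delta_{ij}\delta^{\alpha\beta}Q_i^\alpha P_j^\beta$. This is the $K$-theoretic analogue of the affine Yangian proposition of Section~\ref{section:rationalRS}.

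\textbf{Step 1: the $k$-relations.} The relation $\{k_r^\alpha,k_s^\beta\}=0$ is immediate. The series $h^\alpha(z)$ and $\tilde\chi^\alpha(z)$ depend only on the $Q$'s, and $\{Q,Q\}=0$.

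\textbf{Step 2: the $\tilde\chi$--$e$ and $\tilde\chi$--$f$ relations.} Since $u_j^{\beta\pm}$ contains $(P_j^\beta)^{\pm1}$ and only $Q$'s otherwise, we get $\{Q_i^\alpha,u_j^{\beta\pm}\}=\pm\delta_{ij}\delta^{\alpha\beta}Q_i^\alpha u_j^{\beta\pm}$. Hence
\[
\{\tilde\chi^\alpha(z),u_j^{\beta+}\}=\delta^{\alpha\beta}\,\tilde\chi^\alpha(z)\,\frac{-Q_j^\alpha/z}{1-Q_j^\alpha/z}\,u_j^{\alpha+},
\]
and the $Q$-dependent denominator $(1-Q_j^\beta/w)^{-1}$ in $e^\beta(w)$ Poisson-commutes with $\tilde\chi^\alpha(z)$. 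Summing over $j$, the claim reduces to the partial-fraction identity
\[
\frac{1}{1-Q/z}-\frac{1}{1-Q/w}=\frac{(Q/z-Q/w)\cdot(-1)}{(1-Q/z)(1-Q/w)}.
\]
Concretely, one checks that $\frac{-Q/z}{(1-Q/z)(1-Q/w)}\cdot(z/w-1)$ equals $\frac{1}{1-Q/z}-\frac{1}{1-Q/w}$, up to the sign conventions of the statement. I will fix these signs carefully; the $f$ case is the same with the opposite sign.

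\textbf{Step 3: the $e$--$f$ relation (the main obstacle).} Here $\{e^\alpha(z),f^\beta(w)\}$ involves $\{u_i^{\alpha+},u_j^{\beta-}\}$ for all $i,j,\alpha,\beta$, together with brackets of $u$'s against the $Q$-dependent denominators. I first show that off-diagonal terms vanish, i.e.\ $\{u_i^{\alpha+},u_j^{\beta-}\}=0$ unless $(i,\alpha)=(j,\beta)$. For $\alpha\ne\beta$ non-adjacent this is trivial. For adjacent nodes it is the usual GKLO cancellation: the $P_i^\alpha$ factor acts on $\chi^{\beta-1}(Q_j^\beta)$ and the $P_j^\beta$ factor acts on $\chi^{\alpha+1}(Q_i^\alpha)$, and the two contributions cancel. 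The cyclic relation $Q^{\alpha+\ell}=tQ^\alpha$ only rescales $Q$ and is harmless, since brackets are log-canonical. For $\alpha=\beta$, $i\ne j$, the cross terms from the Weyl denominators $\prod_{k\ne i}((Q_i/Q_k)^{1/2}-(Q_k/Q_i)^{1/2})$ cancel. The diagonal term $\{u_i^{\alpha+},u_i^{\alpha-}\}$, together with the brackets involving the denominators $(1-Q_i^\alpha/z)^{-1}$, then assembles into
\[
\sum_i \frac{1}{(1-Q_i^\alpha/z)(1-Q_i^\alpha/w)}\Big(Q\partial_Q\, u_i^{\alpha+}u_i^{\alpha-}\text{-type terms}\Big).
\]
Rather than matching coefficients directly, I would compare both sides as rational functions of $z$ via residues. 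The right-hand side $k^{\alpha+}_{r+s}-k^{\alpha-}_{r+s}$ is encoded in the partial-fraction expansion of $h^\alpha(z)$ at its double poles $z=Q_i^\alpha$. The prefactor $\prod_i Q_i^\alpha/(Q_i^{\alpha+1}Q_i^{\alpha-1})^{1/2}$ is exactly what makes $h^\alpha(\infty)$ and $h^\alpha(0)$ consistent with $k_0^{\alpha\pm}$. The residue of $h^\alpha(z)\,dz/z$ at $Q_i^\alpha$ then reproduces $u_i^{\alpha+}u_i^{\alpha-}$ and its $Q\partial_Q$-derivative, by the $K$-theoretic Euler class relation, which $\psi^K$ preserves.

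Finally, the mode statement $\{e_r,f_s\}=k_{r+s}^+-k_{r+s}^-$ follows by expanding $\frac{1}{(1-Q/z)(1-Q/w)}$ in modes and using that $\{e_r,f_s\}$ depends only on $r+s$. This mode bookkeeping, including the $t$-twisted indices at the node $0/\ell-1$ and the boundary terms $r+s=0$, is the step I expect to demand the most care.
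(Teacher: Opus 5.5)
\textbf{Comparison with the paper.} The paper gives no computation for this proposition. It only remarks that the representation is the classical limit of the GKLO representation of \cite{tsymbaliuk:2023}. Your direct verification in $\mathfrak{A}_{N,\ell}^K$ is therefore a genuinely different, self-contained route. Steps~1 and~2 are correct, and so are the off-diagonal cancellations in Step~3. Your first partial-fraction display carries a spurious factor $-1$, but the ``concretely'' identity is right and gives exactly the signs of the statement, so no convention needs fixing. Step~3, however, is only sketched. Making it precise shows that its final sentence cannot be carried out as written.

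\textbf{Making Step~3 precise.} Put $\Delta(x,y)=(x/y)^{1/2}-(y/x)^{1/2}$ and $G_i(x)=\chi^{\alpha+1}(x)\chi^{\alpha-1}(x)/\prod_{k\neq i}\Delta(x,Q_k^\alpha)^2$. Directly in $\mathfrak{A}_{N,\ell}^K$, with no appeal to $\psi^K$, one has $u_i^{\alpha+}u_i^{\alpha-}=-G_i(Q_i^\alpha)$ and $\{u_i^{\alpha+},u_i^{\alpha-}\}=[x\partial_x G_i]_{x=Q_i^\alpha}$. Together with $\{Q_i^\alpha,u_i^{\alpha\pm}\}=\pm Q_i^\alpha u_i^{\alpha\pm}$ this gives
\[
\{e_r^\alpha,f_s^\alpha\}=\sum_{i=1}^N\big[x\partial_x\big(x^{m}G_i(x)\big)\big]_{x=Q_i^\alpha},\qquad m=r+s.
\]
The role of the prefactor in $h^\alpha$ is not the normalization at $0$ and $\infty$, which holds by definition. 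Its role is to turn $h^\alpha$ into $\chi^{\alpha+1}\chi^{\alpha-1}/(\chi^\alpha)^2$ with the symmetric $\chi$'s, so that $h^\alpha(z)=G_i(z)/\Delta(z,Q_i^\alpha)^2$. Since $\Delta(Qe^{\epsilon},Q)^2=4\sinh^2(\epsilon/2)=\epsilon^2(1+O(\epsilon^2))$, the residue of $z^{m}h^\alpha(z)\,dz/z$ at $z=Q_i^\alpha$ is exactly the $i$th summand above. The residue theorem, with residue $-k_m^{\alpha+}$ at $\infty$ and $\delta_{m,0}k_0^{\alpha-}$ at $0$, then gives
\[
\{e_r^\alpha,f_s^\beta\}=\delta^{\alpha\beta}\big(k_{r+s}^{\alpha+}-\delta_{r+s,0}\,k_0^{\alpha-}\big).
\]
Equivalently, $\{e^\alpha(z),f^\beta(w)\}=\delta^{\alpha\beta}\big(\tfrac{w\,h^\alpha(z)-z\,h^\alpha(w)}{w-z}-k_0^{\alpha-}\big)$.

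\textbf{Where the last step fails.} The ``boundary term'' is therefore not mere bookkeeping. With the paper's literal convention $h^\alpha(z)=\sum_{r\geq0}k_r^{\alpha-}z^r$, the right-hand side $k_{r+s}^{\alpha+}-k_{r+s}^{\alpha-}$ is false for $r+s>0$. Already for $N=1$, set $Q=Q_1^\alpha$, $a=Q_1^{\alpha+1}$, $b=Q_1^{\alpha-1}$. Then $\{e_0^\alpha,f_1^\alpha\}=k_1^{\alpha+}=Q(2Q-a-b)/\sqrt{ab}$, whereas $k_1^{\alpha-}=2\sqrt{ab}/Q^2-(a+b)/(Q\sqrt{ab})\neq0$. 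The $e$--$f$ relation holds only in the standard Drinfeld reading. There the minus-mode of index $r+s>0$ is the coefficient of $z^{-(r+s)}$ in the expansion of $h^\alpha$ at $z=0$, which vanishes. You should prove the corrected formula above and state this convention explicitly, rather than asserting that the literal mode relation follows.

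\textbf{The twisted node.} The $t$-twist at the node $0/\ell-1$ costs nothing, because $h^\alpha$ and the Euler class are built from the same twisted $\chi^{\alpha\pm1}$. For $\ell=1$ you additionally need that, at $x=Q_i^0e^{\epsilon}$, $\Delta(x,tQ_i^0)\Delta(x,t^{-1}Q_i^0)=2\cosh\epsilon-t-t^{-1}$ is even in $\epsilon$.
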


\begin{remark}
    This is the classical limit of the GKLO representation of the quantum toroidal algebra of $\mathfrak{gl}_\ell$ discussed in \cite{tsymbaliuk:2023}.
\end{remark}

\begin{corollary}
    The zero modes
    \begin{equation}
        E^\alpha \coloneq e_0^\alpha = \sum_{i=1}^N u_i^{\alpha+}, \quad F^\alpha \coloneq f_0^\alpha = \sum_{i=1}^N u_i^{\alpha-}, \quad K^\alpha \coloneq k_0^{\alpha+} = (k_0^{\alpha-})^{-1} = \prod_{i=1}^N \frac{Q_i^\alpha}{(Q_i^{\alpha+1} Q_i^{\alpha-1})^{1/2}}
    \end{equation}
    define a representation of the classical limit of the quantum affine algebra in the sense that
    \begin{align}
        \{ K^\alpha,K^\beta \} &= 0,
        \\[4pt]
        \{ K^\alpha,E^\beta \} &= \tfrac{1}{2} \kappa^{\alpha\beta} K^\alpha E^\beta,
        \\[4pt]
        \{ K^\alpha, F^\beta\} &= -\tfrac{1}{2} \kappa^{\alpha\beta} K^\alpha F^\beta,
        \\[4pt]
        \{ E^\alpha,F^\beta \} &= \delta^{\alpha\beta} (K^\alpha - (K^\alpha)^{-1}).
    \end{align}
\end{corollary}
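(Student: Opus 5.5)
The plan is to obtain each relation as the zeroth-mode specialization of the corresponding generating-series relation in the preceding proposition, and to confirm the results by direct computation in the GKLO variables of $\mathfrak{A}_{N,\ell}^K$ where that is quicker. The relation $\{K^\alpha,K^\beta\}=0$ is immediate, since $K^\alpha$ is a monomial in the $Q$'s, which Poisson-commute. For $\{E^\alpha,F^\beta\}$ I will take $r=s=0$ in $\{e_r^\alpha,f_s^\beta\}=\delta^{\alpha\beta}(k_{r+s}^{\alpha+}-k_{r+s}^{\alpha-})$. It then only remains to check that $k_0^{\alpha+}=K^\alpha$ and $k_0^{\alpha-}=(K^\alpha)^{-1}$.

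To read off these two constants, I expand $h^\alpha(z)$ at $z=\infty$ and at $z=0$. At $z=\infty$ one has $\tilde\chi^\beta(z)\to 1$, so $k_0^{\alpha+}=\prod_i Q_i^\alpha/(Q_i^{\alpha+1}Q_i^{\alpha-1})^{1/2}=K^\alpha$. At $z=0$ one has $\tilde\chi^\beta(z)\sim\prod_i(-Q_i^\beta/z)$. The signs and powers of $z$ cancel in the ratio $\tilde\chi^{\alpha+1}\tilde\chi^{\alpha-1}/(\tilde\chi^\alpha)^2$, which tends to $\prod_i Q_i^{\alpha+1}Q_i^{\alpha-1}/(Q_i^\alpha)^2=(K^\alpha)^{-2}$. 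Multiplying by the prefactor $K^\alpha$ gives $k_0^{\alpha-}=(K^\alpha)^{-1}$.

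For $\{K^\alpha,E^\beta\}$ I will compute directly rather than extract a mode. Write $\log K^\alpha=\sum_i(\log Q_i^\alpha-\tfrac12\log Q_i^{\alpha+1}-\tfrac12\log Q_i^{\alpha-1})$. The bracket $\{Q_i^\alpha,u_j^{\beta+}\}=\delta_{ij}\delta^{\alpha\beta}Q_i^\alpha u_j^{\beta+}$ gives $\{\log Q_i^\gamma,u_j^{\beta+}\}=\delta_{ij}\delta^{\gamma\beta}u_j^{\beta+}$. Summing over $i$ and $j$ then yields
\begin{equation*}
\{\log K^\alpha,E^\beta\}=\big(\delta^{\alpha\beta}-\tfrac12\delta^{\alpha+1,\beta}-\tfrac12\delta^{\alpha-1,\beta}\big)E^\beta=\tfrac12\kappa^{\alpha\beta}E^\beta .
\end{equation*}
Hence $\{K^\alpha,E^\beta\}=\tfrac12\kappa^{\alpha\beta}K^\alpha E^\beta$, and the $F^\beta$ case is the same with the sign flipped.

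The one point needing care is the cyclic relation $Q^{\alpha+\ell}=tQ^\alpha$. When $\alpha\pm1$ wraps around the necklace, $Q_i^{\alpha\pm1}$ should be read as $t^{\pm1}Q$ of a node in $0,\dots,\ell-1$. The factor of $t$ is a scalar in $\C\llbracket t-1\rrbracket$, so it drops out of logarithmic derivatives, and the Kronecker deltas are correctly taken on $\Z/\ell\Z$. For $\ell=1,2$ the neighbours coincide, and the deltas then add up to the corresponding Cartan matrix entries, e.g.\ $\kappa^{00}=0$ for $\ell=1$. The cross-check of $\{E^\alpha,F^\beta\}$ via the mode expansion is the step most likely to hide a sign. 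If necessary, I will verify it directly: I compute $\{\sum_i u_i^{\alpha+},\sum_j u_j^{\alpha-}\}$ from the Euler class bracket as $\sum_i Q_i\partial_{Q_i}$ of the rational function. I then recognize this sum as a residue sum of $dz/z$ times $h^\alpha(z)$, which equals the difference of its values at $\infty$ and $0$, namely $K^\alpha-(K^\alpha)^{-1}$.
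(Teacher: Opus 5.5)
Your proposal is correct and takes essentially the paper's route: the paper gives no written proof and treats the statement as an immediate zero-mode specialization of the preceding generating-series proposition, which is what you do for $\{E^\alpha,F^\beta\}$. Your direct log-derivative computation of $\{K^\alpha,E^\beta\}$ is equivalent to taking the $z\to 0$ limit of the $\tilde\chi^\alpha$--$e^\beta$ relation. Your evaluation of $k_0^{\alpha\pm}$, the residue cross-check $\{E^\alpha,F^\alpha\}=K^\alpha-(K^\alpha)^{-1}$, and your handling of the wrap-around $Q^{\alpha+\ell}=tQ^\alpha$, including the cases $\ell=1,2$, are all sound.
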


\subsection{$L$-operator algebra}

As in the rational case, we can write down an algebra of $L$-operators:

\begin{definition}
    Introduce the one-site $L$-operators
    \begin{equation}
        L_{ij}^\alpha \coloneq \frac{u_j^{\alpha+1,+}}{1-Q_j^{\alpha+1}/Q_i^\alpha} \in \mathfrak{A}_{N,\ell}^K
    \end{equation}
    as well as the total $L$-operator
    \begin{equation}\label{eq:totLHyp}
        L \coloneq L^0 \cdots L^{\ell-1}.
    \end{equation}
\end{definition}

\begin{prop}
The $L$-operators satisfy the Poisson bracket
\begin{equation}\label{eq:Lops}
    \{ L_1^\alpha, L_2^\beta \} = \delta^{\alpha\beta} r^\alpha L_1^\alpha L_2^\beta - \delta^{\alpha\beta} L_1^\alpha L_2^\beta \underline r^{\alpha+1} + \delta^{\alpha+1,\beta} L_1^\alpha \bar r_{21}^\beta L_2^\beta - \delta^{\alpha,\beta+1} L_2^\beta \bar r^\alpha L_1^\alpha,
\end{equation}
where we have used the matrices from \cite{arutyunov:2019b}, except that $\bar r^\alpha$ is shifted by $-\tfrac{1}{2}$:
\begin{align}\label{eq:hypMat}
    r^\alpha &\coloneq \sum_{i \neq j} \Big( \frac{1}{Q_i^\alpha/Q_j^\alpha-1} e_{ii} - \frac{1}{1-Q_j^\alpha/Q_i^\alpha} e_{ij} \Big) \otimes (e_{jj} - e_{ji}),
    \\
    \bar r^\alpha &\coloneq \sum_{i \neq j} \frac{1}{1-Q_j^\alpha/Q_i^\alpha} (e_{ii}-e_{ij}) \otimes e_{jj} - \frac{1}{2},
    \\[4pt]
    \underline r^\alpha &\coloneq \sum_{i \neq j} \frac{1}{1-Q_j^\alpha/Q_i^\alpha} (e_{ij} \otimes e_{ji} - e_{ii} \otimes e_{jj}).
\end{align}
\end{prop}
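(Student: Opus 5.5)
The plan is a direct component computation in $\mathfrak{A}_{N,\ell}^K$, organized like the rational case. First I rewrite each one-site operator in factorized form,
\begin{equation*}
    L_{ij}^\alpha = \frac{u_j^{\alpha+1,+}}{1-Q_j^{\alpha+1}/Q_i^\alpha} = C^\alpha_{ij}\, D^{\alpha+1}_j, \qquad C^\alpha_{ij} \coloneq \frac{1}{1-Q_j^{\alpha+1}/Q_i^\alpha}, \quad D^{\alpha+1}_j \coloneq u_j^{\alpha+1,+}.
\end{equation*}
Here $C^\alpha$ depends only on the $Q$'s of nodes $\alpha,\alpha+1$, and $D^{\alpha+1}_j$ is the monopole of node $\alpha+1$. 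The $Q$'s Poisson-commute, so only three kinds of brackets contribute.
\begin{itemize}
    \item $\{Q_k^\beta, u_j^{\beta+}\}=\delta_{kj}Q_k^\beta u_j^{\beta+}$. This makes $\{C,D\}$ a derivative of $C$ multiplied by $D$.
    \item $\{u_i^{\alpha+},u_j^{\beta+}\}$. From the proof of the proposition on $\psi^K$, this equals $\kappa^{\alpha\beta}\,\tfrac12\frac{Q_i^\alpha+Q_j^\beta}{Q_i^\alpha-Q_j^\beta}\,u_i^{\alpha+} u_j^{\beta+}$, with the twisted indices $Q^\ell=tQ^0$ at the wrap-around. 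I take the lifted ($t$-deformed) version here, so that $L^{\alpha+\ell}=L^\alpha$ holds exactly.
    \item The $\{C,C\}$ brackets, which vanish.
\end{itemize}
The relevant $(\alpha,\beta)$ pairs are $\beta=\alpha$, $\beta=\alpha\pm1$, and all others. For the others, the two $L$'s involve disjoint nodes with $\kappa=0$, so the bracket vanishes, matching the right-hand side of \eqref{eq:Lops}. For $\ell\le 2$ the cases overlap, and the contributions simply add, as in the proof of the proposition on $\psi^K$.

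\emph{Case $\beta=\alpha$.} I compute
\begin{equation*}
    \{L^\alpha_{ij},L^\alpha_{kl}\} = C^\alpha_{ij}C^\alpha_{kl}\{u_j^{\alpha+1},u_l^{\alpha+1}\} + C^\alpha_{ij}u^{\alpha+1}_l\{u^{\alpha+1}_j, C^\alpha_{kl}\} + u^{\alpha+1}_j C^\alpha_{kl}\{C^\alpha_{ij},u^{\alpha+1}_l\}.
\end{equation*}
\begin{itemize}
    \item The first term carries the factor $\frac{Q_j+Q_l}{Q_j-Q_l}$ at node $\alpha+1$. It must become the diagonal ($e_{ii}\otimes e_{jj}$) part of $\underline r^{\alpha+1}$ acting on the right, together with the $-\tfrac12$ shifts.
    \item The derivative terms produce $\delta_{jl}$ pieces. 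These pair with the $e_{ij}\otimes e_{ji}$ term of $\underline r^{\alpha+1}$ once I use the partial-fraction identity
    \begin{equation*}
        \frac{1}{(1-x/a)(1-x/b)} = \frac{1}{1-b/a}\,\frac{1}{1-x/a}+\frac{1}{1-a/b}\,\frac{1}{1-x/b}.
    \end{equation*}
    This identity converts products $C_{ij}C_{kj}$ into combinations of $C_{ij}$ and $C_{kj}$ weighted by functions of $Q_i^\alpha,Q_k^\alpha$, which is exactly what builds $r^\alpha$ acting on the left.
\end{itemize}

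\emph{Case $\beta=\alpha+1$.} The bracket $\{L^\alpha,L^{\alpha+1}\}$ involves:
\begin{itemize}
    \item $\{u^{\alpha+1}_j, u^{\alpha+2}_l\}$, with $\kappa=-1$;
    \item $\{u^{\alpha+1}_j, C^{\alpha+1}_{kl}\}$, since $C^{\alpha+1}$ contains $Q^{\alpha+1}_k$;
    \item $\{C^\alpha_{ij}, u^{\alpha+2}_l\}=0$.
\end{itemize}
The aim is to recognize the sum as $L_1^\alpha\,\bar r_{21}^{\alpha+1}L_2^{\alpha+1}$. The contraction index of $\bar r^{\alpha+1}_{21}$ sits at node $\alpha+1$, and $\bar r$ has only an $e_{jj}$ component in its second slot. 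This matches the fact that $L^{\alpha}$'s column index $j$ is the row index of $L^{\alpha+1}$. The case $\beta=\alpha-1$ follows by antisymmetry together with the relabeling $\bar r_{21}\leftrightarrow\bar r$.

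The \textbf{main obstacle} is the constant part. The $\tfrac12\frac{Q_i+Q_j}{Q_i-Q_j}$ structure must be split as $\frac{1}{1-Q_j/Q_i}-\tfrac12$. The $-\tfrac12$ pieces have to assemble consistently into the shift of $\bar r^\alpha$, and the diagonal $i=j$ terms, absent from the monopole bracket, have to be accounted for. I would handle this by isolating the coefficient of $u_j^{\alpha+1}u_l^{\beta+1}$ in each case and comparing the rational functions of the $Q$'s on both sides. That reduces every check to scalar identities among $\frac{1}{1-x/y}$'s, which I would verify by clearing denominators.
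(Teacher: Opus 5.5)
The paper states this proposition without proof, so the intended argument is the direct component check you propose, and your setup is sound. You factor $L^\alpha_{ij}=C^\alpha_{ij}u_j^{\alpha+1,+}$, expand by Leibniz with lifted indices so that $L^{\alpha+\ell}=L^\alpha$ holds exactly, and for $\ell\le 2$ superpose the same-node and neighbour contributions; that superposition does work. However, several intermediate claims in your sketch are false and would derail the check if used as written.

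\emph{The partial-fraction identity.} Your identity has its coefficients swapped: as $x\to\infty$ your right side behaves like $-(a+b)/x$, while the left side is $O(x^{-2})$. The correct statement is
\[
\frac{1}{(1-x/a)(1-x/b)}=\frac{1}{1-a/b}\,\frac{1}{1-x/a}+\frac{1}{1-b/a}\,\frac{1}{1-x/b},
\qquad\text{i.e.}\qquad
Q_k^\alpha C^\alpha_{ij}-Q_i^\alpha C^\alpha_{kj}=(Q_k^\alpha-Q_i^\alpha)\,C^\alpha_{ij}C^\alpha_{kj}.
\]

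\emph{The same-node bookkeeping.} Write $x_i=Q_i^\alpha$, $y_j=Q_j^{\alpha+1}$ and take $\ell\ge 2$. Then
\[
\{L^\alpha_{ij},L^\alpha_{kl}\}=(1-\delta_{jl})\frac{y_j+y_l}{y_j-y_l}L^\alpha_{ij}L^\alpha_{kl}+\delta_{jl}\,(C^\alpha_{ij}-C^\alpha_{kj})\,L^\alpha_{ij}L^\alpha_{kj},
\]
while on the right-hand side
\[
(r^\alpha L_1L_2)_{ij,kl}=(1-\delta_{ik})\frac{(x_kL^\alpha_{ij}-x_iL^\alpha_{kj})(L^\alpha_{kl}-L^\alpha_{il})}{x_i-x_k},\qquad
-(L_1L_2\underline r^{\alpha+1})_{ij,kl}=(1-\delta_{jl})\frac{y_jL^\alpha_{ij}L^\alpha_{kl}+y_lL^\alpha_{il}L^\alpha_{kj}}{y_j-y_l}.
\]
The derivative terms live only at $j=l$. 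The $e_{ij}\otimes e_{ji}$ part of $\underline r^{\alpha+1}$ contributes only at $j\neq l$, so the two cannot pair as you describe. What actually happens is:
\begin{itemize}
\item at $j=l$, the derivative term alone must equal the $r^\alpha$ term; this is where the corrected identity is used;
\item at $j\neq l$, the single monopole term must reproduce the $r^\alpha$ term together with both the diagonal and off-diagonal parts of $\underline r^{\alpha+1}$.
\end{itemize}
No $-\tfrac12$ enters this case at all.

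\emph{Where the $-\tfrac12$ shift comes from.} The shift of $\bar r$ appears only between neighbouring nodes. For $\ell\ge3$, with $z_l=Q_l^{\alpha+2}$,
\[
\{L^\alpha_{ij},L^{\alpha+1}_{kl}\}=\big(\tfrac12-C^{\alpha+1}_{jl}\big)L^\alpha_{ij}L^{\alpha+1}_{kl}+\delta_{jk}\,(C^{\alpha+1}_{jl}-1)\,L^\alpha_{ij}L^{\alpha+1}_{jl},
\]
which comes from $-\tfrac12\frac{y_j+z_l}{y_j-z_l}=\tfrac12-C^{\alpha+1}_{jl}$. The $-\tfrac12$ in $\bar r^{\alpha+1}_{21}$ is needed in two places:
\begin{itemize}
\item at $j=k$, where both sides reduce to $-\tfrac12L^\alpha_{ij}L^{\alpha+1}_{jl}$;
\item at $j\neq k$, where it cancels half of the $+C^{\alpha+1}_{kl}$ produced by the $e_{bb}\otimes(e_{aa}-e_{ab})$ part.
\end{itemize}
Contrary to your ``main obstacle'' remark, neighbour monopole brackets do not vanish on the diagonal; only same-node ones do, and there the gap is filled by the derivative terms. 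With these corrections, clearing denominators closes every case, including the superposed $\ell\le 2$ cases.
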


\begin{remark}
    The $r$-matrices $r^\alpha,\underline r^\alpha$ are not anti-symmetric, but satisfy
    \begin{equation}
        r^\alpha + r_{21}^\alpha = C - 1, \qquad \underline r^\alpha + \underline r_{21}^\alpha = C - 1.
    \end{equation}
    with $C = \sum_{ij} e_{ij} \otimes e_{ji}$.
\end{remark}

\begin{corollary}
    The total $L$-operator satisfies
    \begin{equation}
        \{ L_1, L_2 \} = r^0 L_1 L_2 - L_1 L_2 \underline r^0 + L_1 \bar r_{21}^0 L_2 - L_2 \bar r^0 L_1,
    \end{equation}
    which reproduces the Poisson bracket of the Lax matrix from \cite{arutyunov:2019b}.
\end{corollary}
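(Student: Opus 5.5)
We will derive the bracket for the total $L$-operator $L = L^0 \cdots L^{\ell-1}$ from the one-site algebra \eqref{eq:Lops} by the Leibniz rule, in the same way as the rational Corollary. The starting point is
\begin{equation*}
    \{ L_1, L_2 \} = \sum_{\mu,\nu=0}^{\ell-1} L_1^{0,\mu-1} L_2^{0,\nu-1} \{ L_1^\mu, L_2^\nu \} L_1^{\mu+1,\ell-1} L_2^{\nu+1,\ell-1},
\end{equation*}
with the partial monodromies $L^{\alpha,\beta} = L^\alpha \cdots L^\beta$. We then insert \eqref{eq:Lops} and sort the resulting terms by type.

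The diagonal terms $\mu=\nu$ contribute
\begin{equation*}
    L_1^{0,\mu-1}L_2^{0,\mu-1}\big(r^\mu L_1^\mu L_2^\mu - L_1^\mu L_2^\mu \underline r^{\mu+1}\big)L_1^{\mu+1,\ell-1}L_2^{\mu+1,\ell-1}.
\end{equation*}
The off-diagonal terms $\nu=\mu+1$ and $\mu=\nu+1$ contribute, respectively,
\begin{equation*}
    L_1^{0,\mu}L_2^{0,\mu}\,\bar r_{21}^{\mu+1}\,L_1^{\mu+1,\ell-1}L_2^{\mu+1,\ell-1}
    \qquad\text{and}\qquad
    -L_1^{0,\nu}L_2^{0,\nu}\,\bar r^{\nu+1}\,L_1^{\nu+1,\ell-1}L_2^{\nu+1,\ell-1}.
\end{equation*}
These placements hold because $L_1^{\mu}$ and $L_2^{\nu}$ with different indices act on different tensor factors and commute as matrices. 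We therefore collect, for each interior cut position $\alpha=1,\dots,\ell-1$, the combination sitting between $L_1^{0,\alpha-1}L_2^{0,\alpha-1}$ and $L_1^{\alpha,\ell-1}L_2^{\alpha,\ell-1}$:
\begin{equation*}
    r^\alpha - \underline r^{\alpha} + \bar r^\alpha_{21} - \bar r^\alpha .
\end{equation*}

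The key identity is then the hyperbolic analogue of $r^\alpha + \bar r_{21}^\alpha - \bar r^\alpha - \underline r^\alpha = 0$, using the matrices \eqref{eq:hypMat}. Here the shift of $\bar r^\alpha$ by $-\tfrac12$ cancels between $\bar r^\alpha_{21}$ and $\bar r^\alpha$. What remains is an entrywise check. Write $s_{ij} = (1-Q_j^\alpha/Q_i^\alpha)^{-1}$, so that $(Q_i^\alpha/Q_j^\alpha - 1)^{-1} = s_{ij}-1 = -s_{ji}$. We compare the coefficients of $e_{ii}\otimes e_{jj}$, $e_{ij}\otimes e_{jj}$, $e_{ii}\otimes e_{ji}$ and $e_{ij}\otimes e_{ji}$, and these should cancel using $s_{ij}+s_{ji}=1$. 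Once this holds, all interior terms vanish.

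Only boundary terms survive. The first is $r^0 L_1 L_2$, from $\mu=\nu=0$. The second is $-L_1L_2\underline r^{\ell}$, from $\mu=\nu=\ell-1$, and equals $-L_1 L_2\underline r^0$ because $Q^\ell=tQ^0$ and $\underline r$ depends only on ratios within one level. The cyclic off-diagonal pairs $(\mu,\nu)=(\ell-1,0)$ and $(0,\ell-1)$ give $L_1 \bar r_{21}^0 L_2$ and $-L_2 \bar r^0 L_1$, again using $\bar r^{\ell}=\bar r^0$. For $\ell=1,2$ these cyclic adjacencies overlap with the diagonal or with each other, so we will recheck these cases separately. The main obstacle is the careful sign and index bookkeeping in the interior identity with the non-antisymmetric hyperbolic matrices; we will verify it componentwise as described above.
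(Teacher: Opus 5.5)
Your proposal is correct and is essentially the paper's own argument. The paper expands $\{L_1,L_2\}$ by the Leibniz rule into one-site brackets, kills every interior cut with $r^\alpha+\bar r^\alpha_{21}-\bar r^\alpha-\underline r^\alpha=0$ (the $-\tfrac12$ shifts cancel), and keeps only the boundary terms, using $\underline r^\ell=\underline r^0$ and $\bar r^\ell=\bar r^0$; your bookkeeping of the cut positions and cyclic terms matches this, and the degenerate cases $\ell=1,2$ go through automatically because the Kronecker deltas live on $\mathbb{Z}/\ell\mathbb{Z}$.
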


\begin{proof}
    This follows from the Poisson bracket of the one-site $L$-operators using the identity
    \begin{equation*}
        r^\alpha + \bar r_{21}^\alpha - \bar r^\alpha - \underline r^\alpha = 0.
        \vspace{-1.5\baselineskip}
    \end{equation*}
\end{proof}

\begin{corollary}
    The Hamiltonians $H[n] \coloneq \operatorname{Tr} L^n$ are mutually Poisson commuting.
\end{corollary}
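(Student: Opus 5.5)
We take the bracket of the total $L$-operator from the preceding corollary,
\begin{equation*}
    \{ L_1, L_2 \} = r^0 L_1 L_2 - L_1 L_2 \underline r^0 + L_1 \bar r_{21}^0 L_2 - L_2 \bar r^0 L_1,
\end{equation*}
and show directly that $\{\operatorname{Tr} L^n, \operatorname{Tr} L^m\} = 0$.

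By the Leibniz rule,
\begin{equation*}
    \{ \operatorname{Tr} L^n, \operatorname{Tr} L^m \} = nm \operatorname{Tr}_{12} \big( L_1^{n-1} L_2^{m-1} \{ L_1, L_2 \} \big),
\end{equation*}
where the cyclicity of the partial traces has been used. Substituting the bracket and using cyclicity in each tensor factor separately gives four terms:
\begin{equation*}
    \operatorname{Tr}_{12}\big(r^0 L_1^n L_2^m\big) - \operatorname{Tr}_{12}\big(\underline r^0 L_1^n L_2^m\big) + \operatorname{Tr}_{12}\big(\bar r^0_{21} L_1^n L_2^m\big) - \operatorname{Tr}_{12}\big(\bar r^0 L_1^n L_2^m\big).
\end{equation*}
The two middle terms need care.
\begin{itemize}
    \item In $\operatorname{Tr}_{12}(L_1^{n-1}L_2^{m-1} L_1 \bar r_{21}^0 L_2)$, the factor $L_1$ commutes with $L_2^{m-1}$ because they act in different spaces. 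We move it left to form $L_1^n$, and then cycle $L_2$ around in space 2 to form $L_2^m$.
    \item The term $L_2 \bar r^0 L_1$ is handled the same way.
\end{itemize}
The sum therefore collapses to
\begin{equation*}
    \operatorname{Tr}_{12}\big( (r^0 - \underline r^0 + \bar r^0_{21} - \bar r^0) L_1^n L_2^m \big),
\end{equation*}
which vanishes by the identity $r^\alpha + \bar r_{21}^\alpha - \bar r^\alpha - \underline r^\alpha = 0$ already invoked in the proof of the preceding corollary.

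Two points should be checked.
\begin{itemize}
    \item In the hyperbolic case $\bar r^\alpha$ carries the constant shift $-\tfrac12$. This shift appears in both $\bar r^0_{21}$ and $\bar r^0$ and cancels between them. It is in any case already absorbed in the identity, which was stated with the shifted matrices.
    \item The same argument applies verbatim in the rational case, which justifies the earlier corollary as well.
\end{itemize}

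The main obstacle is the bookkeeping of the cyclic moves under the partial traces $\operatorname{Tr}_1$ and $\operatorname{Tr}_2$. The matrices $r$ act in both spaces, so a factor $L_1$ can be cycled only within the total trace $\operatorname{Tr}_{12}$, and it can pass $L_2$-factors but not $r$. The key care point is therefore to position each $L$ on the correct side of the $r$-matrix, so that every term becomes an $r$-matrix multiplying the same product $L_1^nL_2^m$.
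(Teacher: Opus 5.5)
Your proof is correct. It is the argument the paper leaves implicit: the paper gives no separate proof and treats the statement as an immediate consequence of the total $L$-operator bracket, which your trace computation reduces to $\operatorname{Tr}_{12}\big((r^0-\underline r^0+\bar r^0_{21}-\bar r^0)L_1^nL_2^m\big)=0$ via the identity $r^\alpha+\bar r^\alpha_{21}-\bar r^\alpha-\underline r^\alpha=0$, with the $-\tfrac12$ shift cancelling as you note.
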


\begin{prop}
    The Hamiltonians $H[n]$ are central in the classical limit of the quantum affine algebra:
    \begin{equation}
        \{ H[n], E^\alpha \} = 0, \qquad \{ H[n], F^\alpha \} = 0, \qquad \{ H[n], K^\alpha \} = 0.
    \end{equation}
\end{prop}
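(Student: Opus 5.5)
The proof will follow the rational analogue above closely: express the bracket of a one-site $L$-operator with a single monopole operator through the $L$-operator algebra \eqref{eq:Lops}, then telescope along the product $L^n = L^0 \cdots L^{n\ell-1}$ (indices modulo $\ell$, using $L^{\alpha+\ell} = L^\alpha$, which holds because both $Q^{\alpha+1}$ and $Q^\alpha$ pick up the same factor $t$ under the shift).

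\textbf{The bracket with $K^\alpha$.} This is the easiest case and is handled first. Since $\{Q_i^\alpha, u_j^{\beta+}\} = \delta_{ij}\delta^{\alpha\beta} Q_i^\alpha u_j^{\beta+}$, we get $\{\log Q_i^\beta, L_{jk}^\mu\} = \delta^{\beta,\mu+1}\delta_{ik} L^\mu_{jk}$. Hence
\[
\Big\{\sum_i \log Q_i^\beta, L^\mu\Big\} = \delta^{\beta,\mu+1} L^\mu ,
\]
i.e.\ $\sum_i \log Q_i^\beta$ acts on $L^\mu$ by a scalar. The quantity $\log K^\alpha = \sum_i \big(\log Q_i^\alpha - \tfrac12\log Q_i^{\alpha+1} - \tfrac12\log Q_i^{\alpha-1}\big)$ therefore acts on the product $L^0\cdots L^{n\ell-1}$ by the scalar
\[
n\Big(1 - \tfrac12 - \tfrac12\Big) = 0 ,
\]
because each residue class of $\mu$ occurs exactly $n$ times. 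It follows that $\{H[n], K^\alpha\} = 0$.

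\textbf{The bracket with $E^\alpha$.} Write $E^\alpha = \sum_j u_j^{\alpha+} = \operatorname{Tr}_2 (u_2^{\alpha+} e_2)$ in auxiliary-space notation, and note that $u_j^{\alpha+}$ is the column factor of $L^{\alpha-1}$:
\[
u_j^{\alpha+} = L^{\alpha-1}_{ij}\,\big(1 - Q_j^\alpha/Q_i^{\alpha-1}\big).
\]
The first step is to derive, from \eqref{eq:Lops} together with the bracket of $Q$ with $u$, an identity for $\{L_1^{\mu}, u_2^{\beta+1,+}\}$ of the same shape as in the rational proof:
\begin{itemize}
\item a term $-\delta^{\mu\beta} L_1^\mu u_2 \underline r^{\mu+1}$;
\item a term $-\delta^{\mu,\beta+1} u_2 \bar r^\mu L_1^\mu$;
\item $Z$-type terms coming from differentiating the factor $(1-Q_j^\alpha/Q_i^{\alpha-1})$, where now $Z = \sum_i e_{ii}\otimes e_i^t$ is weighted multiplicatively by $Q$-ratios.
\end{itemize}

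Inserting this into the Leibniz expansion of $\operatorname{Tr}_1 L_1^n$, the contributions at adjacent sites $\mu=\alpha-1$ and $\mu=\alpha$ combine into $u_2 (\underline r^{\alpha} + \bar r^\alpha)$ sandwiched between partial monodromies, contracted with $e_2$. The $Z$-terms cancel cyclically under $\operatorname{Tr}_1$, exactly as in the rational case. To conclude, one needs the annihilation property $(\underline r^\alpha + \bar r^\alpha)e_2 = 0$ in its hyperbolic form.

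\textbf{Main obstacle.} The hard part is the $-\tfrac12$ shift in $\bar r^\alpha$ together with the non-antisymmetry of $\underline r$. Computing directly, $(\underline r^\alpha e_2)$ on the $i$-th row gives
\[
\sum_{j\neq i} \frac{1}{1-Q_j/Q_i}\,(e_{ij} - e_{ii}),
\]
while the first part of $\bar r^\alpha e_2$ gives the opposite. So $(\underline r^\alpha + \bar r^\alpha)e_2 = -\tfrac12 e_2$, not zero. The resulting leftover $-\tfrac12 \operatorname{Tr}_1 L_1^n \cdot E^\alpha$-type scalar terms must be cancelled by the scalar pieces of the $Z$-terms. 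These arise from
\[
\{Q_i^{\alpha}, u_j^{\alpha+}\}\cdot \frac{Q_j^\alpha/Q_i^{\alpha-1}}{(1-Q_j^\alpha/Q_i^{\alpha-1})^2},
\]
and one rewrites $\frac{x}{1-x} = \frac{1}{1-x} - 1$ to split off a constant. I would therefore carefully track these constant pieces and check that they give precisely $+\tfrac12$ per occurrence of $\alpha$ in the monodromy, mirroring how the $\tfrac12$'s cancelled in the $K^\alpha$ computation.

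\textbf{The bracket with $F^\alpha$.} I expect this to follow by the same argument with $u^{\alpha-}$. Alternatively, it can be obtained from the relation $\{E^\alpha, F^\alpha\} = K^\alpha - (K^\alpha)^{-1}$ combined with the symmetry $u^+ \leftrightarrow u^-$, though the direct route is safer: $u^{\alpha-}$ is not an entry of $L$, so one must first compute its bracket with $L^\mu$ from \eqref{eq:Lops}'s underlying GKLO brackets and then repeat the telescoping argument.
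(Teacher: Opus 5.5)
Your architecture is the paper's: Leibniz expansion of $\operatorname{Tr}_1 L_1^n$, a formula for $\{L_1^\mu,u_2^{\alpha,+}\}$, merging sites $\alpha-1$ and $\alpha$ into $u_2^{\alpha,+}(\underline r^\alpha+\bar r^\alpha)$, and cyclic cancellation of $Z$-terms. Your $K^\alpha$ argument is correct and covers a case the paper leaves implicit. The gap is exactly at the step you call the main obstacle.

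\textbf{The missing site.} Your list of contributions to $\{L_1^\mu,u_2^{\alpha,+}\}$ contains only the sites $\mu=\alpha-1$ and $\mu=\alpha$. But $L^{\alpha-2}$ contains $u^{\alpha-1,+}$, whose bracket with $u^{\alpha,+}$ is of $\coth$ type. With $r=Q_m^\alpha/Q_j^{\alpha-1}$ one finds
\begin{equation*}
\{L^{\alpha-2}_{ij},u^{\alpha,+}_m\}=-\tfrac12\,\frac{1+r}{1-r}\,L^{\alpha-2}_{ij}u^{\alpha,+}_m=-L^{\alpha-2}_{ij}\tilde L^{\alpha-1}_{jm}-\tfrac12\,L^{\alpha-2}_{ij}u^{\alpha,+}_m ,
\end{equation*}
that is, $\{L_1^{\alpha-2},u_2^{\alpha,+}\}=-L_1^{\alpha-2}Z\tilde L_2^{\alpha-1}-\tfrac12L_1^{\alpha-2}u_2^{\alpha,+}$.

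The first piece is the partner $Z$-term that your cyclic cancellation needs; it does not come from differentiating a $Q$-factor. The second piece contributes $-\tfrac12E^\alpha H[n]$ at each of the $n$ occurrences of site $\alpha-2$. It cancels the $+\tfrac12E^\alpha H[n]$ produced at each occurrence by $-u_2^{\alpha,+}(\underline r^\alpha+\bar r^\alpha)e_2$, using $(\underline r^\alpha+\bar r^\alpha)e_2=-\tfrac12e_2$. This is the term $-\tfrac12\operatorname{Tr}_1L_1^nu_2^{\alpha,+}e_2$ in the paper's final line.

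The paper's displayed $\{L_1^\alpha,u_2^{\beta+1,+}\}$ instead books this $\tfrac12$ at the $\bar r$-site, as $\bar r^\alpha+\tfrac12$. At $N=1$, however, its own $\{L_1^\alpha,L_2^{\alpha-1}\}=-L_2^{\alpha-1}\bar r^\alpha L_1^\alpha$ forces $\{L^\alpha,u^{\alpha,+}\}=\tfrac12L^\alpha u^{\alpha,+}$. This is harmless, because a scalar multiple of $u_2^{\alpha,+}$ can be moved anywhere inside $\operatorname{Tr}_1L_1^n$.

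\textbf{Why your proposed source fails.} The constant piece of the $Q$-derivative at site $\alpha-1$, under $\frac{x}{1-x}=\frac1{1-x}-1$, is $-\delta_{jm}L^{\alpha-1}_{ij}u^{\alpha,+}_j$. After contraction with $e_2$ this inserts $\operatorname{diag}(u^{\alpha,+})$ after $L^{\alpha-1}$, which is not a multiple of $E^\alpha H[n]$.

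If you instead split the whole $Z$-term via $\tilde L^{\alpha-1}_{im}=L^{\alpha-1}_{im}-u^{\alpha,+}_m$, you get $-L_1^{\alpha-1}u_2^{\alpha,+}$. This integer constant is cancelled by the $+L_1^{\alpha-2}u_2^{\alpha,+}$ from the same split of the site-$(\alpha-2)$ partner. The net $\tfrac12$ is the constant part of $\tfrac12\frac{Q+Q'}{Q-Q'}$ in $\{u^{\alpha-1,+},u^{\alpha,+}\}$, which has no rational analogue. That is the ingredient your plan is missing.

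\textbf{The $F^\alpha$ fallback.} Deriving $\{H[n],F^\alpha\}=0$ from $\{E^\alpha,F^\alpha\}=K^\alpha-(K^\alpha)^{-1}$ and a $u^+\leftrightarrow u^-$ symmetry does not work. $H[n]$ is built from $u^{+}$ alone, and the Jacobi identity only gives $\{E^\alpha,\{H[n],F^\alpha\}\}=0$. The direct computation is required.
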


\begin{proof}
    Let us consider the bracket $\{ H[n],E^\alpha \}$ as an example. From the $L$-operator algebra, we derive the bracket
    \begin{equation*}
        \begin{aligned}
            \{ L_1^\alpha,u_2^{\beta+1,+} \}
            ={}& {-\delta^{\alpha\beta}} L_1^\alpha u_2^{\beta+1,+} \underline r^{\alpha+1} - \delta^{\alpha,\beta+1} u_2^{\beta+1,+} (\bar r^\alpha+\tfrac{1}{2}) L_1^\alpha + \delta^{\alpha\beta} Z L_1^\alpha \tilde L_2^\beta - \delta^{\alpha+1,\beta} L_1^\alpha Z \tilde L_2^\beta.
        \end{aligned}
    \end{equation*}
    with $Z = \sum_{i=1}^N e_{ii} \otimes e_i^t$. Then
    \begin{equation*}
        \begin{aligned}
            \{ H[n],E^\alpha \}
            ={}& \sum_{\mu=0}^{n\ell-1} \operatorname{Tr}_1 L_1^0 \cdots L_1^{\mu-1} \{ L_1^\mu, u_2^{\alpha,+} \} L_1^{\mu+1} \cdots L_1^{n\ell-1} e_2 \\
            ={}& -\sum_{\mu=1}^{n\ell-1} \delta^{\mu\alpha} \operatorname{Tr}_1 L_1^0 \cdots L_1^{\alpha-1} u_2^{\alpha,+} (\underline r^{\alpha} + \bar r^\alpha) L_1^{\alpha} \cdots L_1^{n\ell-1} e_2 - \delta^{\ell\alpha} \operatorname{Tr}_1 L_1^n u_2^{\alpha,+} (\underline r^{0}+\bar r^0) e_2 \\
            &- \sum_{\mu=2}^{n\ell+1} \tfrac{1}{2} \delta^{\mu\alpha} \operatorname{Tr}_1 L_1^n u_2^{\alpha,+} e_2 + \delta^{0,\alpha-1} \operatorname{Tr}_1 (Z \tilde L_2^{\alpha-1} L_1^n-L_1^n Z \tilde L_2^{\alpha-1}) e_2 \\
            ={}& 0,
        \end{aligned}
    \end{equation*}
    where we have used $(\underline r^\alpha + \bar r^\alpha) e_2 = -\tfrac{1}{2} e_2$.
\end{proof}

In analogy with the rational case, this suggests that the central Hamiltonians $H[n]$ together with the rest of the Bethe subalgebra of the classical limit of the quantum affine algebra define a superintegrable system with the $K$-theoretic Coulomb branch as its phase space. In the next section, we identify this superintegrable model with the hyperbolic spin RS model.

\subsection{Equations of motion}

Let us study the equations of motion under the first Hamiltonian $H[1]$. To this end, we let $Q^\alpha \coloneq \operatorname{diag}(Q_1^\alpha,\dots,Q_N^\alpha)$ and $\tilde L^\alpha \coloneq (Q^\alpha)^{-1} L^\alpha Q^{\alpha+1}$. Then we define the spin vectors
\begin{equation}
    a^\alpha \coloneq L^0 \cdots L^{\alpha-2} e, \qquad c^\alpha \coloneq u^{\alpha+} \tilde L^\alpha \cdots \tilde L^{\ell-1}.
\end{equation}
for $\alpha=1,\dots,\ell$. Notice again that $a_i^1 = 1$ for $i=1,\dots,N$.

\begin{lemma}\label{lemma:hypLOp}
    The total $L$-operator can be written as
    \begin{equation}
    L_{ij} = \frac{\sum_{\rho=1}^\ell a_i^\rho c_j^\rho}{1-Q_j^\ell/Q_i^0}.
\end{equation}
\end{lemma}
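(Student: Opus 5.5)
The plan is to follow the telescoping argument of lemma \ref{lemma:ratLOp}. The one new feature is the conjugation by $Q^\alpha$ in the definition of $c^\alpha$, and it is what turns the additive factors of the rational case into the multiplicative ones here. Throughout, I treat the upper indices as elements of $\Z$, as in $\mathfrak{A}_{N,\ell}^K$, so that $Q^\ell$ is a genuine diagonal matrix distinct from $Q^0$ (indeed $Q^\ell = tQ^0$). I also write $L^{\alpha,\beta} \coloneq L^\alpha \cdots L^\beta$, with the empty product (for $\beta<\alpha$) equal to the identity.

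\emph{First step.} Invert the definition of the one-site $L$-operator to express the monopole operator through it:
\begin{equation*}
    u_k^{\rho+} = L_{lk}^{\rho-1} \big(1 - Q_k^\rho/Q_l^{\rho-1}\big) \quad \text{for any } l.
\end{equation*}
\emph{Second step.} Unwind the conjugations in $c^\rho$. The intermediate factors $Q^{\alpha+1}(Q^{\alpha+1})^{-1}$ cancel, so
\begin{equation*}
    \tilde L^\rho \cdots \tilde L^{\ell-1} = (Q^\rho)^{-1} L^{\rho,\ell-1} Q^\ell .
\end{equation*}

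\emph{Third step.} Substitute both into $\sum_{\rho=1}^\ell a_i^\rho c_j^\rho = \sum_\rho \sum_{k,l} (L^{0,\rho-2})_{il}\, u_k^{\rho+}\, \big((Q^\rho)^{-1} L^{\rho,\ell-1}\big)_{kj} Q_j^\ell$. Splitting $1 - Q_k^\rho/Q_l^{\rho-1}$ into its two terms, the factor $1/Q_k^\rho$ (respectively $1/Q_l^{\rho-1}$) attaches as a diagonal matrix on one side of $L^{\rho-1}$. The two terms then become $Q_j^\ell (T_\rho)_{ij}$ and $-Q_j^\ell (T_{\rho-1})_{ij}$, where
\begin{equation*}
    T_\rho \coloneq L^{0,\rho-1} (Q^\rho)^{-1} L^{\rho,\ell-1}.
\end{equation*}
The sum over $\rho = 1,\dots,\ell$ therefore telescopes to
\begin{equation*}
    Q_j^\ell (T_\ell - T_0)_{ij} = Q_j^\ell \big(L_{ij}/Q_j^\ell - L_{ij}/Q_i^0\big) = L_{ij}\big(1 - Q_j^\ell/Q_i^0\big),
\end{equation*}
which is the claimed formula.

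The main point requiring care is the bookkeeping at the endpoints. For $\rho=1$ one needs $a^1 = e$, i.e.\ $L^{0,-1}$ is the identity. For $\rho=\ell$ one needs $u^{\ell+}$, which is handled by the cyclic relations $P^{\ell}=P^0$ and $Q^\ell = tQ^0$, with $\tilde L^\ell\cdots\tilde L^{\ell-1}$ read as the identity. One must also place the diagonal $Q$-factors on the correct side of each $L^{\rho-1}$ so that the two pieces match $T_\rho$ and $T_{\rho-1}$ exactly. Once this is arranged, the computation is a direct analogue of the rational proof.
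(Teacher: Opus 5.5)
Your proposal is correct and is essentially the paper's proof. The paper likewise rewrites $c_j^\rho$ with the factor $Q_j^\ell/Q_k^\rho$, substitutes $u_k^{\rho+} = L_{lk}^{\rho-1}(1-Q_k^\rho/Q_l^{\rho-1})$, and telescopes the two resulting sums (your $T_\rho$ and $T_{\rho-1}$) to $L_{ij}(1-Q_j^\ell/Q_i^0)$. One small remark: at $\rho=\ell$ you do not need the cyclic relations, since $u^{\ell+}$ enters only through the definition of $L^{\ell-1}$, so the substitution holds there verbatim.
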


\begin{proof}
    Indeed,
    \begin{equation*}
        \begin{aligned}
            \sum_{\rho=1}^\ell &a_i^\rho c_j^\rho
            = \sum_{\rho=1}^\ell \sum_{k,l=1}^N (L^0 \cdots L^{\rho-2})_{il} u_k^{\rho+}  Q_j^\ell/Q_k^\rho (L^\rho \cdots L^{\ell-1})_{kj} \\
            ={}& \sum_{\rho=1}^\ell \sum_{k,l=1}^N (L^0 \cdots L^{\rho-2})_{il} L_{lk}^{\rho-1} (Q_j^\ell/Q_k^\rho-Q_j^\ell/Q_l^{\rho-1}) (L^\rho \cdots L^{\ell-1})_{kj} \\
            ={}& \sum_{\rho=1}^\ell \sum_{k=1}^N (L^0 \cdots L^{\rho-1})_{ik} (L^\rho \cdots L^{\ell-1})_{kj} Q_j^\ell/Q_k^\rho - \sum_{\rho=1}^\ell \sum_{l=1}^N (L^0 \cdots L^{\rho-2})_{il} (L^{\rho-1} \cdots L^{\ell-1})_{lj} Q_j^\ell/Q_l^{\rho-1} \\
            ={}& L_{ij} (1 - Q_j^\ell/Q_i^0),
        \end{aligned}
    \end{equation*}
    which yields the result.
\end{proof}

\begin{theorem}
The time evolution under the Hamiltonian $H \coloneq (t-1) H[1]$ reproduces the equations of motion \eqref{eq:eom} with the identification $x_i = \log Q_i^0, \gamma = -{\log t}$ and the hyperbolic potential $V(z) = \tfrac{1}{2} \coth \tfrac{z}{2}-\tfrac{1}{2} \coth \tfrac{z+\gamma}{2}$.
\end{theorem}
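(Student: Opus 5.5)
We mirror the proof of the rational theorem, using the $L$-operator algebra \eqref{eq:Lops} and Lemma~\ref{lemma:hypLOp}.

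\emph{Step 1: time derivatives.} Since $\{Q_i^0,u_j^{\beta+}\}$ is nonzero only for $\beta=0$, and $u^{0+}=u^{\ell+}$ enters only $L^{\ell-1}$, we compute
\begin{equation*}
\dot x_i=\{H,\log Q_i^0\}\propto (t-1)L_{ii}.
\end{equation*}
For $\ddot x_i$ we apply the bracket again. This requires $\{L_1,L_2\}$ from the corollary for the total $L$-operator, traced against $e_{ii}$ in the first space and the identity in the second. Only the off-diagonal parts of $r^0$, $\underline r^0$ and $\bar r^0$ survive the contraction. We expect
\begin{equation*}
\ddot x_i\propto (t-1)^2\sum_{j\neq i}\phi(Q_i^0,Q_j^0)\,L_{ij}L_{ji},
\end{equation*}
where $\phi$ combines $\frac{1}{Q_i/Q_j-1}$ and $\frac{1}{1-Q_j/Q_i}$. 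Diagonal terms should cancel because $\operatorname{Tr}$ is involved. The asymmetry $r+r_{21}=C-1$ and the $-\tfrac12$ shift in $\bar r$ will produce constant terms. We must check that these cancel between the four terms.

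For the spins, we use the partial-monodromy technique from the proof of the rational spin-bracket proposition. Writing $a^\alpha=L^{0,\alpha-2}e$, the bracket $\{\operatorname{Tr}L,a^\alpha_2\}$ telescopes: interior contributions cancel via $(\underline r+\bar r)e_2=-\tfrac12e_2$ and $r e_2=0$. Only boundary terms at site $0$ remain, giving
\begin{equation*}
\dot a_i^\alpha\propto\sum_{j\neq i}g_{ij}\,(a_i^\alpha-a_j^\alpha)L_{ij}.
\end{equation*}
The same method applies to $c^\alpha=u^{\alpha+}\tilde L^\alpha\cdots\tilde L^{\ell-1}$. The extra conjugation by $Q^\alpha$ in $\tilde L$ is designed exactly so that the $\tfrac12$ terms coming from $\{Q,u\}$ cancel.

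\emph{Step 2: substitution.} We insert $L_{ij}=f_{ij}/(1-Q_j^\ell/Q_i^0)$ from Lemma~\ref{lemma:hypLOp}. Here $Q_j^\ell=tQ_j^0=e^{x_j-\gamma}$, so
\begin{equation*}
1-Q_j^\ell/Q_i^0=1-e^{-(x_i-x_j+\gamma)}.
\end{equation*}
The remaining task is to verify hyperbolic analogues of the two rational identities:
\begin{equation*}
(t-1)\,g(x_i-x_j)\,\frac{1}{1-e^{-(x_i-x_j+\gamma)}}=V(x_i-x_j),
\end{equation*}
and a corresponding identity for the product of two such factors in $\ddot x_i$. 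Both follow from
\begin{equation*}
\tfrac12\coth\tfrac a2-\tfrac12\coth\tfrac b2=\frac{e^{b}-e^{a}}{(e^a-1)(e^b-1)}
\end{equation*}
with $a=z$, $b=z+\gamma$. In this formula $e^b-e^a=e^z(e^\gamma-1)=e^z(t^{-1}-1)$, which supplies the factor $(t-1)$.

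\emph{Main obstacle.} The hard part is Step 1, specifically the bookkeeping of the non-antisymmetric pieces: $C-1$, the $-\tfrac12$ in $\bar r$, and the $\tilde L$ conjugation. These could leave behind diagonal terms proportional to $\dot x_i$, or $j=i$ contributions in $\dot c$. The $\ddot x_i$ equation has no such terms, so they must cancel exactly. If they do not cancel, the fallback is to absorb a term proportional to $\dot x_i\,c_i^\alpha$ by checking that the normalization conventions for $c$ in the Krichever--Zabrodin form tolerate it. Before attempting the general case, we will verify the cancellations on $N=2$.
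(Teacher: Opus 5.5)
You follow the same route as the paper: get $\dot x_i,\ddot x_i,\dot a_i^\alpha,\dot c_i^\alpha$ from the $L$-operator algebra, insert Lemma~\ref{lemma:hypLOp}, and check two hyperbolic identities. Your Step~2 bookkeeping ($Q_j^\ell/Q_i^0=e^{-(x_i-x_j+\gamma)}$ and the $\coth$ difference formula) is correct. The gap is that Step~1, which carries the entire content of the proof, is only anticipated. You never fix $\phi$ or $g$, and you leave open what happens to the $-\tfrac12$ shifts and to $r+r_{21}=C-1$. Your fallback is also not available. With $a_i^1=1$ fixed, any rescaling of $c_i$ rescales $f_{ij}$, which enters the $\ddot x_i$ and $\dot a_i^\alpha$ equations, so a surviving $\dot x_i c_i^\alpha$ term would falsify the theorem rather than be absorbed. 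Two of your stated mechanisms are off as well. In $\{\operatorname{Tr}L,a_2^\alpha\}$ the interior sites cancel by $r^\nu+\bar r^\nu_{21}-\bar r^\nu-\underline r^\nu=0$, not by $re_2=0$. And $\{Q,u\}$ contains no $\tfrac12$; the conjugation in $\tilde L$ is there so that the telescoping in Lemma~\ref{lemma:hypLOp} produces the factor $1-Q_j^\ell/Q_i^0$.

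The computation does close, with no stray terms. From $\{Q_i^0,\operatorname{Tr}L\}=Q_i^0L_{ii}$ one gets $\dot x_i=-(t-1)L_{ii}$. In the total-$L$ bracket the constant in $\bar r^0$ drops out identically, since $-\tfrac12L_1L_2+\tfrac12L_2L_1=0$. Taking the $(i,i)$ entry in space $1$ and $\operatorname{Tr}_2$, all $L_{ii}L_{jj}$, $L_{ii}L_{ij}$ and $L_{ji}L_{jj}$ terms cancel, leaving $\{L_{ii},\operatorname{Tr}L\}=\sum_{j\neq i}\frac{Q_i^0+Q_j^0}{Q_i^0-Q_j^0}L_{ij}L_{ji}$. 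So $\phi=\frac{1}{1-Q_j^0/Q_i^0}+\frac{1}{Q_i^0/Q_j^0-1}$. For $a^\alpha$, the right end contributes $-(\underline r^{\alpha-1}+\bar r^{\alpha-1})e_2=\tfrac12e_2$, which cancels the $-\tfrac12$ of $\bar r^0_{21}$ at the left end. Hence $\{\operatorname{Tr}L,a_2^\alpha\}=\operatorname{Tr}_1\big(r^0L_1a_2^\alpha+L_1\bar r^0_{21}a_2^\alpha+\tfrac12L_1a_2^\alpha\big)$, giving $\dot a_i^\alpha=-(t-1)\sum_{j\neq i}(a_i^\alpha-a_j^\alpha)\frac{L_{ij}}{1-Q_i^0/Q_j^0}$. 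Likewise $\dot c_i^\alpha=(t-1)\sum_{j\neq i}\big(c_i^\alpha\frac{L_{ij}}{1-Q_i^0/Q_j^0}-c_j^\alpha\frac{L_{ji}}{1-Q_j^0/Q_i^0}\big)$, with no $j=i$ term. The first identity you need is $(t-1)\frac{1}{1-Q_i^0/Q_j^0}\frac{1}{1-Q_j^\ell/Q_i^0}=V(x_i-x_j)$, which is your $\coth$ formula. The second is
\begin{equation*}
(t-1)^2\frac{Q_i^0+Q_j^0}{Q_i^0-Q_j^0}\frac{1}{1-Q_j^\ell/Q_i^0}\frac{1}{1-Q_i^\ell/Q_j^0}=V(x_i-x_j)-V(x_j-x_i),
\end{equation*}
which follows from the first applied to both orderings plus a partial-fraction step.
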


\begin{proof}
    We use the $L$-operator algebra to derive the equations of motion
    \begin{align*}
        \dot x_i &= \{ H,Q_i^0 \} / Q_i^0 = -(t-1)L_{ii},
        \\[4pt]
        \ddot x_i &= \{ H, \{ H,Q_i^0 \} / Q_i^0 \} = (t-1)^2 \sum_{i \neq j} L_{ij} L_{ji} \frac{Q_i^0+Q_j^0}{Q_i^0-Q_j^0}, \\
        \dot a_i^\alpha &= \{ H,a_i^\alpha \} = -(t-1)\sum_{i \neq j} (a_i^\alpha-a_j^\alpha) \frac{L_{ij}}{1-Q_i^0/Q_j^0}, \\
        \dot c_i^\alpha &= \{ H,c_i^\alpha \} = (t-1)\sum_{i \neq j} \Big( c_i^\alpha \frac{L_{ij}}{1-Q_i^0/Q_j^0} - c_j^\alpha \frac{L_{ji}}{1-Q_j^0/Q_i^0} \Big).
    \end{align*}
    We then use lemma \ref{lemma:hypLOp} and the identities
    \begin{align*}
        (t-1)\frac{1}{1-Q_i^0/Q_j^0} \frac{1}{1-Q_j^\ell/Q_i^0} &= V(x_i-x_j), \\
        (t-1)^2\frac{Q_i^0+Q_j^0}{Q_i^0-Q_j^0} \frac{1}{1-Q_j^\ell/Q_i^0} \frac{1}{1-Q_i^\ell/Q_j^0} &= V(x_i-x_j)-V(x_j-x_i)
    \end{align*}
    to arrive at the desired equations of motion \eqref{eq:eom}.
\end{proof}

\subsection{Poisson bracket of the spin variables}

\begin{prop}
    The Poisson brackets of the spins can be written as
    \begin{align}
        \{ Q_i^0, a_j^\alpha \} ={}& 0, \qquad \{ Q_i^0, c_j^\alpha \} = \delta_{ij} Q_i^0 c_j^\alpha, \\[4pt]
        \{ a_i^\alpha,a_j^\beta \} ={}& (1-\delta_{ij}) \bigg( \frac{a_i^\alpha (a_j^\beta-a_i^\beta)}{Q_i^0/Q_j^0-1} - \frac{a_j^\alpha (a_j^\beta-a_i^\beta)}{1-Q_j^0/Q_i^0} \bigg) - \delta^{\alpha < \beta} a_j^\alpha a_i^\beta \\
        &+ \tfrac{1}{2} (2\delta^{1=\alpha<\beta} + \delta^{\alpha>\beta>1} + \delta^{1 < \alpha < \beta}) a_i^\alpha a_j^\beta, \nonumber
        \\[4pt]
        \{ a_i^\alpha,c_j^\beta \} ={}& (1-\delta_{ij}) \frac{(a_j^\alpha-a_i^\alpha) c_j^\beta}{1-Q_j^0/Q_i^0} - \delta^{\alpha\beta} \sum_{\rho=1}^{\alpha-1} a_i^\rho c_j^\rho + \delta^{1\beta} \tilde L_{ij} a_i^\alpha - \delta^{\alpha\beta} \tilde L_{ij} \\
        &+ \tfrac{1}{2}(1 - \delta^{1=\alpha < \beta} + \delta^{\alpha > \beta=1}-\delta^{\alpha\beta}) a_i^\alpha c_j^\beta, \nonumber
        \\[4pt]
        \{ c_i^\alpha,c_j^\beta \} ={}& (1-\delta_{ij}) \bigg( \frac{c_j^\alpha c_i^\beta}{Q_i^0/Q_j^0-1} + \frac{c_i^\alpha c_j^\beta}{1-Q_j^0/Q_i^0} \bigg) + \delta^{\alpha < \beta} c_j^\alpha c_i^\beta - \delta^{1\beta} c_i^\alpha \tilde L_{ij} + \delta^{1\alpha} c_j^\beta \tilde L_{ji} \\
        &- \tfrac{1}{2} (2 \delta^{\alpha > \beta=1} + \delta^{\alpha > \beta > 1} + \delta^{1<\alpha<\beta}) c_i^\alpha c_j^\beta, \nonumber
    \end{align}
    where
    \begin{equation}
        \tilde L_{ij} = \frac{\sum_{\rho=1}^\ell a_i^\rho c_j^\rho}{Q_i^0/Q_j^\ell-1}.
    \end{equation}
\end{prop}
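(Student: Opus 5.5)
We follow the strategy of the rational case: write everything in tensor notation with auxiliary spaces $1,2$, expand each spin bracket by the Leibniz rule over the factors of the partial monodromies, and apply the one-site bracket \eqref{eq:Lops} to each pair $\{L_1^\mu,L_2^\nu\}$. Write $L^{\alpha,\beta} \coloneq L^\alpha\cdots L^\beta$ and $\tilde L^{\alpha,\beta}\coloneq \tilde L^\alpha\cdots\tilde L^\beta = (Q^\alpha)^{-1}L^{\alpha,\beta}Q^{\beta+1}$. Then $a^\alpha = L^{0,\alpha-2}e$ and $c^\alpha = u^{\alpha+}(Q^\alpha)^{-1}L^{\alpha,\ell-1}Q^\ell$. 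First we record the elementary brackets:
\begin{itemize}
\item $\{Q_i^0,L^\mu\}$ vanishes unless $\mu = \ell-1 \equiv -1$. Since $L^{-1}_{kj}$ contains $u_j^{0+}$, we get $\{Q_i^0,L^{-1}_{kj}\}=\delta_{ij}Q_i^0 L^{-1}_{kj}$ up to the cyclic identification.
\item Only the last factor of $c^\alpha$ involves node $\ell\equiv 0$. This gives $\{Q_i^0,a_j^\alpha\}=0$ and $\{Q_i^0,c_j^\alpha\}=\delta_{ij}Q_i^0c_j^\alpha$, where the factor $Q^\ell$ is inert because $\{Q^0,Q^\ell\}=0$.
\item For the brackets involving $u^{\alpha+}$ and $Q^\alpha$, we use the bracket $\{L_1^\mu,u_2^{\beta+1,+}\}$ already derived in the proof of centrality of $H[n]$. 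We also use $\{Q^\alpha_1, L^\mu_2\}$, which is nonzero only for $\mu=\alpha-1$ and there equals a $Z$-type diagonal rescaling.
\end{itemize}

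Next we do the $\{a_1^\alpha,a_2^\beta\}$ computation exactly as in the rational proof. The double sum over $\mu\le\alpha-2$, $\nu\le\beta-2$ splits into the four $r$-matrix terms.

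The telescoping identities used in the rational case must now be replaced by their $K$-theoretic versions, computed directly from \eqref{eq:hypMat}:
\begin{itemize}
\item $\underline r^\alpha e_1e_2=0$,
\item $r^\alpha e_2 = 0$,
\item $(\underline r^\alpha+\bar r^\alpha)e_2=-\tfrac12 e_2$,
\item an analogue of $(\underline r^\alpha-\bar r^\alpha_{21})e_1$ that is now only proportional to $e_1$ rather than zero.
\end{itemize}
The $r^\alpha e_1 = 0$ identity also fails, because $r^\alpha+r^\alpha_{21}=C-1$.

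The interior terms telescope, leaving:
\begin{itemize}
\item a boundary term $r^0 a_1^\alpha a_2^\beta$ at $\mu=\nu=0$, which produces the $(1-\delta_{ij})$ hyperbolic kernels;
\item constant multiples of $a_1^\alpha a_2^\beta$ coming from the $-\tfrac12$ shifts and the non-antisymmetry;
\item a leftover $C$-type term. Acting at the unequal endpoint $\min(\alpha,\beta)-1$, it produces the exchanged product $a_j^\alpha a_i^\beta$ with coefficient $-\delta^{\alpha<\beta}$. Its asymmetry in $\alpha,\beta$ reflects the asymmetry $\bar r$ versus $\bar r_{21}$.
\end{itemize}
Counting how many $-\tfrac12$ contributions survive in each ordering ($\alpha=1$, $\alpha<\beta$, $\alpha>\beta>1$, and so on) gives the combination $\tfrac12(2\delta^{1=\alpha<\beta}+\delta^{\alpha>\beta>1}+\delta^{1<\alpha<\beta})$.

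The brackets $\{a_1^\alpha,c_2^\beta\}$ and $\{c_1^\alpha,c_2^\beta\}$ are handled the same way. The additional ingredients are the $u^{\beta+}$ bracket with its $Z\tilde L$ terms and the conjugation by $Q^\beta,Q^\ell$. The $Z$ terms reassemble via Lemma \ref{lemma:hypLOp} into $\tilde L_{ij}$. The partial sum $-\delta^{\alpha\beta}\sum_{\rho<\alpha}a_i^\rho c_j^\rho$ arises by telescoping $Z$-terms only up to node $\alpha-1$, in the same way as in the proof of Lemma \ref{lemma:hypLOp}. We then convert the tensor identities to components, converting $1/(1-Q_j^\ell/Q_i^0)$-type factors into $\tilde L_{ij}$.

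The main obstacle is the bookkeeping of the constant terms. These are the $-\tfrac12$ shift in $\bar r$, the $C-1$ defect of $r,\underline r$, and the half-integer contributions from $\{Q,P\}$ acting on the square roots $(Q_i^\alpha/Q_j^\alpha)^{1/2}$ inside $u$. Each telescoping endpoint contributes a different constant depending on the relative order of $\alpha,\beta$ and on whether either equals $1$, since $a^1=e$ contributes no factors. We would organize this by first computing $\{L_1^{0,\alpha-2},L_2^{0,\beta-2}\}$ as a closed monodromy formula, namely the total-$L$ corollary restricted to a partial chain, plus explicit endpoint corrections. We would then treat the four cases $\alpha=\beta$, $\alpha<\beta$, $\alpha>\beta$, and $1\in\{\alpha,\beta\}$ separately. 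The result can be checked against the known $\hat a,\hat c$ brackets of \cite{fairon:2026} after rescaling.
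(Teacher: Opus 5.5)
Your plan is essentially the paper's proof: Leibniz expansion over the partial monodromies, the one-site $L$-operator bracket, the $e$-identities at the telescoping endpoints, and, for the brackets involving $c$, the telescoped identity $L^0\cdots L^{\alpha-2}\tilde L^{\alpha-1}\cdots\tilde L^{\ell-1}=\tilde L+\sum_{\rho=1}^{\alpha-1}a^\rho c^\rho$. One correction: the endpoint identity is $(\underline r^\alpha-\bar r^\alpha_{21}-P)e_1=-\tfrac12 e_1$, so $(\underline r^\alpha-\bar r^\alpha_{21})e_1$ is not proportional to $e_1$ --- its $P$-part is exactly your exchanged term $-\delta^{\alpha<\beta}a_j^\alpha a_i^\beta$ for $\alpha>1$, while for $\alpha=1$ the bracket is zero and the terms $-\delta^{\alpha<\beta}P+\delta^{1=\alpha<\beta}$ only cancel $r^0e_1=(P-1)e_1$ in the uniform formula, rather than coming from a count of $-\tfrac12$'s.
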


\begin{proof}
    We again consider the bracket $\{ a_1^\alpha,a_2^\beta \}$ as an example and make use of the partial monodromies $L^{\alpha,\beta} \coloneq L^\alpha \cdots L^\beta$. Here $1$ and $2$ again label auxiliary spaces. Then
    \begin{equation*}
        \begin{aligned}
            \{ &a_1^\alpha,a_2^\beta \} = \sum_{\mu=0}^{\alpha-2} \sum_{\nu=0}^{\beta-2} L_1^{0,\mu-1} L_2^{0,\nu-1} \{ L_1^\mu, L_2^\nu \} L_1^{\mu+1,\alpha-2} L_2^{\nu+1,\beta-2} e_1 e_2 \\
            ={}& \sum_{\mu=0}^{\alpha-2} \sum_{\nu=0}^{\beta-2} \delta^{\mu\nu} L_1^{0,\mu-1} L_2^{0,\nu-1} r^\mu L_1^{\mu,\alpha-2} L_2^{\nu,\beta-2} e_1 e_2 - \sum_{\mu=1}^{\alpha-1} \sum_{\nu=1}^{\beta-1} \delta^{\mu\nu} L_1^{0,\mu-1} L_2^{0,\nu-1} \underline r^\mu L_1^{\mu,\alpha-2} L_2^{\nu,\beta-2} e_1 e_2 \\
            &+ \sum_{\mu=1}^{\alpha-1} \sum_{\nu=1}^{\beta-2} \delta^{\mu\nu} L_1^{0,\mu-1} L_2^{0,\nu-1} \bar r_{21}^\mu L_1^{\mu,\alpha-2} L_2^{\nu,\beta-2} e_1 e_2 - \sum_{\mu=1}^{\alpha-2} \sum_{\nu=1}^{\beta-1} \delta^{\mu\nu} L_1^{0,\mu-1} L_2^{0,\nu-1} \bar r^\mu L_1^{\mu,\alpha-2} L_2^{\nu,\beta-2} e_1 e_2 \\
            ={}& (r^0 - \delta_{1<\alpha<\beta} P + \tfrac{1}{2}\delta_{1<\alpha<\beta} + \tfrac{1}{2} \delta_{1<\beta<\alpha}) a_1^\alpha a_2^\beta
        \end{aligned}
    \end{equation*}
    where we used the identities
    \begin{equation*}
        \underline r^\alpha e_1 e_2 = 0, \qquad (\underline r^\alpha + \bar r^\alpha) e_2 = -\tfrac{1}{2} e_2, \qquad (\underline r^\alpha - \bar r_{21}^\alpha - P) e_1 = -\tfrac{1}{2} e_1.
    \end{equation*}
    The other brackets follow similarly, except that we have to use the additional identity
    \begin{equation*}
        L^0 \cdots L^{\alpha-2} \tilde L^{\alpha-1} \cdots \tilde L^{\ell-1} = \tilde L + \sum_{\rho=1}^{\alpha-1} a^\rho c^\rho,
    \end{equation*}
    which can be proven by a similar telescopic argument as in the proof of lemma \ref{lemma:hypLOp}. All in all, we obtain
    \begin{align}
        \{ a_1^\alpha,a_2^\beta \} ={}& (r^0 - \delta_{\alpha<\beta} P) a_1^\alpha a_2^\beta + (\delta_{1=\alpha<\beta} + \tfrac{1}{2} \delta_{\alpha>\beta>1} + \tfrac{1}{2}\delta_{1<\alpha<\beta}) a_1^\alpha a_2^\beta,
        \\[6pt]
        \{ a_1^\alpha,c_2^\beta \} ={}& {-c_2^\beta} \bar r^0 a_1^\alpha - (\tfrac{1}{2} \delta^{\alpha\beta} + \tfrac{1}{2} \delta_{1=\alpha < \beta} - \tfrac{1}{2} \delta_{\alpha > \beta=1}) a_1^\alpha c_2^\beta - \delta^{\alpha\beta} \sum_{\rho=1}^{\alpha-1} a_1^\rho c_2^\rho \\
        &+ \delta^{1\beta} Z a_1^\alpha \tilde L_2 - \delta^{\alpha\beta} Z e_1 \tilde L_2, \nonumber
        \\[6pt]
        \{ c_1^\alpha,c_2^\beta \} ={}& {-c_1^\alpha} c_2^\beta (\underline r^0 - \delta_{\alpha<\beta} P) - (\delta_{\alpha>\beta=1} + \tfrac{1}{2} \delta_{\alpha>\beta>1} + \tfrac{1}{2} \delta_{1<\alpha<\beta}) c_1^\alpha c_2^\beta \\
        &- \delta^{1\beta} c_1^\alpha Z \tilde L_2 + \delta^{1\alpha} c_2^\beta Z_{21} \tilde L_1, \nonumber
    \end{align}
    where $Z = \sum_{i=1}^N e_{ii} \otimes e_i^t$ and $P = \sum_{ij} e_{ij} \otimes e_{ji}$. Writing this in components, we arrive at the claimed brackets.
\end{proof}

\begin{remark}
    We again have the constraint $a_i^1 = 1$, which restricts the ``physical'' spin degrees of freedom of the hyperbolic spin RS model to $a^\alpha,c^\alpha$ for $\alpha > 1$ and notice that they form a quadratic Poisson algebra.
\end{remark}

\begin{corollary}
    \def\a{\alpha}
    \def\b{\beta}
    \def\sgn{\rm sgn}
    The rescaling $\hat a_i^\alpha \coloneq a_i^\alpha/\sum_\rho a_i^\rho, \hat c_i^\alpha \coloneq c_i^\alpha \sum_\rho a_i^\rho, \hat{ \tilde{L}}_{ij} \coloneq \frac{\sum_\rho \hat a_i^\rho \hat c_j^\rho}{Q_i^0/Q_j^\ell-1}$ reproduces the Poisson brackets from \cite{arutyunov:2019,fairon:2026}:
    \begin{align}
            \{ Q_i^0, \hat a_j^\alpha \} ={}& 0, \qquad \{ Q_i^0, \hat c_j^\alpha \} = \delta_{ij} Q_i^0 \hat c_j^\alpha, \nonumber \\[4pt]
            \{\hat{a}_i^{\a},\hat{a}_j^{\b} \}
            ={}&-(1-\delta_{ij})\frac{1}{2}\frac{Q_i^0+Q_j^0}{Q_i^0-Q_j^0}(\hat{a}_i^{\a}-\hat{a}_j^{\a})(\hat{a}_i^{\b}-\hat{a}_j^{\b})-\frac{\sgn(\beta-\a)}{2}\hat{a}_j^{\a}\hat{a}_i^{\b} \nonumber \\
            &-\sum_{\rho=1}^{\ell}\frac{\sgn(\a-\rho)}{2}\hat{a}_j^{\a}\hat{a}_j^{\b}\hat{a}_i^{\rho}+\sum_{\rho=1}^{\ell}\frac{\sgn(\b-\rho)}{2}\hat{a}_i^{\a}\hat{a}_i^{\b}\hat{a}_j^{\rho}
            -\sum_{\mu,\nu=1}^{\ell}\frac{\sgn(\mu-\nu)}{2}\hat{a}_i^{\mu}\hat{a}_j^{\nu} \hat{a}_i^{\a}\hat{a}_j^{\b}\, , \nonumber \\
            \{\hat{a}_i^{\a},\hat{c}_j^{\b} \}={}&\hat{a}_i^{\a} \hat{ \tilde{L}}_{ij}-(1-\delta_{ij})\frac{1}{2}\frac{Q_i^0+Q_j^0}{Q_i^0-Q_j^0}(\hat{a}_i^{\a}-\hat{a}_j^{\a})\hat{c}_j^{\b}+\hat{a}_i^{\a}\sum_{\rho=1}^{\beta-1}\hat{a}_i^{\rho}\hat{c}_j^{\rho}+\frac{1}{2}\hat{a}_i^{\a}\hat{a}_i^{\b}\hat{c}_j^{\b} \nonumber \\
            &-\delta_{\a\b}\bigg(\hat{ \tilde{L}}_{ij}+\frac{1}{2}\hat{a}_i^{\a}\hat{c}_j^{\b}+\sum_{\rho=1}^{\beta-1}\hat{a}_i^{\rho}\hat{c}_j^{\rho}\bigg)
            +\sum_{\rho=1}^{\ell}\frac{\sgn(\a-\rho)}{2}\hat{c}_{j}^{\b}\hat{a}_j^{\a}\hat{a}_i^{\rho}
            +\sum_{\mu,\nu=1}^{\ell}\frac{\sgn(\mu-\nu)}{2}\hat{a}_i^{\mu}\hat{a}_j^{\nu}\hat{a}_i^{\a}\hat{c}_j^{\b}\, , \displaybreak \nonumber \\
            \{\hat{c}_i^{\a},\hat{c}_j^{\b} \}={}&\frac{1}{2}(1-\delta_{ij})\frac{Q_i^0+Q_j^0}{Q_i^0-Q_j^0}(\hat{c}_i^{\a}\hat{c}_j^{\b}+\hat{c}_j^{\a}\hat{c}_i^{\b})-\hat{c}_i^{\a} \hat{ \tilde{L}}_{ij}+\hat{c}_j^{\b} \hat{ \tilde{L}}_{ji}+\frac{\sgn(\b-\a)}{2}\hat{c}_j^{\a}\hat{c}_i^{\b} \nonumber \\
            &-\hat{c}_i^{\a}\sum_{\rho=1}^{\b-1}\hat{a}_i^{\rho}\hat{c}_j^{\rho}+\hat{c}_j^{\b}\sum_{\rho=1}^{\a-1}\hat{a}_j^{\rho}\hat{c}_i^{\rho}
            -\sum_{\mu,\nu=1}^{\ell}\frac{\sgn(\mu-\nu)}{2}\hat{a}_i^{\mu}\hat{a}_j^{\nu} \hat{c}_i^{\a}\hat{c}_j^{\b}+\frac{1}{2}\hat{a}_j^{\a}\hat{c}_i^{\a}\hat{c}_j^{\b}-\frac{1}{2}\hat{a}_i^{\b}\hat{c}_{i}^{\a}\hat{c}_j^{\b}\, .
    \end{align}
\end{corollary}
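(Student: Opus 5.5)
The plan is to treat the rescaling as a change of variables and push the brackets of the previous proposition through the Leibniz rule. The inputs are the brackets of $a_i^\alpha,c_i^\alpha,Q_i^0$ just established, with $\tilde L_{ij}=\sum_\rho a_i^\rho c_j^\rho/(Q_i^0/Q_j^\ell-1)$. Set $s_i\coloneq\sum_\rho a_i^\rho$, so that $\hat a_i^\alpha=a_i^\alpha/s_i$ and $\hat c_i^\alpha=c_i^\alpha s_i$.

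First I note two simplifications. Since $\sum_\rho a_i^\rho c_j^\rho=\sum_\rho \hat a_i^\rho\hat c_j^\rho\, s_i/s_j$, we get $\hat{\tilde L}_{ij}=\tilde L_{ij}\,s_j/s_i$. The brackets with $Q_i^0$ are immediate, because $\{Q_i^0,a_j^\alpha\}=0$ kills everything except the $c$-factor. For the remaining brackets I will write
\begin{equation*}
\{\hat a_i^\alpha,\hat a_j^\beta\}=\frac{\{a_i^\alpha,a_j^\beta\}}{s_is_j}-\frac{a_i^\alpha\{s_i,a_j^\beta\}}{s_i^2s_j}-\frac{a_j^\beta\{a_i^\alpha,s_j\}}{s_is_j^2}+\frac{a_i^\alpha a_j^\beta\{s_i,s_j\}}{s_i^2s_j^2},
\end{equation*}
and similarly for $\{\hat a,\hat c\}$ and $\{\hat c,\hat c\}$. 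The summed brackets $\{s_i,\cdot\}$ and $\{s_i,s_j\}$ are computed once by summing the proposition's formulas over the relevant upper index.

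The $\delta^{\alpha<\beta}$ and $\tfrac12$-indicator terms produce sums like $\sum_{\rho<\beta}a^\rho$ and $\sum_{\rho>\beta}a^\rho$. Using $a_i^1=1$, I will rewrite these in the symmetric form $\sum_\rho\tfrac12\operatorname{sgn}(\beta-\rho)a^\rho$ together with compensating $\tfrac12 a^\alpha a^\beta$ pieces. I expect this step to account for the $\operatorname{sgn}$-sums in the target. The hyperbolic kernels combine through
\begin{equation*}
\frac{1}{Q_i^0/Q_j^0-1}+\frac{1}{2}=\frac12\frac{Q_i^0+Q_j^0}{Q_i^0-Q_j^0}=\frac{1}{1-Q_j^0/Q_i^0}-\frac12 .
\end{equation*}
Applied to the first line of $\{a_i^\alpha,a_j^\beta\}$, this gives the symmetric coupling $-\tfrac12\frac{Q_i^0+Q_j^0}{Q_i^0-Q_j^0}(a_i^\alpha-a_j^\alpha)(a_i^\beta-a_j^\beta)$ plus a leftover $\tfrac12(a_i^\alpha+a_j^\alpha)(a_j^\beta-a_i^\beta)$-type term, which joins the sign-sums. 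After dividing by $s_is_j$, the differences $(a_i-a_j)$ do not directly become $(\hat a_i-\hat a_j)$. I will check that the discrepancy, which is proportional to $(s_i-s_j)$, is exactly cancelled by the $\{s_i,a_j^\beta\}$ and $\{s_i,s_j\}$ corrections. This is the same mechanism as in the rational corollary, which serves as a consistency template.

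The main obstacle will be the bookkeeping of the many indicator and $\operatorname{sgn}$ terms, especially the quartic term $\sum_{\mu,\nu}\tfrac12\operatorname{sgn}(\mu-\nu)\hat a_i^\mu\hat a_j^\nu\hat a_i^\alpha\hat a_j^\beta$. That term should arise solely from $\{s_i,s_j\}/(s_is_j)^2$, so I will compute $\{s_i,s_j\}$ first and isolate it. For $\{\hat a,\hat c\}$, the $-\delta^{\alpha\beta}\sum_{\rho<\alpha}a_i^\rho c_j^\rho$ term together with the $\{s_i,c_j^\beta\}$ correction should produce $\hat a_i^\alpha\sum_{\rho<\beta}\hat a_i^\rho\hat c_j^\rho$. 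The $\tilde L$ terms, after the $s_j/s_i$ rescaling, should give $\hat a_i^\alpha\hat{\tilde L}_{ij}-\delta_{\alpha\beta}\hat{\tilde L}_{ij}$; the $\delta^{1\beta}$ piece is absorbed via $a_i^1=1$ and the $s$-derivative. I will verify each bracket coefficient-by-coefficient after collecting terms by monomial type. As a sanity check, I will confirm that summing $\{\hat a_i^\alpha,\cdot\}$ over $\alpha$ gives zero, since $\sum_\alpha\hat a_i^\alpha=1$ is a Casimir-like constraint in the target.
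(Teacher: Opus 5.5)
Your proposal is correct and follows the same route the paper implicitly takes: the paper gives no separate proof of this corollary, treating it as a direct consequence of the preceding proposition via the substitution $\hat a_i^\alpha=a_i^\alpha/s_i$, $\hat c_i^\alpha=c_i^\alpha s_i$ and the Leibniz rule; carrying this out, the $\tfrac12\frac{Q_i^0+Q_j^0}{Q_i^0-Q_j^0}$ terms transform covariantly as in the rational case, the quartic $\operatorname{sgn}$ term comes solely from $\{s_i,s_j\}$, and all leftover $s_i/s_j$-type terms cancel, as you anticipate. One small inaccuracy: the $\delta^{1\beta}\tilde L_{ij}$ pieces cancel against the $\{s_i,c_j^\beta\}$ correction without using $a_i^1=1$, since that constraint is needed only when evaluating the summed $\tfrac12$-indicator coefficients.
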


\begin{remark}
    The fact that our Poisson brackets coincide with the ones in \cite{arutyunov:2019,fairon:2026} can be viewed as an instance of mirror symmetry. The principle of mirror symmetry posits a correspondence between the multiplicative quiver variety of the Jordan quiver considered in \cite{fairon:2026} and the $K$-theoretic Coulomb branch of the necklace quiver considered here.
\end{remark}

\subsection{Supplementary Poisson brackets}

In this section, we give supplementary Poisson brackets of certain variables introduced in \cite{arutyunov:1998}:
\begin{equation}
    S_i^{\alpha\beta} \coloneq c_i^\alpha a_i^\beta, \qquad f_{ij} \coloneq \sum_{\rho=1}^\ell a_i^\rho c_j^\rho.
\end{equation}
The variables $S_i^{\alpha\beta}$ were dubbed \emph{quantum current} in \cite{arutyunov:2025} due to the form of their quantum commutation relations, which are similar to the commutation relations of the quantum current defined in \cite{reshetikhin:1990}. The variables $f_{ij}$ are the \emph{collective spin variables}. They Poisson-commute with $Q_i^0$ according to
\begin{eqnarray}
    \{ Q_i^0,S_j^{\alpha\beta} \} = \delta_{ij} Q_i^0 S_j^{\alpha\beta}, \qquad \{ Q_i^0,f_{jk} \} = \delta_{ik} Q_i^0 f_{jk}.
\end{eqnarray}
We now present the Poisson brackets of the quantum current and collective spin variables of the hyperbolic spin RS model.

\begin{prop}
    \def\a{\alpha}
    \def\b{\beta}
    \def\sgn{\rm sgn}
    The hyperbolic quantum current satisfies the Poisson bracket
    \begin{equation}
        \begin{aligned}
            \{&S_i^{\a\b},S_j^{\mu\nu}\}=(1-\delta_{ij})\frac{1}{2}\frac{Q^0_i+Q^0_j}{Q^0_i-Q^0_j}(S_i^{\mu\beta}S_j^{\a\nu}+S_i^{\a\nu}S_j^{\mu\beta})
            +\frac{\sgn(\mu-\a)}{2}S_i^{\mu\b}S_j^{\a\nu}-\frac{\sgn(\nu-\b)}{2}S_i^{\a\nu}S_j^{\mu\b}
            \\
            &
            -\delta^{\b\mu}\bigg(\tfrac{1}{2}S_i^{\a\b}S_j^{\mu\nu}+\sum_{\rho=1}^{\b-1}S_i^{\a\rho}S_j^{\rho\nu}+\frac{\sum_{\rho=1}^{\ell}S_i^{\a\rho}S_j^{\rho\nu}}{Q_i^0/Q_j^{\ell}-1} \bigg) 
            +\delta^{\a\nu}\bigg(\tfrac{1}{2}S_i^{\a\b}S_j^{\mu\nu}+\sum_{\rho=1}^{\a-1}S_j^{\mu\rho}S_i^{\rho\beta}+\frac{\sum_{\rho=1}^{\ell}S_j^{\mu\rho}S_i^{\rho\beta}}{Q_j^0/Q_i^{\ell}-1} \bigg).
        \end{aligned}
    \end{equation}
\end{prop}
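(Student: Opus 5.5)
The plan is to derive the bracket from the spin-variable brackets of the preceding proposition, using only the Leibniz rule. Since $S_i^{\alpha\beta} = c_i^\alpha a_i^\beta$, we have
\begin{equation*}
    \{ S_i^{\alpha\beta}, S_j^{\mu\nu} \} = c_i^\alpha c_j^\mu \{ a_i^\beta, a_j^\nu \} + c_i^\alpha a_j^\nu \{ a_i^\beta, c_j^\mu \} + a_i^\beta c_j^\mu \{ c_i^\alpha, a_j^\nu \} + a_i^\beta a_j^\nu \{ c_i^\alpha, c_j^\mu \}.
\end{equation*}
It is most convenient to work with the matrix forms of the brackets obtained in the proof of that proposition, written with $r^0$, $\bar r^0$, $\underline r^0$, $P$ and $Z$. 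These matrix forms are then contracted against $c_1^\alpha c_2^\mu$ or $a_1^\nu$ and so on.

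The computation splits naturally into four groups of terms.

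First, the $(1-\delta_{ij})$ pieces come from $r^0$, $\bar r^0$ and $\underline r^0$. Expanded in components, they give the coefficients $\frac{1}{Q_i^0/Q_j^0-1}$ and $\frac{1}{1-Q_j^0/Q_i^0}$ multiplying various products. Using $\frac{1}{x-1}+\frac{1}{2} = \frac{1}{2}\frac{x+1}{x-1}$, these should combine into the symmetric $\coth$-type kernel $\frac{1}{2}\frac{Q_i^0+Q_j^0}{Q_i^0-Q_j^0}$ multiplying $S_i^{\mu\beta}S_j^{\alpha\nu}+S_i^{\alpha\nu}S_j^{\mu\beta}$. The leftover constant halves should cancel against some of the explicit $\tfrac12$-terms.

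Second, the terms carrying $\delta_{\alpha<\beta}P$ and the half-integer constants should reorganize into the two $\operatorname{sgn}$ terms $\frac{\operatorname{sgn}(\mu-\alpha)}{2}S_i^{\mu\beta}S_j^{\alpha\nu}$ and $-\frac{\operatorname{sgn}(\nu-\beta)}{2}S_i^{\alpha\nu}S_j^{\mu\beta}$. Here $P$ exchanges the site labels, which is what converts $c_i^\alpha c_j^\mu$ into $c_j^\alpha c_i^\mu$.

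Third, only the mixed brackets $\{a,c\}$ and $\{c,a\}$ contribute Kronecker deltas $\delta^{\alpha\beta}$. After relabelling, these become $\delta^{\beta\mu}$ and $\delta^{\alpha\nu}$. They also contribute the sums $\sum_{\rho<\beta}a_i^\rho c_j^\rho$ and the $\tilde L_{ij}$ terms. Multiplying $c_i^\alpha a_j^\nu$ by $\sum_\rho a_i^\rho c_j^\rho$ gives exactly $\sum_\rho S_i^{\alpha\rho}S_j^{\rho\nu}$. This is how the bracketed $\delta^{\beta\mu}(\cdots)$ and $\delta^{\alpha\nu}(\cdots)$ terms arise, including the $\frac{1}{Q_i^0/Q_j^\ell-1}$ denominators from $\tilde L_{ij}$.

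Fourth, the $\delta^{1\beta}\tilde L_{ij}$ pieces coming from different brackets must cancel pairwise. They do so because they multiply equal products after using $a_i^1=1$.

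The main obstacle is the bookkeeping of the half-integer constant terms and the $\delta^{1=\alpha<\beta}$-type indicators. These are asymmetric in their treatment of the index $1$, so it is not evident termwise that they collapse to the clean $\operatorname{sgn}$ form. My first attempt would be to treat the cases $\beta=1$ or $\nu=1$ separately from the generic case. In the special cases the constraint $a^1=e$ makes $S^{\alpha 1}_i=c_i^\alpha$, and I expect the anomalous index-$1$ constants to be precisely those needed to extend the generic $\operatorname{sgn}$ formula to these boundary values.

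As a cross-check on the final formula, I would verify it for small $\ell$ (say $\ell=2$, $N=2$) directly from the GKLO realization, using the log-canonical bracket on $Q_i^\alpha,P_i^\alpha$.
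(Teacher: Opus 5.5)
Your proposal is correct and follows the route the paper implicitly takes: the proposition is stated without a separate proof right after the spin-variable brackets, and it is obtained by exactly this Leibniz expansion of $S_i^{\alpha\beta}=c_i^\alpha a_i^\beta$ against those brackets. When you carry it out, the following bookkeeping differs from what you anticipate.

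\begin{itemize}
\item The $\mp\tfrac12$ left over from rewriting $\frac{1}{Q_i^0/Q_j^0-1}$ and $\frac{1}{1-Q_j^0/Q_i^0}$ around $\tfrac12\frac{Q_i^0+Q_j^0}{Q_i^0-Q_j^0}$ do not cancel. They leave $(1-\delta_{ij})\tfrac12\bigl(S_i^{\alpha\nu}S_j^{\mu\beta}-S_i^{\mu\beta}S_j^{\alpha\nu}\bigr)$. The factor $1-\delta_{ij}$ may be dropped, since this difference vanishes at $i=j$.
\item This leftover combines with the two $P$-terms and the explicit half-integer indicator terms to give exactly the two $\mathrm{sgn}$ terms. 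Apart from the $-\tfrac12\delta^{\beta\mu}$ and $+\tfrac12\delta^{\alpha\nu}$ pieces, the indicator terms sum identically to $\tfrac12(\delta^{\alpha\mu}-\delta^{\beta\nu})S_i^{\alpha\beta}S_j^{\mu\nu}$.
\item Consequently you need neither a case split nor the constraint $a^1=e$. The same holds for the $\delta^{1\mu}\tilde L_{ij}$ and $\delta^{1\alpha}\tilde L_{ji}$ terms, which cancel pairwise without using $a^1=e$.
\end{itemize}
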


\begin{prop}
    \def\a{\alpha}
    \def\b{\beta}
    The collective spin variables satisfy the Poisson bracket
    \begin{eqnarray}
        \begin{aligned}
            \{ f_{ij}, f_{kl} \}
            &= (1-\delta_{ik})\bigg(\frac{(f_{kl}-f_{il})f_{ij}}{Q_i^0/Q_k^0-1}-\frac{(f_{kl}-f_{il})f_{kj}}{1-Q_k^0/Q_i^0}\bigg) + (1-\delta_{jl})\bigg(\frac{f_{il} f_{kj}}{Q_j^0/Q_l^0-1}+\frac{f_{ij}f_{kl}}{1-Q_l^0/Q_j^0}\bigg) \\
            &- (1-\delta_{kj})\frac{f_{ij}(f_{jl}-f_{kl})}{1-Q_j^0/Q_k^0}+(1-\delta_{il})\frac{f_{kl}(f_{lj}-f_{ij})}{1-Q_l^0/Q_i^0} \\
            &+ \frac{f_{ij} f_{il}}{Q_i^0/Q_l^\ell-1} -\frac{f_{ij} f_{jl}}{Q_j^0/Q_l^\ell-1} + \frac{f_{kl} f_{lj}}{Q_l^0/Q_j^\ell-1} - \frac{f_{kl} f_{kj}}{Q_k^0/Q_j^\ell-1} + \frac{f_{il} f_{kj}}{Q_k^0/Q_j^\ell-1} - \frac{f_{kj} f_{il}}{Q_i^0/Q_l^\ell-1}.
        \end{aligned}
    \end{eqnarray}
\end{prop}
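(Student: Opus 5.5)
Since $f_{ij}=\sum_{\rho=1}^\ell a_i^\rho c_j^\rho$, the bracket $\{f_{ij},f_{kl}\}$ is a sum of four types of terms:
\[
\{f_{ij},f_{kl}\}=\sum_{\rho,\sigma}\Big(a_k^\sigma c_j^\rho\{a_i^\rho,c_l^\sigma\}+a_i^\rho c_l^\sigma\{c_j^\rho,a_k^\sigma\}+c_j^\rho c_l^\sigma\{a_i^\rho,a_k^\sigma\}+a_i^\rho a_k^\sigma\{c_j^\rho,c_l^\sigma\}\Big).
\]
Each type can be filled in from the preceding proposition on the Poisson brackets of the hyperbolic spins. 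I would rather work with the matrix (auxiliary-space) form of that proposition than with components, since the contractions over $\rho,\sigma$ then become operations on the vectors $a^\rho$, $c^\rho$. Writing $f=\sum_\rho a^\rho c^\rho$ as an $N\times N$ matrix, $f_1$ and $f_2$ in the two auxiliary spaces, the target is $\{f_1,f_2\}$. The brackets $\{a_1^\alpha,a_2^\beta\}$, $\{a_1^\alpha,c_2^\beta\}$, $\{c_1^\alpha,c_2^\beta\}$ are already given in terms of $r^0$, $\bar r^0$, $\underline r^0$, $P$, $Z$ and $\tilde L$, so the task reduces to summing these expressions over $\rho,\sigma$.

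The key steps, in order:

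\textbf{(i) The $r$-matrix pieces.} Contract the terms $r^0 a_1a_2$, $-c_2\bar r^0 a_1$ and $-c_1c_2\underline r^0$ with the complementary spins. These produce the four $(1-\delta)$ dynamical terms carrying $1/(Q_i^0/Q_k^0-1)$ and $1/(1-Q_k^0/Q_i^0)$ and similar denominators. I expect these to match the first two lines of the claimed formula directly, once the explicit entries of the matrices in \eqref{eq:hypMat} are inserted.

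\textbf{(ii) The ordering and constant pieces.} These are the terms with $\delta_{\alpha<\beta}P$, the $\tfrac12$ constants, and $\delta^{\alpha\beta}\sum_{\rho<\alpha}a^\rho c^\rho$. After summation over both spin labels, I expect them to cancel pairwise. The $\delta_{\alpha<\beta}$ terms from $\{a,a\}$ and $\{c,c\}$ combine into $\sum_{\rho<\sigma}(a_k^\rho a_i^\sigma c_j^\rho c_l^\sigma-\dots)$ with opposite signs. The half-integer constants should cancel because the coefficient functions are antisymmetric in the labels once summed. The partial sums $\sum_{\rho<\alpha}$ should be absorbed using the telescopic identity $L^0\cdots L^{\alpha-2}\tilde L^{\alpha-1}\cdots\tilde L^{\ell-1}=\tilde L+\sum_{\rho<\alpha}a^\rho c^\rho$.

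\textbf{(iii) The $Z\tilde L$ pieces.} These come with $\delta^{1\beta}$ or $\delta^{\alpha\beta}$. Using $a^1=e$, they collapse to products of $f$ with $\tilde L_{ij}=f_{ij}/(Q_i^0/Q_j^\ell-1)$, giving exactly the last line of six $Q^\ell$-denominator terms.

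The main obstacle is step (ii): the sign conventions (the various $\delta^{\alpha<\beta}$, $\delta^{1=\alpha<\beta}$, and the $\tfrac12$ terms) must cancel completely, and boundary terms with $\rho=1$ interact with the constraint $a_i^1=1$. I would first verify the case $\ell=2$ by hand to fix the signs, then organise the general case by splitting each double sum into the regions $\rho<\sigma$, $\rho=\sigma$ and $\rho>\sigma$. If direct cancellation proves awkward, an alternative is to compute $\{f_1,f_2\}$ from the telescopic representation $f=L\,(\text{diag factor})$ of lemma \ref{lemma:hypLOp}, using the total $L$-operator bracket and $\{Q^0,L\}$. In that route the ordering terms never appear, because the one-site algebra has already absorbed them.
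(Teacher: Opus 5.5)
The paper states this proposition without proof. The evident intended derivation is your main route: expand $\{f_{ij},f_{kl}\}$ by the Leibniz rule and insert the component brackets of the hyperbolic spins.

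\textbf{Steps (i) and (iii).} These are right as stated. The $(1-\delta)$ parts of $\{a_i^\rho,a_k^\sigma\}$, $\{c_j^\rho,c_l^\sigma\}$, $\{a_i^\rho,c_l^\sigma\}$ and $\{c_j^\rho,a_k^\sigma\}$ contract term by term to the $(1-\delta_{ik})$, $(1-\delta_{jl})$, $(1-\delta_{il})$ and $(1-\delta_{kj})$ groups. With $a^1=e$, the $\tilde L$ pieces give $\tilde L_{il}(f_{ij}-f_{kj})+\tilde L_{kj}(f_{il}-f_{kl})+f_{kl}\tilde L_{lj}-f_{ij}\tilde L_{jl}$, which is exactly the last line. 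The $-\tfrac12$ inside $\bar r^0$ only produces $\pm\tfrac12 f_{ij}f_{kl}$ from the two cross brackets, and these cancel.

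\textbf{Step (ii).} This step is misdescribed, and the mechanism you name would not close it.
\begin{itemize}
\item The ordering terms of $\{a,a\}$ and $\{c,c\}$ do not cancel each other. They contract to $-\sum_{\rho<\sigma}a_k^\rho c_j^\rho\,a_i^\sigma c_l^\sigma$ and $+\sum_{\rho<\sigma}a_i^\rho c_l^\rho\,a_k^\sigma c_j^\sigma$, which are different sums.
\item The telescopic identity does not help here. It was already used to produce the spin brackets. Rewriting $\sum_{\rho<\alpha}a^\rho c^\rho$ back as $L^0\cdots L^{\alpha-2}\tilde L^{\alpha-1}\cdots\tilde L^{\ell-1}-\tilde L$ moves you away from an expression in $f$.
\item What actually happens is a cross-cancellation. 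The term $-\delta^{\sigma\rho}\sum_{\tau<\sigma}a_k^\tau c_j^\tau$ of $\{a_k^\sigma,c_j^\rho\}$ enters $\{c_j^\rho,a_k^\sigma\}$ with a plus sign. Contracted with $a_i^\rho c_l^\sigma$, it gives $+\sum_{\tau<\sigma}a_k^\tau c_j^\tau\,a_i^\sigma c_l^\sigma$, which kills the $\{a,a\}$ ordering term.
\item Symmetrically, the partial sum in $\{a_i^\rho,c_l^\sigma\}$ gives $-\sum_{\tau<\rho}a_i^\tau c_l^\tau\,a_k^\rho c_j^\rho$, which kills the $\{c,c\}$ ordering term.
\item The half-integer constants cancel in two blocks. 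Those of $\{a,a\}$ and $\{c,c\}$ sum to $c_j^1f_{kl}-c_l^1f_{ij}$; the pieces with $\rho\neq\sigma$, both $>1$, cancel between them. Those of the two cross brackets sum to $-(c_j^1f_{kl}-c_l^1f_{ij})$.
\end{itemize}
With this bookkeeping your main route proves the proposition, uniformly in $\ell$.

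\textbf{Your fallback route.} This is a genuinely different proof and is also valid. Note that $f$ is not $L$ times a diagonal matrix. Lemma~\ref{lemma:hypLOp} gives, entrywise, $f_{ij}=(1-tQ_j^0/Q_i^0)L_{ij}$, that is, $f=L-\tilde L$ with $\tilde L=(Q^0)^{-1}LQ^\ell$.

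Besides the total $L$-operator bracket (where the $\pm\tfrac12$ from $\bar r^0$ cancel), you need $\{Q_m^0,L_{kl}\}=\delta_{ml}Q_m^0L_{kl}$. This holds because only $L^{\ell-1}$ contains $P^\ell=P^0$.

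Substituting $L_{ab}=f_{ab}/(1-tQ_b^0/Q_a^0)$ leaves prefactors such as $(1-tQ_l^0/Q_k^0)/(1-tQ_l^0/Q_i^0)$. Partial fractions split these into the dynamical coefficients plus $\tilde L$-type terms. Those terms reproduce the last line except for terms at coincident indices ($i=l$, $j=l$, $j=k$). The missing terms are supplied exactly by the $\{Q^0,L\}$ contributions.

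This route never meets the ordering or partial-sum terms, at the price of the partial-fraction bookkeeping. The spin route is more mechanical and shows directly that the proposition is a corollary of the spin brackets.
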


\begin{remark}
    It can be checked that the rational degeneration of these Poisson brackets yields the Poisson bracket found in \cite{arutyunov:1998} and that the form of the Poisson brackets of the collective spin variables matches the conjecture put forth in \emph{loc.\ cit.}
\end{remark}

\section{Conclusion}\label{section:conclusion}

In this paper, we have shown that the (abelianized) homological and $K$-theoretic Coulomb branch Poisson algebras \cite{bullimore:2015,braverman:2018} of the necklace quiver reproduce the equations of motion of the rational and hyperbolic spin RS models \cite{krichever:1995}, respectively. In both cases the same pattern appears: (i) The Coulomb branch monopole operators admit a GKLO realization \cite{gerasimov:2005,finkelberg:2019,tsymbaliuk:2023}, which is a realization in terms of separated canonical variables. (ii) These generators assemble into $L$-operators whose Poisson algebra reproduces the known Poisson algebra of the Lax matrices of the spinless RS models \cite{arutyunov:1996,arutyunov:2019b}. (iii) The central generators $H[n] = \operatorname{Tr} L^n$ supplies a family of commuting Hamiltonians and, together with the (quantum) loop symmetry, yields superintegrability. This pattern matches the quantization of the rational spin RS model found in \cite{arutyunov:2025}, whose algebra of quantum observables conjecturally coincides with the $N$-truncated affine Yangian of $\mathfrak{gl}_\ell$, which is the natural quantization of the necklace quiver Coulomb branch.

These results warrant the conjecture that the elliptic Coulomb branch Poisson algebra \cite{finkelberg:2020} can reproduce the equations of motion of the elliptic spin RS model. It is natural to suspect that this comes about via an $L$-operator algebra of the type discussed above, but where the structure matrices $r^\alpha,\bar r^\alpha, \underline r^\alpha$ are replaced by their elliptic analogs put forth in \cite{arutyunov:1997b}. This provides a clear road map to an explicit description of the Poisson structure and possible quantization of the elliptic spin RS model.

\appendix

\pagebreak

\section{Notation}\label{section:notation}

\begin{tabular}{r|l}
$N$ & number of particles \\
$\ell$ & number of spin polarizations \\
$x_i$ & particle position \\
$a_i^\alpha$ & $\alpha$th component of the spin vector of the $i$th particle \\
$c_i^\alpha$ & $\alpha$th component of the spin covector of the $i$th particle \\
$f_{ij}$ & $\sum_{\alpha=1}^\ell a_i^\alpha c_j^\alpha$ \\
$\gamma$ & coupling constant of the Ruijsenaars--Schneider model \\
$\zeta(z)$ & Weierstrass zeta function \\
$V(z)$ & interaction potential $\zeta(z) - \zeta(z+\gamma)$ and its rational and hyperbolic degenerations \\
$L^\alpha,L^{\alpha\pm}$ & $L$-operators associated to the $\alpha$th gauge node of the necklace quiver \\
$L$ & total $L$-operator of the homological/$K$-theoretic Coulomb branch, \eqref{eq:totLHyp}, \eqref{eq:totLRat} \\
$\mathfrak{C}_{N,\ell}$ & homological Coulomb branch Poisson algebra, Definition \ref{def:cohCoulombAlg} \\
$\mathfrak{C}_{N,\ell}^K$ & $K$-theoretic Coulomb branch Poisson algebra, Definition \ref{def:KthCoulombAlg} \\
$\mathfrak{A}_{N,\ell}$ & rational difference operator algebra, Definition \ref{def:ratDiffOps} \\
$\mathfrak{A}_{N,\ell}^K$ & hyperbolic difference operator algebra, Definition \ref{def:hypDiffOps} \\
$q_i^\alpha$ & scalar vacuum expectation values of the homological Coulomb branch \\
$Q_i^\alpha$ & scalar vacuum expectation values of the $K$-theoretic Coulomb branch \\
$u_i^{\alpha\pm}$ & fundamental monopole operators of the homological/$K$-theoretic Coulomb branch \\
$\chi^\alpha(z)$ & gauge polynomials \eqref{eq:cohGaugePoly} \eqref{eq:KthGaugePoly} of the homological or $K$-theoretic Coulomb branch \\
$\tilde \chi^\alpha(z)$ & modified gauge polynomials \eqref{eq:modKthGaugePoly} of the $K$-theoretic Coulomb branch \\
$P_i^\alpha$ & difference operator in $\mathfrak{A}_{N,\ell}$ and $\mathfrak{A}_{N,\ell}^K$ \\
$e^\alpha(z),f^\alpha(z)$ & raising/lowering generators of the affine Yangian/quantum toroidal algebra inside $\mathfrak{A}_{N,\ell}$/$\mathfrak{A}_{N,\ell}^K$ \\
$h^\alpha(z)$ & Cartan generators of the affine Yangian/quantum toroidal algebra inside $\mathfrak{A}_{N,\ell}$/$\mathfrak{A}_{N,\ell}^K$ \\
$e_r^\alpha,f_r^\alpha$ & modes of the raising/lowering generators of the quantum toroidal algebra inside $\mathfrak{A}_{N,\ell}$/$\mathfrak{A}_{N,\ell}^K$ \\
$k_r^{\alpha\pm}$ & modes of the Cartan generators of the quantum toroidal algebra inside $\mathfrak{A}_{N,\ell}$/$\mathfrak{A}_{N,\ell}^K$ \\
$E^\alpha,F^\alpha$ & raising/lowering generators of the affine/quantum affine $\mathfrak{sl}_\ell$ inside $\mathfrak{A}_{N,\ell}$/$\mathfrak{A}_{N,\ell}^K$ \\
$H^\alpha,K^\alpha$ & Cartan generators of the affine/quantum affine $\mathfrak{sl}_\ell$ inside $\mathfrak{A}_{N,\ell}$/$\mathfrak{A}_{N,\ell}^K$ \\
$r^\alpha,\bar r^\alpha,\underline r^\alpha$ & structure matrices of the $L$-operator algebras \eqref{eq:ratMat}, \eqref{eq:hypMat} \\
$V_{\alpha,\beta}^\pm, J^{\alpha\beta}[0]$ & generators of the loop algebra $L(\mathfrak{gl}_\ell)$ in $\mathfrak{A}_{N,\ell}$ \\
$H[n]$ & Hamiltonians $\operatorname{Tr} L^n$ inside $\mathfrak{A}_{N,\ell}$/$\mathfrak{A}_{N,\ell}^K$ \\
$e$ & vector $(1,\dots,1)^t$ of size $N$ \\
$e_i$ & $i$th basis vector in $\C^N$ \\
$e_{ij}$ & $N \times N$ matrix unit \\
$Z$ & tensor $\sum_{i=1}^N e_{ii} \otimes e_i^t$ \\
$\hat a_i^\alpha$ & rescaled spin vector components $a_i^\alpha / \sum_\rho a_i^\rho$ \\
$\hat c_i^\alpha$ & rescaled spin covector components $c_i^\alpha \sum_\rho a_i^\rho$ \\
$\delta^{\mathcal{P}}$ & one if $\mathcal{P}$ is true and otherwise zero \\
$\kappa^{\alpha\beta}$ & Cartan matrix of the necklace quiver
\end{tabular}

\pagebreak

\vspace{0.3\baselineskip}

\noindent\textbf{Acknowledgments.} Our deepest gratitude goes to J. Teschner for invaluable discussions and guiding LH through the literature on Coulomb branches and to J. Lamers and M. Vasilev for valuable remarks.

\vspace{0.3\baselineskip}

\noindent\textbf{Funding.} GA acknowledges support by the DFG under Germany's Excellence Strategy -- EXC 2121 ``Quantum Universe'' -- 390833306. GA and LH acknowledge support by the DFG -- SFB 1624 -- ``Higher structures, moduli spaces and integrability'' -- 506632645.

\vspace{0.3\baselineskip}

\noindent\textbf{Competing interests.} The authors have no relevant financial or non-financial interests to disclose.

\vspace{0.3\baselineskip}

\noindent\textbf{Data availability.} The authors declare that the data supporting the findings of this study are available within the paper.

\bibliographystyle{alphaurl}
\bibliography{bibliography}
\end{document}